\theoremstyle{plain}
\newtheorem{theorem}{Theorem}[section]
\newtheorem{lemma}[theorem]{Lemma}
\newtheorem{problem}{Problem}
\newtheorem{proposition}[theorem]{Proposition}
\newtheorem{corollary}[theorem]{Corollary}
\theoremstyle{definition}
\newtheorem{definition}{Definition}
\theoremstyle{remark}
\newtheorem{remark}{Remark}
\DeclareMathOperator*{\argmin}{arg\,min}
\begin{document}
\shorttitle{Multi-Agent Estimation in RKHS}
\shortauthors{Aneesh Raghavan et~al.}
\title [mode = title]{Collaborative Estimation of Real Valued Function by Two Agents and a Fusion Center with Knowledge Exchange}                      
\tnotemark[1]
\tnotetext[1]{This work was supported in part by Swedish Research Council Distinguished Professor Grant 2017-01078, Knut and Alice Wallenberg Foundation Wallenberg Scholar Grant, and the Swedish Strategic Research Foundation FUSS SUCCESS Grant.}
\author[1]{Aneesh Raghavan}
\cormark[1]
\ead{aneesh@kth.se}
\affiliation[1]{organization={DCS Division},
    addressline={Malvinas Vag 10, KTH, Royal Insitute of Technology}, 
    city={Stockholm},
    postcode={11428}, 
    country={Sweden}}
\author[1]{Karl Henrik Johansson}
\ead{kallej@kth.se}
\cortext[cor1]{Corresponding author}
\begin{abstract}
 We consider a collaborative iterative algorithm with two agents and a fusion center for estimation of a real valued function (or ``model") on the set of real numbers. While the data collected by the agents is private, in every iteration of the algorithm, the models estimated by the agents are uploaded to the fusion center, fused, and, subsequently downloaded by the agents. We consider the estimation spaces at the agents and the fusion center to be Reproducing Kernel Hilbert Spaces (RKHS). Under suitable assumptions on these spaces, we prove that the algorithm is consistent, i.e., there exists a subsequence of the estimated models which converges to a model in the strong topology. To this end, we define estimation operators for the agents, fusion  center, and, for every iteration of the algorithm constructively. We define valid input data sequences, study the asymptotic properties of the norm of the estimation operators, and, find sufficient conditions under which the estimation operator until any iteration is uniformly bounded. Using these results, we prove the existence of an estimation operator for the algorithm which implies the consistency of the considered estimation algorithm. 
\end{abstract}
\begin{keywords}
Estimation Operators  \sep Consistency \sep RKHS
\end{keywords}
\maketitle
\section{Introduction}\label{Section 1}
\subsection{Motivation and Related Work}\label{Subsection 1.1}
Advancement of technology has led to development of systems (e.g. robots, vehicles, etc.) with enhanced sensing and computing capabilities. The data collected by the systems are heterogeneous, i.e., the sensors inputs and outputs take values in different sets, the noise distributions of sensors are different etc. The data collected by the agents is used to estimate models of the environment or the agents. Often the set of features describing the model is not precisely known; it is reasonable to assume that there is a finite set of feature spaces to which the model belongs. These challenges have motivated and led to different kinds of learning algorithms as described in the following. 
%\subsubsection*{Distributed Estimation and Federated Learning}\label{Subsubsection 1.1.1}

In distributed estimation, information (a function of measurements collected) is exchanged while in federated learning, models are exchanged. Distributed estimation of environment modeled as a random field has been studied in \cite{cortes2009distributed}, \cite{talarico2013distributed}, and, \cite{wang2019optimum}. In \cite{choi2009distributed}, the problem of simultaneous estimation of a static field and cooperative control of a multi-agent system has been studied. In \cite{predd2009collaborative}, a distributed algorithm to solve regression problems in RKHS has been proposed. Distributed supervised learning has been studied in \cite{carpenter1998distributed}, \cite{chang2017distributed} and \cite{ying2018supervised}. Federated learning is a learning architecture where multiple devices (agents) collaborate on training a model under the coordination of a central aggregator. The learning architecture adheres to (i) privacy of data, (ii) local computing, and, (iii) model transmission, \cite{zhang2021survey}, \cite{nguyen2021federated}. When the feature space is not precisely known, vertical federated learning considers estimation in different feature spaces with model exchange. However this framework has not been studied when the estimation spaces are RKHS.  To this end, we consider a multi-agent algorithm with model exchange to estimate a real valued function, where each agent solves an estimation problem in its own feature space, specifically an RKHS. The motivation for considering RKHS is as follows. 

%\subsubsection*{Kernel Methods in Signal and System Identification}\label{Subsubsection 1.1.2}
The utilization of kernel methods for identification originated in the signal processing literature and was then extended to system identification as well. Kernel methods have been applied to digital signal processing \citep{rojo2018digital},  analysis of biomedical signals \citep{camps2007kernel}, remote sensing data analysis \citep{gomez2011review}, etc. For a review of the application of these methods to identification of discrete and continuous time systems we refer to \cite{pillonetto2014kernel}. More recently kernel methods have been used for identification of Koopman operators \citep{khosravi2023representer}, system identification with side information \citep{khosravi2023kernel} and hybrid system identification \citep{pillonetto2016new}. 
%\subsubsection*{Learning as an Operator}\label{Subsubsection 1.1.3}

Learning of operators has received significant attention in literature. Recent advances in deep neural networks has been applied to learning of operators, \cite{bhan2023operator}, \cite{lu2021learning}, \cite{kovachki2023neural}. However, in iterative learning, i.e.,  the processes of learning a new model given old model and data, learning at every iteration can be viewed as an operator across the space of models. This perspective is investigated  in this paper. Here, we use the terms ``knowledge" and "model" interchangeably. The function space or the hypothesis space for solving the estimation problems are referred to as ``knowledge spaces". Thus, the ``learning operator" takes an element in the knowledge space and a data point and outputs an element in the knowledge space. 

\subsection{Objective and Contributions}\label{Subsection 1.2}
%\subsubsection*{Problem Setup} \label{Subsection 1.2.1}
The problem set up is as follows. There are two agents and a fusion center. The two agents sequentially receive data comprising of an independent variable (input) and the corresponding value of a dependent variable (output). Each agent is given a set of features which describe the mapping from the input to the output. The objective of the system is to collaboratively estimate the mapping from the input to the output without exchanging data. Since the agents posses different features, their knowledge spaces are different. For the agents to collaborate they need a common space. We consider the architecture proposed in Figure \ref{Figure 1}. The fusion center aids in the collaboration as the agents do not exchange any models, instead they get uploaded and downloaded to fusion center. 
\begin{figure}
    \centering
    \includegraphics[scale=0.29]{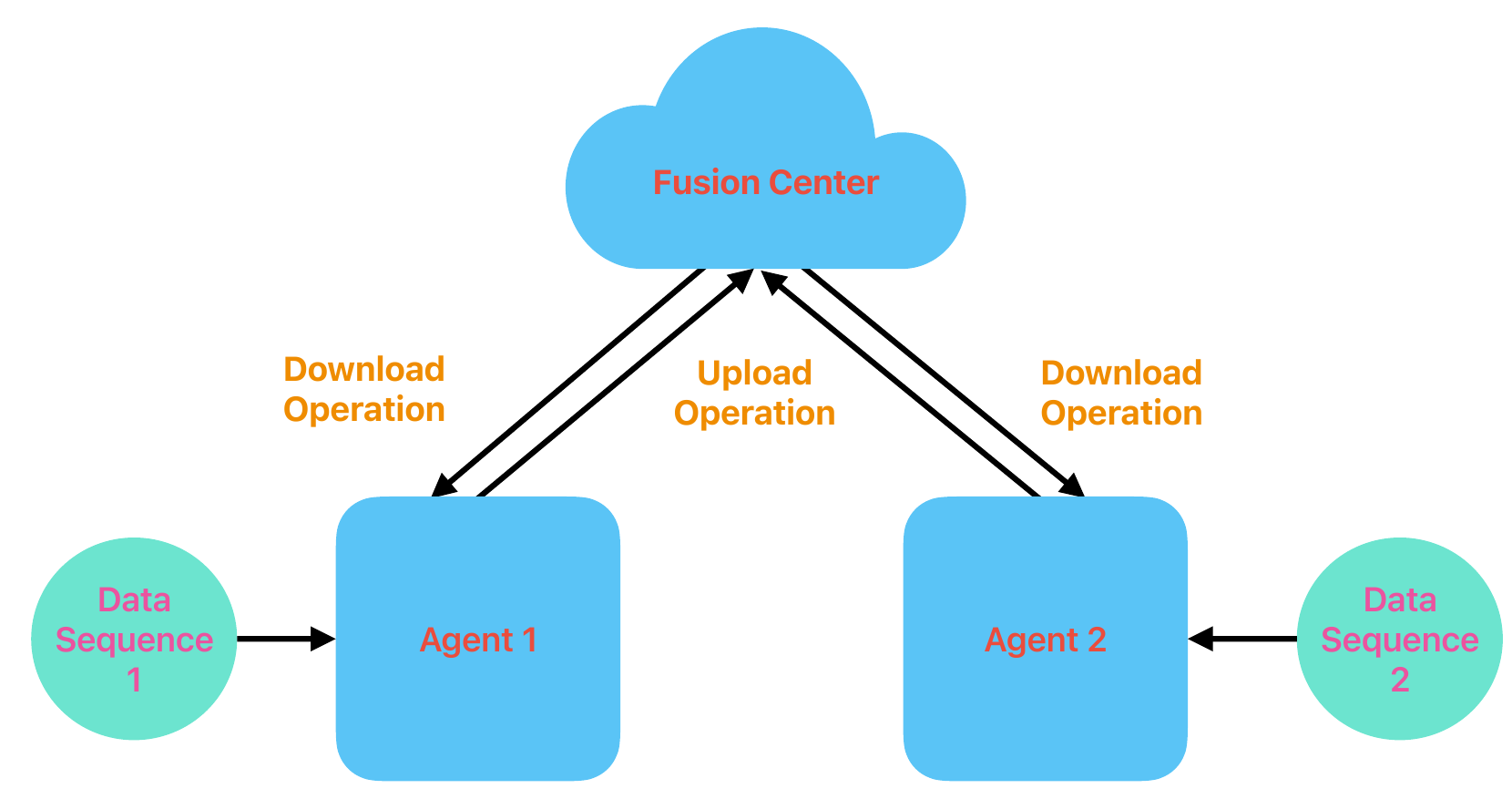}
    \caption{Schematic for the estimation architecture}
    \label{Figure 1}
    \vspace{-0.75cm}
\end{figure}

%\subsubsection*{Algorithm} \label{Subsection 1.2.2}
The algorithm to solve the problem is described as follows. At every iteration $n$, given the current estimate  of the mapping and a data point, each agent obtains a new estimate by solving an estimation problem (specifically a least squares regression problem) in its knowledge space (or hypothesis space, a RKHS). The estimates are uploaded to the fusion center and then fused using a meta-estimation algorithm. The fused function is downloaded on to the individual knowledge space and is declared as the final estimate at iteration $n$. Hence the algorithm generates a sequence of functions in the individual knowledge spaces and the fusion space each of which is a estimate of the mapping from the independent variable to the dependent variable. 
%\subsubsection*{Analysis}\label{Subsection 1.2.3}

The objective of the paper is to: (i) develop a composable abstraction of the estimation and computations at the agents and the fusion center; (ii) develop an abstraction of the algorithm that enables to characterise its behaviour; (iii) find conditions on the parameters of the estimation problems at the agents, the fusion problem at the fusion center, and, the input data sequences to the algorithm which ensures stable behaviour of the algorithm; (iv) prove that the proposed algorithm is consistent under the conditions in (iii) using the abstract framework developed. 
%\subsubsection*{Contributions}\label{Subsection 1.3}

The contributions of the paper are as follows. We: (i) define estimation operators based on the estimation problems at the agents and fusion center, parameterized by their corresponding  trade-off parameter ; (ii) study the asymptotic properties of these operators; specifically we prove that the norm converges to $1$ as the corresponding parameters diverges to $\infty$; (iii) define the estimation operator for each iteration as the composition of the estimation operator of the agents , an  upload operator (subsection \ref{Subsection 2.1}), the fusion operator at the fusion center and a download operator  (subsection \ref{Subsection 2.1}); (iv) define the input data sequences to the two agents which are suitable for the algorithm; (v) define the estimation operator \textit{until} iteration $n$ as the composition of the estimation operators from iterations $1$ to $n$ and prove that the sequence is an equicontinuous sequence of operators, (vi) using this result, we prove that the algorithm mentioned above is consistent in the sense of Definition \ref{Definition 4} (subsection \ref{Subsection 2.2} ). As a byproduct of the proposed proof methodology, we prove that given a valid  input data sequence, there exists a model in the knowledge space such that it is invariant to the estimation process: the model achieved at the end of the estimation process is the same.  We note the conceptually the algorithm and the proof of consistency presented in this paper can be extended to more than $2$ agents; we restrict ourselves to $2$ for brevity. 
\subsection{Outline}\label{Subsection 1.3}
The organization of the paper is as follows. In Section \ref{Section 2}, we define the estimation problem for the multi-agent system and the analysis problem of proving consistency of the solution of the estimation problem. In Section \ref{ColEst2L- The Estimation Algorithm},  we present an algorithm which solves the estimation problem. In Section \ref{Section 3}, we define operators for the sub-components of the algorithm, study their properties, and, define operators which abstract the iterations of the algorithm. In Section \ref{Section 4}, we present the main result on consistency of the algorithm and the associated corollaries. In Section \ref{Section 5}, we present an example demonstrating the algorithm. We conclude with some comments and future work in Section \ref{Section 6}. In Section \ref{Appendix}, we present the proof of the results utilized in Section \ref{Section 3}.
\section{Problem Formulation}\label{Section 2}
In this section, we formalize the problem setup described in subsection \ref{Subsection 1.2}. In subsection \ref{Subsection 2.1}, we describe the knowledge spaces, the upload, and, the download operators for the agents and the estimation problem for the system.  In subsection \ref{Subsection 2.2}, we define consistency for an algorithm and define the problems with respect to analysis of the algorithm.
\subsection{The Estimation Problem}\label{Subsection 2.1}
The estimation system comprises of two agents, Agent $1$ and Agent $2$, and a fusion center. 
The sequence of data points collected by Agent $i$ is denoted by $\{(x^{i}_n, y^{i}_n )\}$. The set of features for Agent $i$ is a set of continuous functions, $\{\varphi^{i}_{j}(\cdot)\}_{j \in \mathcal{I}^i}$, where $\varphi^{i}_{j} : \mathcal{X} \to \mathbb{R}$, $\mathcal{X} \subset \mathbb{R}^{d}$, and,  $ \mathcal{I}^i = \{1, \ldots, |\mathcal{I}^i| \},  | \mathcal{I}^i | < \infty$. We assume that $\mathcal{X}$ is a closed, connected set (i.e., with no isolated points).The set of features for an agent is assumed to be linearly independent. Let $K^{i}(x,y) = \sum_{j \in \mathcal{I}^i}\varphi^{i}_{j}(x)\varphi^{i}_{j}(y)$ be a kernel and the corresponding RKHS generated by it be $(H^i, \langle \cdot, \cdot \rangle_{H^{i}}, ||\cdot ||_{ H^{i}})$. Then, the knowledge space constructed for Agent $i$ is the RKHS, $H^i$, with kernel $K^{i}$. 

Given the knowledge spaces $H^1$ and $H^2$ with kernels $K^{1}$ and $K^{2}$, the fusion space is a RKHS with kernel $K^{1} + K^{2}$, Lemma \ref{Lemma 27}. A function in the space $H^{i}$ is uploaded or communicated to $H$ using the upload operator $\hat{L}^{i}$, Corollary \ref{Corollary 28}.  A function in $H$ is downloaded or communicated to $H^{i}$ using the download operator $\sqrt{\bar{L}^{i}} \circ \Pi_{\mathcal{N}\big(\sqrt{\bar{L}^i}\big)^{\perp}}$, Lemma \ref{Lemma 30}. The upload and download operators are refereed to as \textit{transfer} operators. We refer to the set of estimation spaces and the transfer operators, $\Big(H^{1}, H^{2}, H, \hat{L}^{i} , \sqrt{\bar{L}^{i}} \circ \Pi_{\mathcal{N}\big(\sqrt{\bar{L}^i}\big)^{\perp}} \Big) $, as the estimation architecture. 
\begin{remark}\label{Remark 1}
    We note that since the set of features for each agent is finite, it follows that the RKHS $H^{1}$ and $H^{2}$ are finite dimensional.
\end{remark}
\begin{problem}\label{Problem 1}
    Given the data points at the agents and the estimation architecture, the estimation problem is to find an iterative collaborative algorithm to find the mapping from the input to the output without any data exchange among the agents.  
\end{problem}
\subsection{The Analysis Problems}\label{Subsection 2.2}
We define consistency of a estimation scheme as follows,
\begin{definition}\label{Definition 4}
Given a pair of initial estimates $(f^{1}_0;f^{2}_0)$ and a suitable sequence of data points $(\hspace{-1pt}\{(x^{1}_n; y^{1}_{n})\}, \{(x^{2}_n; y^{2}_{n})\})$, let $\{(f^{1}_{n}; f^{2}_n)\}$ be the sequence of learned functions in the product space of the individual knowledge spaces of the agents. A estimation scheme is said to be strongly consistent if there exists a subsequence that converges to a function in the product space (equipped with the product of the strong topologies) for every sequence of learned functions, i.e, there exists an infinite index set $I$ such that,  $\big\{f^{1}_{k}; f^{2}_{k}\big\}_{ k \in I} \xrightarrow[ ]{  H^{1} \times H^2} \big(f^{1,*}; f^{2,*} \big)$ for every sequence of learned functions  $ \{f^{1}_{n}; f^{2}_{n}\} $. 
\end{definition}
We refer to $(f^{1}_0;f^{2}_0)$ as initial estimates or initial conditions interchangeably. We recall that a subsequence, $\{ a_k\}_{k \in I}$ of a sequence $\{ a_n\}_{n \in \mathbb{N}}$ is a sequence whose index set, $I \subseteq \mathbb{N}$, has infinite cardinality. We use the notation  $\{ a_k\}_{k \in I}$ and  $\{ a_{n_k}\}$ interchangeably. Usually for strong consistency, it is required that the entire sequence of learned functions is required to converge in the strong topology. Instead, in the above definition, we require that there is a common subsequence that converges for every sequence of learned functions. If such a subsequence exists, with suitable re-indexing of the sequence, then it can be claimed that the ``entire" sequence converges.  The reasoning for the this definition is that this could impose ``weaker" conditions on the algorithm than what would have been required to achieve the usual definition of consistency.

The usual method to prove existence of a converging subsequence for a given sequence is to bound the sequence. However, to prove consistency in the sense of Definition \ref{Definition 4}, we need to \textit{uniformly} bound $|| \{f^{1}_{n}; f^{2}_{n})\} ||$, i.e, the bound should be the same irrespective of the initial conditions and the input data sequence. By defining operators which take in the initial conditions and data as inputs and output the estimates, the uniform bound can be achieved  by uniformly bounding the norm of the operators. 

First, we observe that $(f^{1}_{n}; f^{2}_{n})$ is only dependent on $(f^{1}_{0}; f^{2}_{0})$ and the data points $\{(x^{1}_j; y^{1}_j), (x^{2}_j; y^{2}_j)\}^{n}_{j=1}$. Indeed, the sequence $\{f^{1}_{n}; f^{2}_{n}\}$ can be viewed as the states of a nonautonomous dynamical system with inital conditions  $(f^{1}_{0}; f^{2}_{0})$, and, inputs $\{(x^{1}_j; y^{1}_j), (x^{2}_j; y^{2}_j)\}$. Hence, we would like to find an operator which when given these as inputs, outputs the estimates, $(f^{1}_{n}; f^{2}_{n})$. If such an operator exists, we refer to it as estimation operator until iteration $n$. This operator can be refined further as follows. Next we observe that, given the estimates at $n-1$, the estimates at iteration $n$ is only dependent on the data points at $n$. Hence, we would like to find an operator which when given $(f^{1}_{n-1}; f^{2}_{n-1})$ and $\Big((x^{1}_n; y^{1}_n), (x^{2}_n; y^{2}_n)\Big)$, outputs $(f^{1}_{n}; f^{2}_{n})$. If such an operator exists, we refer to it as the estimation operator at iteration $n$. The estimation operator until iteration $n$ is obtained by composing the estimation operator from iteration $1$ to $n$.

If the operators exist and are uniformly bounded, then $\{(f^{1}_n;f^2_n)\}$, can be uniformly bounded. Using the operator based approach, a common subsequence which converges across all initial conditions and input data sequences could be found. Thus, sufficient conditions under which the norm of the operators is uniformly bounded needs to investigated. Since the input data sequence is an exogenous input, it is not necessarily that is ``well behaved", i.e, the input data sequence may be unbounded causing the algorithm to become unstable. Hence, \textit{valid} input data sequences which lead to consistency of estimates need to be studied.

Following the reasoning as above, we define the problems to analyze the solution of Problem \ref{Problem 1}. 
\begin{problem} \label{Problem 5}
For the solution of Problem \ref{Problem 1}, find estimation operators, $\{\mathbb{T}_n\}$, which abstract iteration $n$ of the algorithm and operators, $\{\bar{\mathbb{T}}_n\}$, which abstract the algorithm until iteration $n$, $n \in \mathbb{N}$.
\end{problem}
\begin{problem}\label{Problem 6}
    Find valid input data sequences which lead to uniformly bounded estimation operators. Find sufficient conditions on norm of the estimation operators, $\{\bar{\mathbb{T}}_n\}$, which guarantees the existence of an operator which abstracts the algorithm for arbitrarily  large $n$. Prove that the solution of Problem 1 is consistent in sense of Definition \ref{Definition 4}. 
\end{problem}
The algorithm which solves Problem \ref{Problem 1} would posses a stopping criterion (subsection \ref{Algorithm Description}) to ensure that the algorithm stops in finite time in practice. However, the analysis presented is an asymptotic study; we investigate convergence of the algorithm where the number of samples could be arbitrarily large.
\section{ColEst2L- The Estimation Algorithm}\label{ColEst2L- The Estimation Algorithm}
In this section, we present an algorithm to solve Problem \ref{Problem 1}. In subsection \ref{Estimation at Agents}, we present the estimation problems at the agents and their solutions and in subsection \ref{Fusion at Fusion Center}, we present the fusion problem and its solution. 
\subsection{Algorithm Description}\label{Algorithm Description}
\begin{figure}
    \centering
    \includegraphics[scale=0.29]{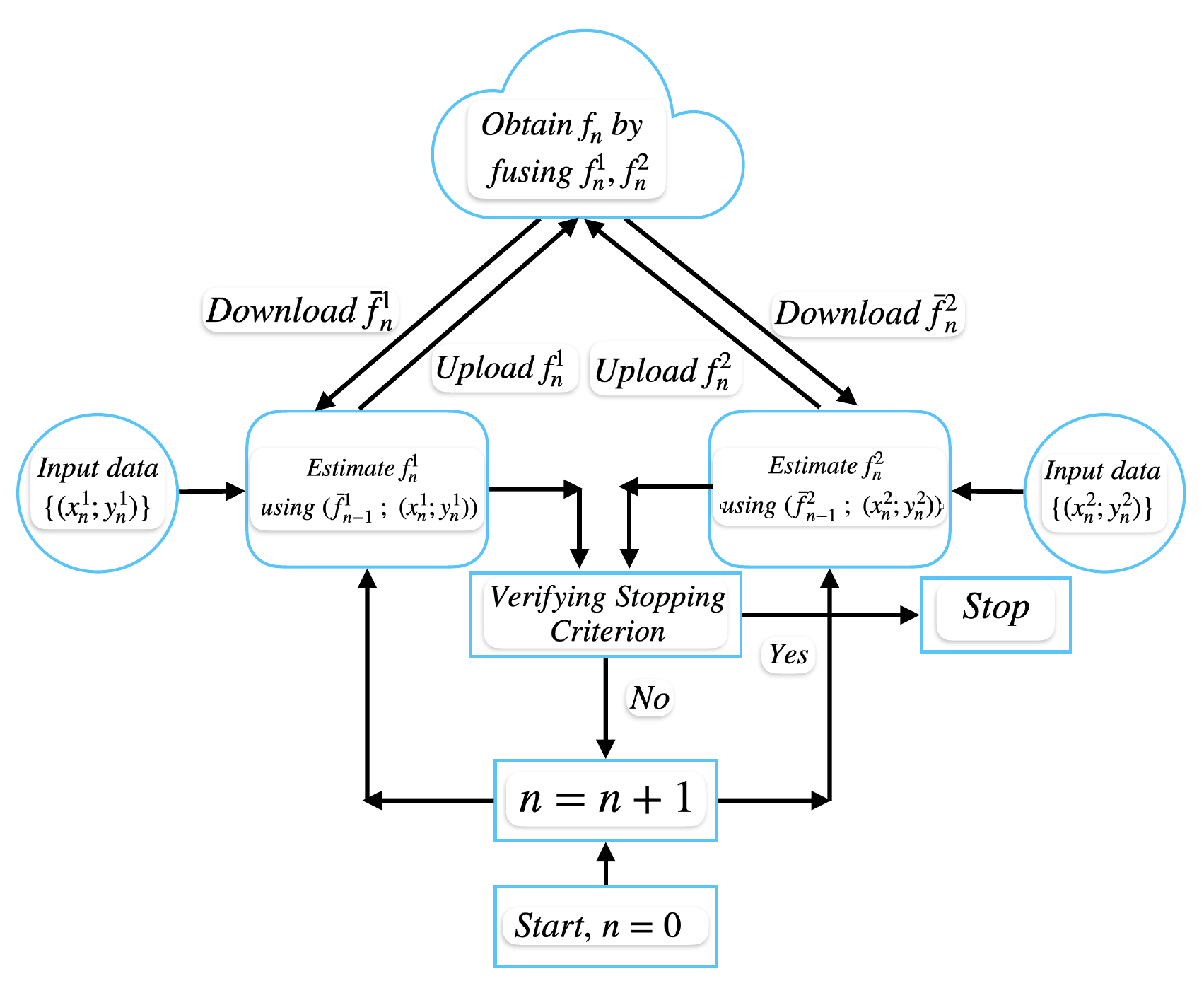}
    \caption{ColEst2L - the estimation algorithm}
    \label{Figure 2}
      \vspace{-0.9cm}
\end{figure}
The multi-agent estimation algorithm, \textit{ColEst2L}, has been described in the flowchart in Figure \ref{Figure 2}. ``ColEst" stands for collaborative estimation while ``2L" stands for 2 levels, level 1 being the agents and level 2 being the fusion center. Given the downloaded function from iteration $n-1$, $\bar{f}^i_{n-1}$ at Agent $i$ and the data point $(x^{i}_n, y^{i}_n)$ at iteration $n$, each agent solves the estimation problem (subsection \ref{Estimation at Agents}) to arrive at the estimate $f^{i}_{n}$. The locally estimated functions, $f^{1}_n$ and $f^{2}_n$, are uploaded to the fusion space. The uploaded functions are fused in the fusion space to obtain $f_{n}$, subsection \ref{Fusion at Fusion Center} . The fused function  downloaded onto the knowledge space of Agent $i$, denoted as $\bar{f}^i_{n}$, is considered as the final estimate at the agents at iteration $n$. For a given $k_{\max} \in \mathbb{N}$ and $\epsilon >0$, it is verified if $|| \bar{f}^{1}_{n - k_{\max} +j} -  \bar{f}^{1}_{n - k_{\max}}|| + || \bar{f}^{2}_{n - k_{\max} +j} -  \bar{f}^{2}_{n - k_{\max}}|| < \epsilon $, for every $ j \in \{0,1, \ldots, k_{\max}\}$. If the condition is satisfied, then the algorithm is terminated. Else $n$ is incremented by $1$ and the procedure is repeated. The condition verifies if the downloaded functions at the agents are ``close" enough in the norm sense. This collaborative estimation algorithm and some of its properties, specifically, the upload and download operation were studied in \cite{Raghavan2024L4DC}.  For further discussion on the stopping criterion, we refer to \cite{Raghavan2024L4DC}.
\subsection{Estimation at Agents}\label{Estimation at Agents}
At iteration $n$, given the data point $(x^{i}_{n};y^{i}_{n})$ and the downloaded function from iteration $n-1$, $\bar{f}^{i}_{n-1}$, the estimation problem for Agent $i$ in knowledge space $H^{i}$ is,
\begin{align*}
(P1)^{i}_{n}:\; \underset{f^{i}_{n} \in H^{i}} \min C^{i}(f^{i}_n), \;  C(f^{i}_n) = (y^{i}_{n} - f^{i}_{n}(x^{i}_{n}))^{2} +  \\ 
\varrho^{i}_{n}|| f^{i}_{n} - \bar{f}^{i}_{n-1} ||^{2}_{H^i}. 
\end{align*}
The above optimization is a trade off between error (difference between estimated output and true output) and complexity of the difference between the estimates at consecutive iterations controlled by the $\rho^{i}_{n}$ parameter. When  $\rho^{i}_{n}$ is small the first term gets precedence and when  $\rho^{i}_{n}$ is large the latter gets precedence. 

Let $\mathbf{K^{i}} = (K^{i}(\bar{x}^{i}_{j}, \bar{x}^{i}_{k}))_{jk}= (\langle K^{i}(\cdot, \bar{x}^i_k), K^i(\cdot, \bar{x}^i_{j})  \rangle_{H^i})$. Since the inner product is symmetric, $\mathbf{K^{i}} \in \mathbb{R}^{m \times m}$ is symmetric matrix. Let $\mathbf{\bar{K}^{i}}: \mathcal{X} \to \mathbb{R}^m$ be defined as $\mathbf{\bar{K}^{i}}(\cdot) = [K^{i}(\cdot, \bar{x}^{i}_{1}); \ldots; K^{i}(\cdot, \bar{x}^{i}_{m})]$. For any data point, $\mathbf{\bar{K}^{i}}(x^{i}_n) = [K^{i}(x^{i}_{n}, \bar{x}^{i}_{1}); \ldots; K^{i}(x^{i}_{n}, \bar{x}^{i}_{m})]$ is a column vector in $\mathbb{R}^{m}$. Given vector $\boldsymbol{\alpha^i} = [\alpha^{i}_{1}; \ldots; \alpha^{i}_{m}] \in \mathbb{R}^{m}$, we use notation $f^{i} = \boldsymbol{\alpha^{i^T}} \mathbf{\bar{K}^{i}}(\cdot)$ for the function $f^{i} =\sum^{m}_{j=1} \alpha^{i}_{j}K^{i}(\cdot, \bar{x}^i_j) \in H^{i}.$
\begin{proposition}\label{Proposition 2}
Let $ \boldsymbol{\alpha^{i,*}_{n}} =  \Big( \varrho^{i}_{n}\mathbf{K^{i}}+ \mathbf{\bar{K}^{i}}(x^{i}_n) \mathbf{\bar{K}^{i^T}}(x^{i}_n) \Big)^{-1} \\ \Big( \mathbf{\bar{K}^{i}}(x^{i}_n) y^{i}_{n} +  \varrho^{i}_{n}\mathbf{K^{i}} \boldsymbol{\bar{\alpha}^{i}_{n-1}}\Big)$. Then, $f^{i,*}_{n} = \boldsymbol{\alpha^{{i,*}^T}_{n}} \mathbf{\bar{K}^{i}}(\cdot)$ solves the optimization problem in problem $(P1)^{i}_n$. 
\end{proposition}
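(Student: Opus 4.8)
The plan is to convert $(P1)^{i}_{n}$, nominally an optimization over the space $H^{i}$, into a strictly convex quadratic program over $\mathbb{R}^{m}$ and then solve it through the first-order optimality condition. First I would show that the minimizer can be written as $f^{i}_{n}=\boldsymbol{\alpha}^{T}\mathbf{\bar{K}^{i}}(\cdot)$ for some $\boldsymbol{\alpha}\in\mathbb{R}^{m}$. Since the features $\{\varphi^{i}_{j}\}_{j\in\mathcal{I}^{i}}$ are finitely many and linearly independent, $H^{i}$ is finite dimensional, and under Assumption \ref{Assumption 1} the nodes $\{\bar{x}^{i}_{j}\}^{j=m}_{j=1}$ may be taken so that $\{K^{i}(\cdot,\bar{x}^{i}_{j})\}^{j=m}_{j=1}$ is a basis of $H^{i}$; then every candidate function, and in particular $\bar{f}^{i}_{n-1}=\boldsymbol{\bar{\alpha}^{i}_{n-1}}^{T}\mathbf{\bar{K}^{i}}(\cdot)$, has a unique such expansion, so $(P1)^{i}_{n}$ becomes an unconstrained optimization over $\boldsymbol{\alpha}$. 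Alternatively, without assuming the dictionary spans $H^{i}$, a representer-type splitting $f=\Pi_{\mathcal{M}}f+(f-\Pi_{\mathcal{M}}f)$ with $\mathcal{M}=\text{Span}(\{K^{i}(\cdot,\bar{x}^{i}_{j})\}^{j=m}_{j=1})$ gives the same conclusion, provided $K^{i}(\cdot,x^{i}_{n})\in\mathcal{M}$: the component orthogonal to $\mathcal{M}$ leaves the data-fit term unchanged and only enlarges $\|f-\bar{f}^{i}_{n-1}\|^{2}_{H^{i}}$, hence vanishes at the optimum.

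Next I would rewrite the objective in the coordinates $\boldsymbol{\alpha}$. By the reproducing property, $f^{i}_{n}(x^{i}_{n})=\langle f^{i}_{n},K^{i}(\cdot,x^{i}_{n})\rangle_{H^{i}}=\boldsymbol{\alpha}^{T}\mathbf{\bar{K}^{i}}(x^{i}_{n})$, and since $\langle K^{i}(\cdot,\bar{x}^{i}_{j}),K^{i}(\cdot,\bar{x}^{i}_{k})\rangle_{H^{i}}=[\mathbf{K^{i}}]_{jk}$ we get $\|f^{i}_{n}-\bar{f}^{i}_{n-1}\|^{2}_{H^{i}}=(\boldsymbol{\alpha}-\boldsymbol{\bar{\alpha}^{i}_{n-1}})^{T}\mathbf{K^{i}}(\boldsymbol{\alpha}-\boldsymbol{\bar{\alpha}^{i}_{n-1}})$. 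Hence the cost equals $(y^{i}_{n}-\boldsymbol{\alpha}^{T}\mathbf{\bar{K}^{i}}(x^{i}_{n}))^{2}+\varrho^{i}_{n}(\boldsymbol{\alpha}-\boldsymbol{\bar{\alpha}^{i}_{n-1}})^{T}\mathbf{K^{i}}(\boldsymbol{\alpha}-\boldsymbol{\bar{\alpha}^{i}_{n-1}})$, a quadratic in $\boldsymbol{\alpha}$ whose Hessian is $2\big(\varrho^{i}_{n}\mathbf{K^{i}}+\mathbf{\bar{K}^{i}}(x^{i}_{n})\mathbf{\bar{K}^{i^T}}(x^{i}_{n})\big)$. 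Because $\mathbf{K^{i}}$ is the Gram matrix of a basis it is positive definite, $\varrho^{i}_{n}>0$, and the rank-one term $\mathbf{\bar{K}^{i}}(x^{i}_{n})\mathbf{\bar{K}^{i^T}}(x^{i}_{n})$ is positive semidefinite, so the Hessian is positive definite; the program is strictly convex with a unique minimizer characterized by stationarity.

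Finally I would set the gradient $-2\mathbf{\bar{K}^{i}}(x^{i}_{n})(y^{i}_{n}-\boldsymbol{\alpha}^{T}\mathbf{\bar{K}^{i}}(x^{i}_{n}))+2\varrho^{i}_{n}\mathbf{K^{i}}(\boldsymbol{\alpha}-\boldsymbol{\bar{\alpha}^{i}_{n-1}})$ equal to zero, rearrange it to $\big(\varrho^{i}_{n}\mathbf{K^{i}}+\mathbf{\bar{K}^{i}}(x^{i}_{n})\mathbf{\bar{K}^{i^T}}(x^{i}_{n})\big)\boldsymbol{\alpha}=\mathbf{\bar{K}^{i}}(x^{i}_{n})y^{i}_{n}+\varrho^{i}_{n}\mathbf{K^{i}}\boldsymbol{\bar{\alpha}^{i}_{n-1}}$, and invert the matrix (nonsingular by the previous step) to obtain precisely the $\boldsymbol{\alpha^{i,*}_{n}}$ of the statement; then $f^{i,*}_{n}={\boldsymbol{\alpha^{i,*}_{n}}}^{T}\mathbf{\bar{K}^{i}}(\cdot)$ is the asserted solution. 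I expect the only delicate point to be the first step, namely justifying that optimizing over the fixed dictionary $\{K^{i}(\cdot,\bar{x}^{i}_{j})\}^{j=m}_{j=1}$ does not discard the true minimizer over $H^{i}$ (equivalently, that this dictionary spans $H^{i}$, or at least contains $K^{i}(\cdot,x^{i}_{n})$ in its span) together with the resulting nonsingularity of $\mathbf{K^{i}}$; everything after that is a routine convex-quadratic calculation.
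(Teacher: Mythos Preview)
Your approach is correct and is exactly the standard route one would expect: reduce $(P1)^{i}_{n}$ to a finite-dimensional strictly convex quadratic in the coefficient vector $\boldsymbol{\alpha}$ via the reproducing property, then solve the first-order condition. The paper itself does not give a proof here; it simply writes ``For the proof, we refer to \cite{Raghavan2024L4DC}, \cite{raghavan2024FunctionFusion}'', so there is no in-paper argument to compare against, but your derivation is the natural one and matches the closed-form expression claimed.

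One remark on the point you already flagged as delicate: you assert that $\mathbf{K^{i}}$ is positive definite because it is the Gram matrix of a basis, but the paper is less committal about this. In Proposition~\ref{Proposition 33} the authors explicitly write that ``$\mathbf{K^{i}}$ is only positive semidefinite'' and patch it by replacing $\mathbf{K^{i}}$ with $\mathbf{K^{i}}+c^{i}_{3}\mathbb{I}_{m}$ when needed. The issue is that the nodes $\{\bar{x}^{i}_{j}\}^{m}_{j=1}$ are chosen so that $\{K(\cdot,\bar{x}^{i}_{j})\}$ is a basis of the fusion space $H$ (dimension $m$), not of $H^{i}$ (dimension $|\mathcal{I}^{i}|$, which may be strictly smaller). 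So your ``alternative'' representer-type argument is actually closer to what is needed than the basis argument, and strict convexity/uniqueness of $\boldsymbol{\alpha}$ may fail even though uniqueness of $f^{i,*}_{n}$ in $H^{i}$ still holds. This does not invalidate the formula in the statement (which presupposes the inverse exists), but it is worth being precise about which hypothesis gives you invertibility.
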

For the proof, we refer to \cite{Raghavan2024L4DC}, \cite{raghavan2024FunctionFusion}. 
\subsection{Fusion at Fusion Center}\label{Fusion at Fusion Center}
Given $f^{1}_{n}$ and $f^{2}_{n}$, the first goal of the fusion center is to estimate the data points $\{(\hat{x}^{i}_{n,j};\hat{y}^{i}_{n,j})\}^{m}_{j=1}$ which under optimal estimation would result in the functions $f^{1}_{n}$ and $f^{2}_{n}$ being estimated. The same can be formulated as a feasibility problem:
\begin{align*}
(P2)^{i}_{n}: \; \underset{\{(\hat{x}^{i}_{n,j},\hat{y}^{i}_{n,j})\}^{m}_{j=1}} \min \bar{c}^{i}\;\; \text{s.t} \;\;  f^{i}_n = \underset{g^{i}_{n} \in H^{i}} \argmin \sum^{m}_{j=1} (\hat{y}^{i}_{n,j} - \\
g^{i}_{n}(\hat{x}^{i}_{n,j}))^2 + \varrho_{n}||g^{i}_{n}||^{2}
\end{align*}
Given $\{(\hat{x}^{1}_{n,j};\hat{y}^{1}_{n,j})\}^{m}_{j=1} \cup \{(\hat{x}^{2}_{n,j};\hat{y}^{2}_{n,j})\}^{m}_{j=1} $, the fusion problem defined as,
\begin{align*}
(P3)_{n}:\; \underset{f_{n} \in H} \min \; C(f_{n}), \;  C(f_{n}) = \sum_{i=1,2} \sum^{m}_{j=1} (\hat{y}^{i}_{n,j} - \\
f_{n}(\hat{x}^{i}_{n,j}))^2 + \varrho_{n} ||f_{n} ||^{2}_{H}, 
\vspace{-0.15cm}
\end{align*}
is a least squares regression problem. If $f_{n}$ solves the optimization problem, then the  downloaded function at Agent $i$, $\bar{f}^{i}_{n}$, is $\sqrt{\bar{L}^{i}} \circ \Pi_{\mathcal{N}\big(\sqrt{\bar{L}^i}\big)^{\perp}} (f_{n})$ (from Lemma\ref{Lemma 30}).

Proposition 2 in \cite{Raghavan2024L4DC} states that there exists two sets of $m$ (dimension of $H$ is $m$) elements each, $\{\bar{x}^{i}_{j}\}^{m}_{j=1}$, $i=1,2$ such that (i) $\{\bar{x}^{1}_{j}\}^{m}_{j=1} \cap \{\bar{x}^{2}_{j}\}^{m}_{j=1} = \emptyset$; (ii) $\{K(\cdot,\bar{x}^{1}_{j})\}^{m}_{j=1}$  and $\{K(\cdot,\bar{x}^{2}_{j})\}^{m}_{j=1}$, each form a basis for $H$. Thus, for any function $f \in H$, $\exists! \{\alpha^{i}_{j}\}^{m}_{j=1}$ such that $f(\cdot) = \sum^{m}_{j=1} \alpha^{1}_{j}K(\cdot,\bar{x}^{1}_{j}) = \sum^{m}_{j=1}\alpha^{2}_{j}K(\cdot,\bar{x}^{2}_{j})$. Let $\mathbf{K} =  (K(\bar{x}^{p}_{j}, \bar{x}^{q}_{k}))_{jk}= (\langle K(\cdot, \bar{x}^q_k), K(\cdot, \\ \bar{x}^p_{j}) \rangle_{H})_{j,k}, p,q = 1,2, j,k, =1 \ldots m$ be a symmetric matrix in $\mathbb{R}^{2m \times 2m}$ and $\mathbf{\bar{K}}: \mathcal{X} \to \mathbb{R}^{2m}$ be defined as $\mathbf{\bar{K}}(\cdot) = \hspace{-2pt} \Big[K(\cdot, \bar{x}^{1}_{1}); \ldots;  K(\cdot, \bar{x}^{1}_{m});  K(\cdot, \bar{x}^{2}_{1});\ldots ;   K(\cdot, \bar{x}^{2}_{m})\Big]$.   Let $\mathbf{\hat{Y}}^i_n = \Big[\hat{y}^{i}_{n,1}; \ldots; \hat{y}^{i}_{n,m}\Big] \in \mathbb{R}^m$ and $\mathbf{\hat{Y}}_{n} =  \Big[\mathbf{\hat{Y}}^1_{n} ;  \mathbf{\hat{Y}}^2_{n}\Big] \in \mathbb{R}^{2m}$. Let $\mathbf{\check{K}^{i}}: \mathcal{X} \to \mathbb{R}^{m}$ be defined as $\mathbf{\check{K}^{i}}(\cdot) = \Big[K(\cdot, \bar{x}^{i}_{1}); \ldots; K(\cdot, \bar{x}^{i}_{m})\Big]$. A function $f^{i}$ which belongs to $H^{i}$ and $H$ can be expressed as $\boldsymbol{\alpha^{i^T}} \mathbf{\bar{K}^{i}}(\cdot)$ and $\boldsymbol{\hat{\alpha}^{i^T}}\mathbf{\check{K}^{i}}(\cdot)$. Indeed, since $\{K(\cdot, \bar{x}^i_j)\}^{m}_{j=1}$ is basis for $H$, $\exists! \{M^{i}_{kj}\}^{m}_{k=1}$ such that, $ K^{i}(\cdot, \bar{x}^{i}_j) = \hspace{-3pt} \sum^{m}_{k=1} \hspace{-2pt} M^{i}_{kj}K(\cdot, \bar{x}^i_k), \\ \forall j$. Thus, 
\begin{align*}
\hspace{-0.7cm}\sum^{m}_{j=1}\alpha^{i}_{j}K^{i}(\cdot, \bar{x}^{i}_j) = \sum^{m}_{k=1}\sum^{m}_{j=1}M^{i}_{kj}\alpha^{i}_{j}K(\cdot, \bar{x}^i_k) = \sum^{m}_{k=1}\hat{\alpha}^i_{k}K(\cdot, \bar{x}^i_k),
\vspace{-1.5cm}
\end{align*}
which implies that $\boldsymbol{\hat{\alpha}^{i}} = M^{i}\boldsymbol{\alpha^{i}}$ where $M^{i} = (M^i_{kj})_{k,j} \in \mathbb{R}^{m \times m}$.
\begin{proposition}\label{Proposition 3}
$\hat{x}^{i}_{n,j} = \bar{x}^{i}_{j}\; \forall n \in \mathbb{N}, i=1,2, j= 1, \ldots, m$ and $\mathbf{\hat{Y}}^{i}_{n} = \Big(\mathbf{K^{i}} + \varrho^{i}_{n}\mathbb{I}_{m}\Big) \boldsymbol{\alpha^{i}_{n}}$ solve the problem $(P2)^{i}_{n}$, where $f^{i}_{n} = \boldsymbol{\alpha^{i^T}_{n}}\mathbf{\bar{K}^{i}}(\cdot)$. $f^*_{n} = \boldsymbol{\alpha^{T}_{n}}\mathbf{\bar{K}}(\cdot)$, where $\boldsymbol{\alpha_{n}} = \Big(\mathbf{K^{T}}\mathbf{K} + \varrho_{n}\mathbf{K}\Big)^{-1}\Big(\mathbf{K^{T}}\mathbf{\hat{Y}}_{n}\Big)$ solves $(P3)_{n}$.
\end{proposition}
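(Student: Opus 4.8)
The statement splits into two essentially independent parts --- the feasibility problem $(P2)^{i}_{n}$ and the regression problem $(P3)_{n}$ --- and I would treat both with the standard two-step recipe for kernel ridge regression: collapse each variational problem onto the finite-dimensional span of the relevant kernel sections, then solve the resulting convex quadratic program in coordinates. The only delicacy is that the coordinate representation used in $(P3)_{n}$ is redundant, so its solution vector is not unique and a well-posedness check is needed.

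For $(P2)^{i}_{n}$, I would fix the candidate sampling points $\hat{x}^{i}_{n,j} = \bar{x}^{i}_{j}$ and analyze the inner $\argmin$. Being an ordinary regularized least-squares problem in $H^{i}$ with positive trade-off parameter, it has a unique minimizer, which by the representer theorem has the form $\boldsymbol{\beta}^{T}\mathbf{\bar{K}^{i}}(\cdot)$. Substituting $g^{i}_{n}(\bar{x}^{i}_{k}) = (\mathbf{K^{i}}\boldsymbol{\beta})_{k}$ and $\|g^{i}_{n}\|^{2}_{H^{i}} = \boldsymbol{\beta}^{T}\mathbf{K^{i}}\boldsymbol{\beta}$ turns the objective into $\|\mathbf{\hat{Y}}^{i}_{n} - \mathbf{K^{i}}\boldsymbol{\beta}\|^{2} + \varrho^{i}_{n}\boldsymbol{\beta}^{T}\mathbf{K^{i}}\boldsymbol{\beta}$, whose first-order condition gives $\boldsymbol{\beta} = (\mathbf{K^{i}} + \varrho^{i}_{n}\mathbb{I}_{m})^{-1}\mathbf{\hat{Y}}^{i}_{n}$; here $\mathbf{K^{i}}$ is invertible because $\{K^{i}(\cdot,\bar{x}^{i}_{j})\}_{j=1}^{m}$ is a basis of the $m$-dimensional space $H^{i}$, the same fact already used implicitly in Proposition \ref{Proposition 2}. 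The constraint of $(P2)^{i}_{n}$ then holds precisely when $\boldsymbol{\beta} = \boldsymbol{\alpha^{i}_{n}}$, i.e.\ when $\mathbf{\hat{Y}}^{i}_{n} = (\mathbf{K^{i}} + \varrho^{i}_{n}\mathbb{I}_{m})\boldsymbol{\alpha^{i}_{n}}$, which is the asserted value, so the pair is feasible; that it also minimizes the cost $\bar{c}^{i}$ I would import from \cite{Raghavan2024L4DC, raghavan2024FunctionFusion} rather than re-derive.

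For $(P3)_{n}$, since $\varrho_{n} > 0$ the functional $C(f_{n})$ is strictly convex and coercive on the finite-dimensional space $H$, so it has a unique minimizer $f^{*}_{n}$. Every element of $H$ equals $\boldsymbol{\alpha}^{T}\mathbf{\bar{K}}(\cdot)$ for some (generally non-unique) $\boldsymbol{\alpha}\in\mathbb{R}^{2m}$, since the $2m$ components of $\mathbf{\bar{K}}(\cdot)$ span $H$; using the first part to replace $\hat{x}^{i}_{n,j}$ by $\bar{x}^{i}_{j}$, together with $[f_{n}(\bar{x}^{1}_{1}); \ldots; f_{n}(\bar{x}^{2}_{m})] = \mathbf{K}\boldsymbol{\alpha}$ (by the definition of $\mathbf{K}$ and symmetry of $K$) and $\|f_{n}\|^{2}_{H} = \boldsymbol{\alpha}^{T}\mathbf{K}\boldsymbol{\alpha}$, the objective becomes $\|\mathbf{\hat{Y}}_{n} - \mathbf{K}\boldsymbol{\alpha}\|^{2} + \varrho_{n}\boldsymbol{\alpha}^{T}\mathbf{K}\boldsymbol{\alpha}$, a convex quadratic in $\boldsymbol{\alpha}$ whose stationarity condition is the normal equation $(\mathbf{K^{T}}\mathbf{K} + \varrho_{n}\mathbf{K})\boldsymbol{\alpha} = \mathbf{K^{T}}\mathbf{\hat{Y}}_{n}$. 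Any solution $\boldsymbol{\alpha_{n}}$ of this equation yields the claimed minimizer $f^{*}_{n} = \boldsymbol{\alpha_{n}}^{T}\mathbf{\bar{K}}(\cdot)$.

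\textbf{Main obstacle.} The point requiring real care is that $\mathbf{K}\in\mathbb{R}^{2m\times 2m}$ has rank only $m$ --- the $2m$ points $\{\bar{x}^{i}_{j}\}$ span only the $m$-dimensional space $H$ --- so $\mathbf{K^{T}}\mathbf{K}+\varrho_{n}\mathbf{K}$ is singular and the coordinate vector $\boldsymbol{\alpha_{n}}$ is not unique; the inverse in the statement must be read as ``any solution of the normal equation'' (equivalently, the Moore--Penrose pseudoinverse). I would therefore need to check that the induced \emph{function} $f^{*}_{n}$ is nonetheless well-defined: two solutions differ by a vector $\boldsymbol{v}\in\mathcal{N}(\mathbf{K})$, and then $\|\boldsymbol{v}^{T}\mathbf{\bar{K}}(\cdot)\|^{2}_{H} = \boldsymbol{v}^{T}\mathbf{K}\boldsymbol{v} = 0$, so the two vectors represent the same element of $H$. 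A secondary loose end is that the displayed form of $(P2)^{i}_{n}$ writes $\varrho_{n}$ where the statement uses $\varrho^{i}_{n}$; I would treat this as a benign typographical mismatch and use $\varrho^{i}_{n}$ throughout.
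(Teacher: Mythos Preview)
The paper does not prove Proposition~\ref{Proposition 3} in the text at all: the entire ``proof'' is a forward reference to \cite{Raghavan2024L4DC,raghavan2024FunctionFusion}. Your proposal is therefore not competing with anything in the paper; it is supplying the argument the paper omits, and the representer-theorem-plus-normal-equations route you outline is exactly the standard derivation one would expect those references to contain.

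Your treatment of $(P3)_n$ is correct, and your flag about the rank of $\mathbf{K}$ is well taken: the $2m$ kernel sections live in the $m$-dimensional space $H$, so $\mathbf{K}$ is genuinely singular, the displayed inverse must be read as a pseudoinverse, and your null-space check shows $f^*_n$ is nonetheless well-defined. The paper itself is inconsistent on this point (it later writes ``Assuming $\mathbf{K}$ is invertible'' in Proposition~\ref{Proposition 35}), so you are on firmer ground than the text. One small inaccuracy in your $(P2)^i_n$ step: you assert that $\mathbf{K^i}$ is invertible because $\{K^i(\cdot,\bar{x}^i_j)\}_{j=1}^m$ is a basis of the $m$-dimensional space $H^i$, but the paper says $\dim H = m$, not $\dim H^i = m$; in general $\dim H^i = |\mathcal{I}^i| \le m$, and indeed Proposition~\ref{Proposition 33} explicitly notes that ``$\mathbf{K^i}$ is only positive semidefinite'' and patches it with $c^i_3\mathbb{I}_m$. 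This does not break your argument---the same null-space remedy you use for $\mathbf{K}$ applies to $\mathbf{K^i}$, since vectors in $\mathcal{N}(\mathbf{K^i})$ represent the zero function in $H^i$---but the justification you give for the cancellation is not quite the right one.
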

For the proof, we refer to \cite{Raghavan2024L4DC}, \cite{raghavan2024FunctionFusion}.
\begin{figure*}
\begin{center}
\begin{align}
&\hspace{-0.8cm} || \bar{T}^{i}(\varrho^i_{n})|| = \underset{(f; \psi^{i}_{(x,y)}) \neq \theta} \sup  \frac{\Big(M^{i}(x, \varrho^i_n)\big(\varrho^i_n K^{i}\boldsymbol{\alpha} + \mathbf{\bar{K}^{i}}(x)y\big)\Big)^{T} K^{i} \Big(M^{i}(x, \varrho^i_n)\big(\varrho^i_n K^{i}\boldsymbol{\alpha} + \mathbf{\bar{K}^{i}}(x)y\big)\Big)}{|| f ||^{2} + || \psi^{i}_{(x,y)} ||^{2}} = \underset{(f; \psi^{i}_{(x,y)}) \neq \theta} \sup  \frac{\phi^{i}_{n}(\boldsymbol{\alpha};x;y)}{|| f ||^{2} + || \psi^{i}_{(x,y)} ||^{2}} \label{Equation 1} \\ 
&\hspace{0cm} = \underset{(\boldsymbol{\alpha} ; x ; y) \in E^i} \sup  \phi^{i}_{n}(\boldsymbol{\alpha};x;y),  \; \phi^{i}_{n}(\boldsymbol{\alpha};x;y) = \underbrace{\Big(\frac{\mathbf{\bar{K}^{i}}(x) y}{\varrho^{i}_{n}} + \mathbf{K^{i}} \boldsymbol{\alpha}\Big)^{T}}_{\phi^{i^T}_{n,1}} \underbrace{\Big(\mathbf{K^{i}} +\frac{\mathbf{\bar{K}^{i}}(x) \mathbf{\bar{K}^{i^T}}(x) }{\varrho^{i}_{n}}\Big)^{-1}}_{\phi^{i}_{n,2}} \mathbf{K^{i}} \Big(\mathbf{K^{i}}+ \frac{\mathbf{\bar{K}^{i}}(x) \mathbf{\bar{K}^{i^T}}(x) }{\varrho^{i}_{n}}\Big)^{-1} \Big( \frac{\mathbf{\bar{K}^{i}}(x) y}{\varrho^{i}_{n}} +  \mathbf{K^{i}} \boldsymbol{\alpha}\Big).  \label{Equation 2} \\
& \hspace{-0.8cm} ||\bar{T}(\cdot )||^2=  \hspace{-10pt} \underset{\substack{(f^{1};\psi^{1}_{(x^{1};y^{1})}; \\ f^{2};\psi^{2}_{(x^{2};y^{2})}) \neq \theta}}  \sup \hspace{-3pt} \frac{\sum_{i= 1, 2}|| \hat{L}^{i} \circ \bar{T}^{i}(\varrho^{i}_{n})[f^{i};\psi^{i}_{(x^{i};y^{i})}  ] ||^2}{ || f^{1} ||^{2}_{H^{1}}  + || \psi^{1}_{(x^{1};y^{1})} ||^{2}_{H^1}  +  || f^{2} ||^{2}_{H^{2}}  +  || \psi^{2}_{(x^{2};y^2)} ||^{2}_{H^2}} 
\overset{(a)}{\leq} \hspace{-5 pt} \underset{\substack{(f^{1};\psi^{1}_{(x^{1};y^{1})}; \\ f^{2};\psi^{2}_{(x^{2};y^{2})}) \neq \theta}} \sup \hspace{-3pt}  \frac{\sum_{i= 1, 2}|| \hat{L}^{i} ||\; || \bar{T}^{i}(\varrho^{i}_{n})|| \; (||f^{i}||^2 + ||\psi^{i}_{(x^{i};y^{i})}||^2)}{ || f^{1} ||^{2}_{H^{1}}  +  || \psi^{1}_{(x^{1};y^{1})} ||^{2}_{H^1} +   || f^{2} ||^{2}_{H^{2}}   +   || \psi^{2}_{(x^{2};y^2)} ||^{2}_{H^2}} \hspace{-4pt}\label{Equation 3}\\
&\hspace{-0.8cm} \underset{l \to \infty} \lim \hspace{5pt}|| \bar{T}(\ldots)||^{2} \leq \underset{k \to \infty} \lim \;   \underset{\bar{E}} \sup \;  \sum_{i= 1, 2} || \hat{L}^{i} ||\; || \bar{T}^{i}(\varrho^{i}_{n_{k_l}})|| \; (||f^{i}||^2 + ||\psi^{i}_{(x^{i};y^{i})}||^2) \overset{(b)}{ = }  \underset{\bar{E}} \sup  \underset{k \to \infty} \lim \; \sum_{i= 1, 2} || \hat{L}^{i} ||\; || \bar{T}^{i}(\varrho^{i}_{n_{k_l}})|| \; (\ldots) = \sum_{i= 1, 2} || \hat{L}^{i} ||\;(\ldots) \leq 1  , \label{Equation 4}
\end{align}
\end{center}
\vspace{-0.9cm}
\end{figure*}
\section{Abstraction of ColEst2L Algorithm }\label{Section 3}
In this section, we decompose the algorithm presented in subsection \ref{Algorithm Description} into sub-blocks and find an operator which abstracts each sub-block. We prove the existence of an estimation operator for any iteration $n$  and until iteration $n$ of the algorithm constructively, subsection \ref{Subsection 3.5}, thus solving Problem \ref{Problem 5}. We define the estimation operator for iteration $n$ by composing:
\begin{itemize}
    \item an operator from $H^{1} \times H^{2} \times H^{1} \times H^{2} $ to $H \times H $ which abstracts the estimation at the agents followed the upload to the fusion center: subsections \ref{Subsection 3.1}, \ref{Subsection 3.2};
    \item an operator which captures the fusion problem: subsection \ref{Subsection 3.3} ;
    \item an operator which represents the download from $H$ to $H^{1} \times H^{2}$: subsection \ref{Subsection 3.4}. 
\end{itemize}
\subsection{Operator for Estimation at The Agents}\label{Subsection 3.1}
Given data point $(x,y)$ and current knowledge as $f$, we could define the estimation operator at Agent $i$, $T^{i}: H^{i} \to H^{i} $ as $T^{i}(x;y; \varrho)[f] = \underset{g \in H^{i}}\argmin \; (y -g(x))^2 + \varrho|| f - g ||^2_{H^{i}}$.  Let $M: \mathcal{X} \times \mathbb{R} \to \mathbb{R}^{m \times m}$ be defined as $M^i(x; \varrho) =  \Big( \varrho\mathbf{K^{i}}+ \mathbf{\bar{K}^{i}}(x) \mathbf{\bar{K}^{i^T}}(x) \Big)^{-1}$. Then, from Proposition \ref{Proposition 2} the estimation operator can be defined as:
\begin{align*}
   \hspace{-0.7cm} T^{i}(x;y;\varrho)[f] =\Big(M^{i}(x, \varrho)\big(\varrho K^{i}\boldsymbol{\alpha} + (\mathbf{\bar{K}^{i}}(x)y)^T\big)\Big)^{T}\mathbf{\bar{K}^{i}} (\cdot), 
\end{align*}
where $ f = \boldsymbol{\alpha^T}\mathbf{\bar{K}^{i}}(\cdot)$.  However, $T^{i}(x;y;\varrho)[\cdot]$ is an affine mapping in $f$ . In order to avoid the complications associated with the definition of a norm for a nonlinear operator and its properties specifically with respect to composition of operators, we define the estimation operator as follows:
\begin{definition}\label{Definition 7}
Given data point, $(x;y)$, we define $\psi^{i}_{(x;y)}$ as $ \psi^{i}_{(x;y)} = (\mathbf{\bar{K}^{i}}(x)y)^T\mathbf{\bar{K}^{i}}(\cdot)$. The estimation operator at Agent $i$, $\bar{T}^{i}: H^{i} \times H^{i} \to H^{i}$ is defined as $\bar{T}^{i}(\varrho)[ f;\psi^{i}_{(x,y)} ] =   \Big(M^{i}(x; \varrho)\big(\varrho K^{i}\boldsymbol{\alpha} + \mathbf{\bar{K}^{i}}(x)y\big)\Big)^{T}\mathbf{\bar{K}^{i}} (\cdot)$, where $f= \boldsymbol{\alpha^T}\mathbf{\bar{K}^{i}} (\cdot)$. $\bar{T}^{i}(\varrho)[\cdot]$ is linear and bounded.
\end{definition}
\begin{lemma}\label{Lemma 8}
Let $\{\varrho^i_{n}\}$ be sequence of real numbers such that $\varrho^i_{n} \to \infty$. Then, there is a subsequence $\{\varrho^i_{n_k}\}$ of $\{\varrho^i_{n}\}$ such that $\underset{k \to \infty} \lim || \bar{T}^{i}(\varrho^i_{n_k})|| = 1$.
\end{lemma}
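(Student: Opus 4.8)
The plan is to work directly from the closed-form expression for the norm recorded in \eqref{Equation 1}--\eqref{Equation 2}. Writing $f=\boldsymbol{\alpha}^{T}\mathbf{\bar{K}^{i}}(\cdot)$ and setting, for $\varrho>0$, $A_{\varrho}:=\mathbf{K^{i}}+\varrho^{-1}\mathbf{\bar{K}^{i}}(x)\mathbf{\bar{K}^{i^T}}(x)$ and $v_{\varrho}:=\mathbf{K^{i}}\boldsymbol{\alpha}+\varrho^{-1}\mathbf{\bar{K}^{i}}(x)y$, \eqref{Equation 2} says $\|\bar{T}^{i}(\varrho^{i}_{n})\|^{2}=\sup_{(\boldsymbol{\alpha};x;y)\in E^{i}}\phi^{i}_{n}(\boldsymbol{\alpha};x;y)$ with $\phi^{i}_{n}(\boldsymbol{\alpha};x;y)=v_{\varrho^{i}_{n}}^{T}A_{\varrho^{i}_{n}}^{-1}\mathbf{K^{i}}A_{\varrho^{i}_{n}}^{-1}v_{\varrho^{i}_{n}}$, where $E^{i}$ encodes the normalisation $\|f\|_{H^{i}}^{2}+\|\psi^{i}_{(x,y)}\|_{H^{i}}^{2}=1$; recall $\|f\|_{H^{i}}^{2}=\boldsymbol{\alpha}^{T}\mathbf{K^{i}}\boldsymbol{\alpha}$ and $\|\psi^{i}_{(x,y)}\|_{H^{i}}^{2}=y^{2}\,\mathbf{\bar{K}^{i^T}}(x)\mathbf{K^{i}}\mathbf{\bar{K}^{i}}(x)$ (throughout, $\mathbf{K^{i}}\succ 0$). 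I would then sandwich $\sup_{E^{i}}\phi^{i}_{n}$ between two expressions that converge to $1$ as $\varrho^{i}_{n}\to\infty$.

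For the upper bound, the key point is the Loewner inequality $A_{\varrho}^{-1}\mathbf{K^{i}}A_{\varrho}^{-1}\preceq(\mathbf{K^{i}})^{-1}$: with $B:=\varrho^{-1}\mathbf{\bar{K}^{i}}(x)\mathbf{\bar{K}^{i^T}}(x)\succeq 0$ a direct expansion gives $(\mathbf{K^{i}})^{-1}-A_{\varrho}^{-1}\mathbf{K^{i}}A_{\varrho}^{-1}=A_{\varrho}^{-1}\big(2B+B(\mathbf{K^{i}})^{-1}B\big)A_{\varrho}^{-1}\succeq 0$, hence $\phi^{i}_{n}(\boldsymbol{\alpha};x;y)\le v_{\varrho^{i}_{n}}^{T}(\mathbf{K^{i}})^{-1}v_{\varrho^{i}_{n}}$. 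Expanding the right-hand side into the three terms $\boldsymbol{\alpha}^{T}\mathbf{K^{i}}\boldsymbol{\alpha}$, $\tfrac{2}{\varrho^{i}_{n}}y\,\mathbf{\bar{K}^{i^T}}(x)\boldsymbol{\alpha}$ and $\tfrac{1}{(\varrho^{i}_{n})^{2}}y^{2}\,\mathbf{\bar{K}^{i^T}}(x)(\mathbf{K^{i}})^{-1}\mathbf{\bar{K}^{i}}(x)$, bounding the middle one by Cauchy--Schwarz in the $\mathbf{K^{i}}$-inner product, and using the $x$-uniform spectral estimate $\mathbf{\bar{K}^{i^T}}(x)(\mathbf{K^{i}})^{-1}\mathbf{\bar{K}^{i}}(x)\le\lambda_{\min}(\mathbf{K^{i}})^{-2}\,\mathbf{\bar{K}^{i^T}}(x)\mathbf{K^{i}}\mathbf{\bar{K}^{i}}(x)$, the bound collapses on $E^{i}$ to $\big(\|f\|_{H^{i}}+(\varrho^{i}_{n})^{-1}\lambda_{\min}(\mathbf{K^{i}})^{-1}\|\psi^{i}_{(x,y)}\|_{H^{i}}\big)^{2}\le\big(1+(\varrho^{i}_{n}\lambda_{\min}(\mathbf{K^{i}}))^{-1}\big)^{2}$. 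Hence $\|\bar{T}^{i}(\varrho^{i}_{n})\|^{2}\le\big(1+(\varrho^{i}_{n}\lambda_{\min}(\mathbf{K^{i}}))^{-1}\big)^{2}\to 1$.

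For the lower bound I would evaluate the objective at a single fixed feasible point: since $\mathcal{X}\neq\emptyset$ (Assumption~\ref{Assumption 1}), fix $x_{0}\in\mathcal{X}$ and $\boldsymbol{\alpha}_{0}$ with $\boldsymbol{\alpha}_{0}^{T}\mathbf{K^{i}}\boldsymbol{\alpha}_{0}=1$, and take $y=0$; then $(\boldsymbol{\alpha}_{0};x_{0};0)\in E^{i}$ and $v_{\varrho}=\mathbf{K^{i}}\boldsymbol{\alpha}_{0}$. Since $\varrho^{i}_{n}\to\infty$ forces $A_{\varrho^{i}_{n}}\to\mathbf{K^{i}}$, continuity of matrix inversion gives $\phi^{i}_{n}(\boldsymbol{\alpha}_{0};x_{0};0)\to(\mathbf{K^{i}}\boldsymbol{\alpha}_{0})^{T}(\mathbf{K^{i}})^{-1}(\mathbf{K^{i}}\boldsymbol{\alpha}_{0})=1$, so $\liminf_{n}\|\bar{T}^{i}(\varrho^{i}_{n})\|^{2}\ge 1$. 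Together with the upper bound this yields $\|\bar{T}^{i}(\varrho^{i}_{n})\|\to 1$ along the entire sequence, so the required subsequence exists a fortiori (one may take $n_{k}=k$).

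I expect the only real obstacle to be the upper bound, because $E^{i}$ need not be compact: if $\mathcal{X}$ is unbounded, both $\mathbf{\bar{K}^{i}}(x)$ and the $y$-range admissible on $E^{i}$ are unbounded, so the perturbation $\varrho^{-1}\mathbf{\bar{K}^{i}}(x)y$ is not a priori uniformly small and a naive ``$\phi^{i}_{n}\to\boldsymbol{\alpha}^{T}\mathbf{K^{i}}\boldsymbol{\alpha}$ pointwise'' argument does not pass to the supremum. The two ingredients that defuse this are (i) the Loewner bound $A_{\varrho}^{-1}\mathbf{K^{i}}A_{\varrho}^{-1}\preceq(\mathbf{K^{i}})^{-1}$, which removes the rank-one term from the quadratic form, and (ii) the $x$-uniform comparison of $\mathbf{\bar{K}^{i^T}}(x)(\mathbf{K^{i}})^{-1}\mathbf{\bar{K}^{i}}(x)$ with $\|\psi^{i}_{(x,y)}\|_{H^{i}}^{2}$, so that the normalisation constraint itself dominates the offending cross and quadratic $y$-terms; verifying these two facts, together with the routine Cauchy--Schwarz bookkeeping, is essentially all the computation involved.
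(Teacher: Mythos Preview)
Your proof is correct and takes a genuinely different route from the paper. The paper argues that $E^{i}$ is a compact subset of $\mathbb{R}^{m}\times\mathbb{R}^{I}\times\mathbb{R}$, shows via three auxiliary propositions that $\{\phi^{i}_{n}\}$ is uniformly bounded and equicontinuous on $E^{i}$, and then invokes Arzel\`a--Ascoli to extract a subsequence converging uniformly to $\boldsymbol{\alpha}^{T}\mathbf{K^{i}}\boldsymbol{\alpha}$, which licenses the swap of limit and supremum. You instead sandwich the supremum directly: a pointwise lower bound from the single test point $(\boldsymbol{\alpha}_{0};x_{0};0)$, and an upper bound obtained by the Loewner estimate $A_{\varrho}^{-1}\mathbf{K^{i}}A_{\varrho}^{-1}\preceq(\mathbf{K^{i}})^{-1}$ together with the spectral comparison $(\mathbf{K^{i}})^{-1}\preceq\lambda_{\min}(\mathbf{K^{i}})^{-2}\mathbf{K^{i}}$, which converts every residual term into a multiple of $\|\psi^{i}_{(x,y)}\|_{H^{i}}$ and hence is controlled by the normalisation on $E^{i}$.

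What each approach buys: your argument is more elementary (no Arzel\`a--Ascoli, no supporting propositions), yields convergence of the \emph{full} sequence rather than a subsequence, and---as you correctly point out---does not rely on compactness of $E^{i}$, which is pertinent since Assumption~\ref{Assumption 1} only requires $\mathcal{X}$ to be closed and connected, not bounded. The paper's equicontinuity machinery is more structural and is reused essentially verbatim for the fusion operator in Lemma~\ref{Lemma 12}, but it leans on the compactness claim for $E^{i}$ that your last paragraph rightly flags as delicate. Both arguments need $\mathbf{K^{i}}\succ 0$; the paper handles the singular case by the regularisation $\mathbf{K^{i}}\mapsto\mathbf{K^{i}}+c^{i}_{3}\mathbb{I}_{m}$ (Proposition~\ref{Proposition 33}), and the same fix carries over to your bounds without change.
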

\begin{proof}
Let $f =  \boldsymbol{\alpha^{T}}\mathbf{\bar{K}^{i}} (\cdot)$ and $\psi^{i}_{(x;y)}$ be as defined above. The set $\{f \in H^{i},  x \in \mathcal{X}, y \in \mathbb{R}\;:\: || f ||^{2} + || \psi^{i}_{(x,y)} ||^{2} =1 \}$ is isomorphic to the set $E^{i} =\{\boldsymbol{\alpha} \in \mathbb{R}^{m}, x \in \mathcal{X}, y \in \mathbb{R}: \boldsymbol{\alpha^{T}}\mathbf{K^{i}}\boldsymbol{\alpha^{T}} + y^2 \mathbf{\bar{K}^{i^T}}(x)\mathbf{K^{i}} \mathbf{\bar{K}^{i}}(x)= 1 \}$. $E^{i}$ is a compact subset of $\mathbb{R}^m \times \mathbb{R}^I \times \mathbb{R}$. For any $n$, the norm of the estimation operator is expanded in Equation \ref{Equation 1} and the numerator is expanded in Equation \ref{Equation 2}. From Proposition \ref{Proposition 34}, the product of uniformly bounded equicontinuous sequence of functions is a uniformly bounded equicontinuous sequence of functions. By this argument and Propositions \ref{Proposition 31}, and, \ref{Proposition 33}, the sequences,  $\{\mathbf{K^{i}}\phi^{i}_{n,2}(\cdot)\phi^{i}_{n,1}(\cdot)\}$, $\{\phi^{i^T}_{n,2}(\cdot)\mathbf{K^{i}}\phi^{i}_{n,2}(\cdot)\phi^{i}_{n,1}(\cdot)\}$, $\{ \phi^{i^T}_{n,1}(\cdot) \phi^{i^T}_{n,2}(\cdot) \mathbf{K^{i}}\phi^{i}_{n,2}(\cdot)\phi^{i}_{n,1}(\cdot)\}$ are equicontinuous on $E^{i}$. Thus, $\{\phi^{i}_{n}(\cdot)\}$ is sequence of equicontinuous functions on $E^{i}$ which converges pointwise to $\boldsymbol{\alpha}^{T}\mathbf{K^{i}}\boldsymbol{\alpha}\; \text{on}\; E^{i}$. By the \textit{Arzel\`a Ascoli Theorem} (refer chpater 10, \cite{royden2010real}), there exists a subsequence $\{\phi^{i}_{n_k}(\cdot)\}$ that converges uniformly to $\boldsymbol{\alpha}^{T}\mathbf{K^{i}}\boldsymbol{\alpha}\; \text{on}\; E^{i}$. Hence, the limit and the supremum can be swapped as below, 
\begin{align*}
\underset{k \to \infty}\lim || \bar{T}^{i}(&\varrho^{i}_{n_k})||^2 = \underset{k \to \infty}\lim \; \underset{ E^{i}} \sup \; \phi^{i}_{n_k}(\boldsymbol{\alpha};x;y) = \\
&\underset{E^{i}} \sup \; \underset{k \to \infty}\lim \; \phi^{i}_{n_k}(\boldsymbol{\alpha};x;y) = \underset{E^{i}} \sup \; \boldsymbol{\alpha}^{T}\mathbf{K^{i}}\boldsymbol{\alpha} = 1,
\end{align*}
proving the statement of the Lemma. 
\end{proof}
\subsection{The Multi-Agent Estimation and Upload Operator}\label{Subsection 3.2}
In this subsection, we define an operator which accounts for estimation at the agents and the upload operation comprehensively. This operator is obtained through the composition of the estimation operator and the upload (or communication) operator, Corollary \ref{Corollary 28}, at the agents.
\begin{definition}\label{Definition 9}
The multi-agent estimation and upload operator, $\bar{T}$, is defined as $\bar{T}: H^{1} \times H^{2} \times H^{1} \times H^{2}  \to H \times H$ defined as
\begin{align*}
&\bar{T}(\varrho^{1};\varrho^{2})[f^{1}; f^{2};\psi^{1}_{(x^{1};y^{1})}; \psi^{2}_{(x^{2};y^{2})}] =  \\
&\Big[\hat{L}^{1} \circ \bar{T}^{1}(\varrho^{1}) [  f^{1};  \psi^{1}_{(x^{1};y^{1})}]; \hat{L}^{2} \circ \bar{T}^{2}(\varrho^{2})[f^{2}; \psi^{2}_{(x^{2};y^{2})}]\Big].
\end{align*}
\end{definition}
\begin{lemma}\label{Lemma 10}
Let $\{\varrho^{1}_{n}\}$ and $\{\varrho^{2}_{n}\}$ be sequences which diverge to $\infty$. Then, there exists  a subsequence of $\{\bar{T}(\varrho^{1}_{n}; \varrho^2_n)\}$, $\{\bar{T}(\varrho^{1}_{n_k}; \varrho^2_{n_K})\}$ such that $\underset{k \to \infty} \lim || \bar{T}(\varrho^{1}_{n_k}; \varrho^2_{n_K})|| \leq  1 $  
\end{lemma}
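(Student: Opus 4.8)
The plan is to deduce the bound from Lemma~\ref{Lemma 8} by exploiting the product structure of $\bar{T}$, the submultiplicativity of the operator norm, and the fact (Corollary~\ref{Corollary 28}) that each upload operator $\hat{L}^{i}$ is a contraction; the only genuinely analytic point is an interchange of a limit with a supremum, which will be licensed by the equicontinuity already established in the proof of Lemma~\ref{Lemma 8}.

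First I would write $|| \bar{T}(\varrho^{1}_{n};\varrho^{2}_{n}) ||^{2}$ as the supremum over the unit sphere $\bar{E} \subset H^{1} \times H^{2} \times H^{1} \times H^{2}$ of $\sum_{i=1,2} || \hat{L}^{i} \circ \bar{T}^{i}(\varrho^{i}_{n})[f^{i};\psi^{i}_{(x^{i};y^{i})}] ||^{2}$, using that the codomain $H \times H$ carries the direct-sum norm. Applying submultiplicativity of the operator norm termwise and bounding the norm of the input vector of each agent by $|| f^{i} ||^{2} + || \psi^{i}_{(x^{i};y^{i})} ||^{2}$ (exactly as spelled out in step $(a)$ of Equation~\ref{Equation 3}), one obtains that $|| \bar{T}(\varrho^{1}_{n};\varrho^{2}_{n}) ||^{2}$ is at most the supremum over $\bar{E}$ of $\sum_{i=1,2} || \hat{L}^{i} ||\, || \bar{T}^{i}(\varrho^{i}_{n}) ||\,(|| f^{i} ||^{2} + || \psi^{i}_{(x^{i};y^{i})} ||^{2})$.

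Next I would extract a common subsequence: Lemma~\ref{Lemma 8} applied to $\{\varrho^{1}_{n}\}$ yields a subsequence along which $|| \bar{T}^{1} || \to 1$, and a second application to the corresponding subsequence of $\{\varrho^{2}_{n}\}$ yields a further subsequence $\{n_{k_{l}}\}$ along which $|| \bar{T}^{1}(\varrho^{1}_{n_{k_{l}}}) || \to 1$ and $|| \bar{T}^{2}(\varrho^{2}_{n_{k_{l}}}) || \to 1$ simultaneously. Along $\{n_{k_{l}}\}$ the scalar functions $\phi^{i}_{n_{k_{l}}}$ appearing under the supremum are finite sums and products of the uniformly bounded equicontinuous families on the compact sets $E^{i}$ identified in the proof of Lemma~\ref{Lemma 8} (via Propositions~\ref{Proposition 31}, \ref{Proposition 33}, and~\ref{Proposition 34}), hence by the Arzel\`a Ascoli theorem they converge uniformly on $\bar{E}$; this is what permits swapping the limit and the supremum, as in Equation~\ref{Equation 4}. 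After the swap the summand converges pointwise to $\sum_{i=1,2} || \hat{L}^{i} ||\,(|| f^{i} ||^{2} + || \psi^{i}_{(x^{i};y^{i})} ||^{2})$, whose supremum over $\bar{E}$ is at most $\max_{i} || \hat{L}^{i} || \le 1$ since each $\hat{L}^{i}$ is a contraction; therefore $\lim_{l \to \infty} || \bar{T}(\varrho^{1}_{n_{k_{l}}};\varrho^{2}_{n_{k_{l}}}) ||^{2} \le 1$, and the lemma follows.

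The step I expect to be the main obstacle is the rigorous justification of the limit/supremum interchange: one must verify that $\{\sum_{i} || \hat{L}^{i} ||\,\phi^{i}_{n_{k_{l}}}\}$, viewed as scalar functions on the compact product set $\bar{E}$, is equicontinuous and uniformly bounded so that Arzel\`a Ascoli delivers a uniformly convergent subsequence and hence the commutation; the product-and-sum structure inherited from the agent-wise analysis together with compactness of $\bar{E}$ should make this routine, but it carries the real content. A minor but necessary point is that the two invocations of Lemma~\ref{Lemma 8} must be nested so that a single index sequence $\{n_{k_{l}}\}$ works for both agents at once, and that $|| \hat{L}^{i} || \le 1$ is used to pass from $\max_{i} || \hat{L}^{i} ||$ to the final bound $1$.
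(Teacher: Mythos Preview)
Your proposal is correct and follows essentially the same route as the paper: bound $\|\bar T\|^{2}$ via Equation~\ref{Equation 3}, nest two applications of Lemma~\ref{Lemma 8} to obtain a common subsequence along which both $\|\bar T^{i}\|\to 1$, and invoke Arzel\`a--Ascoli on the compact set $\bar E$ to swap limit and supremum as in Equation~\ref{Equation 4}, finishing with $\|\hat L^{i}\|\le 1$. The only minor imprecision is that after step $(a)$ the functions under the supremum are $\sum_{i}\|\hat L^{i}\|\,\|\bar T^{i}(\varrho^{i}_{n_{k_{l}}})\|\,(\|f^{i}\|^{2}+\|\psi^{i}_{(x^{i};y^{i})}\|^{2})$ with purely scalar $l$-dependence, so their equicontinuity is immediate from boundedness of $\|\bar T^{i}(\varrho^{i}_{n_{k_{l}}})\|$ rather than from Propositions~\ref{Proposition 31}--\ref{Proposition 34} directly; the paper makes this simplification explicit via the sequence $\tilde\phi_{l}$.
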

\begin{proof}
The square of the norm of the multi-agent estimation operator, $|| \bar{T}(\varrho^{1}_{n};\varrho^{2}_{n})||^{2}$, is expanded in Equation \ref{Equation 3}. Inequality $(a)$ follows from the property of operator norms, $|| T \circ L || \leq || T || \; || L || $. Consider the two subsequences from Lemma \ref{Lemma 8}. From the subsequence $\{n_k\}$ for Agent 1, we can extract a subsequence $\{n_{k_l}\}$ for Agent 2  such that  $\underset{l \to \infty} \lim || \bar{T}^{2}(\varrho^i_{n_{k_l}})|| = 1$. Hence, there exits a common subsequence such that $\underset{l \to \infty} \lim || \bar{T}^{i}(\varrho^i_{n_{k_l}})|| = 1, i=1,2$.  Consider the set $\bar{E} = \{ (f^{1};\psi^{1}_{(x^{1};y^{1})}; f^{2};\psi^{2}_{(x^{2};y^{2})}) \in H^{1} \times H^{2}\times H^{1} \times H^{2} : \sum_{i=1,2} || f^{i} ||^{2}_{H^{i}}  + || \psi^{i}_{(x^{i};y^{i})} ||^{2}_{H^i} =1\}$. This is a compact subset of its domain. Moreover since $||\hat{L}^i || \leq 1$ (Corollary \ref{Corollary 28}) and  $|| \bar{T}^{i}(\varrho^i_{n_{k_l}})|| $ is bounded, the sequence
\begin{align*}
    \hspace{-0.7cm}\tilde{\phi}_{l} = || \hat{L}^{1} ||\; &|| \bar{T}^{1}(\varrho^{1}_{n_{k_l}})|| \; (||f^{1}||^2 + ||\psi^{1}_{(x^{1};y^{1})}||^2) + \\ &|| \hat{L}^{2} ||\; || \bar{T}^{2}(\varrho^{2}_{n_{k_l}})|| \; (||f^{2}||^2 + ||\psi^{2}_{(x^{2};y^{2})}||^2),
\end{align*}
is equicontinuous on $\bar{E}$. Hence by the \textit{Arzel\`a Ascoli Theorem}, there is a subsequence of $\{\tilde{\phi}_{l}\}$ which converges uniformly to $\sum_{i=1,2} || \hat{L}^{i} || (||f^{i}||^2 + ||\psi^{i}_{(x^{i};y^{i})}||^2) $. By swapping the limit and the supremum, as done in equality $(b)$, an upper bound on the limit of $|| \bar{T}(\varrho^{1}_{n_{k_l}};\varrho^{2}_{n_{k_l}})||^{2}$ as $l$ tends to infinity has been derived in Equation \ref{Equation 4}.  For brevity, we continue to refer to the sequences $\{\bar{T}^{i}(\varrho^{i}_{n_{k_l}}\}$ in Equation  \ref{Equation 4} even though the result only holds for a subsequence of it.  
\end{proof}
We note that mapping from $(x;y) \mapsto \psi^{i}_{(x;y)}$ is a nonlinear map however bounded under suitable assumptions on $K$. This mapping has not been incorporated into the definition of $\bar{T}$ to avoid complications associated with nonlinearity. 
\subsection{Operator for Fusion at The Fusion Center}\label{Subsection 3.3}
Given the functions $f^{1} \in \{H \cap H^{1}\}$ and $f^{2} \in \{H\cap H^{2}\}$ uploaded to fusion center, the closed form expression for the fused function from Proposition \ref{Proposition 3} is utilized to define the fusion operator as follows:
\begin{definition}\label{Definition 11}
The fusion operator at the fusion center is $T : \{H\cap H^{1}\} \times \{H \cap H^{2}\}  \to H$ defined as $T(\varrho)[f^{1},f^{2}] = \boldsymbol{\alpha^T}\mathbf{\bar{K}}$ where $\boldsymbol{\alpha}$ is 
\begin{align*}
\boldsymbol{\alpha} = \Big(\mathbf{K^{T}}\mathbf{K} + \varrho\mathbf{K}\Big)^{-1}
\Bigg[\mathbf{K^{T}} 
\begin{bmatrix}
\Big(\mathbf{K^{1}} + \varrho \mathbb{I}_{m}\Big) \boldsymbol{\alpha^{1}} \\
\Big(\mathbf{K^{2}} + \varrho\mathbb{I}_{m} \Big) \boldsymbol{\alpha^{2}}
\end{bmatrix} \Bigg], 
\end{align*}
where $f^{i} = \boldsymbol{\alpha^{i}}\mathbf{\bar{K}^{i}}(\cdot) = \boldsymbol{\hat{\alpha}^{i^T}}\mathbf{\check{K}^{i}}(\cdot)= \boldsymbol{\alpha^{i^T}}M^{i^T}\mathbf{\check{K}^{i}}(\cdot)$. 
\end{definition}
We note that, in  the definition of $T$ the domain is restricted to product of the subspaces of $H$ which contain $H^{1}$ and $H^{2}$. This is because the functions received by the fusion center belongs to these subspaces. Further, the closed form expression for the fused function holds if and only if the input functions belong to the corresponding subspaces. This translates into the definition of the norm as well as in the proof below.
\begin{lemma}\label{Lemma 12}
Let 
\begin{align*}
\hspace{2cm}
\mathbf{K} =\begin{bmatrix} 
&\mathbf{\tilde{K}^{1}} &\mathbf{K^{12}} \\ 
&\mathbf{K^{{12}^T}} &\mathbf{\tilde{K}^{2}} 
\end{bmatrix} ,
\end{align*}
be such that $\max(\lambda_{\max}(\mathbf{\tilde{K}^{1}}), \lambda_{\max}(\mathbf{\tilde{K}^{2}})) \leq 1$ and $\varrho_{n}$ be monotone sequence of positive real numbers diverging to $\infty$. Then, $\underset{ n \to \infty} \lim || T(\varrho_n) || \leq 1$. 
\end{lemma}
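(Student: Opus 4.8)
The plan is to expand $\|T(\varrho_n)\|$ as a supremum of a Rayleigh-type quotient over a compact set, exactly as was done for $\bar T^i$ in Equations \eqref{Equation 1}--\eqref{Equation 2}, and then pass the limit through the supremum via an Arzel\`a--Ascoli argument. First I would write, for $f^i=\boldsymbol{\alpha^{i^T}}\mathbf{\bar K^i}(\cdot)$, the output $T(\varrho_n)[f^1,f^2]=\boldsymbol{\alpha^T}\mathbf{\bar K}$ with $\boldsymbol{\alpha}=(\mathbf{K^T K}+\varrho_n\mathbf{K})^{-1}\mathbf{K^T}\big[(\mathbf{K^1}+\varrho_n\mathbb{I}_m)\boldsymbol{\alpha^1};(\mathbf{K^2}+\varrho_n\mathbb{I}_m)\boldsymbol{\alpha^2}\big]$, so that $\|T(\varrho_n)[f^1,f^2]\|_H^2=\boldsymbol{\alpha^T}\mathbf{K}\boldsymbol{\alpha}$. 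Dividing numerator and denominator by $\varrho_n$ (as in Equation \eqref{Equation 2}) replaces $(\mathbf{K^T K}+\varrho_n\mathbf{K})^{-1}$ by $\varrho_n^{-1}(\varrho_n^{-1}\mathbf{K^T K}+\mathbf{K})^{-1}$ and $(\mathbf{K^i}+\varrho_n\mathbb{I}_m)$ by $\varrho_n(\varrho_n^{-1}\mathbf{K^i}+\mathbb{I}_m)$; the $\varrho_n$ factors cancel, leaving a quotient
\begin{align*}
\|T(\varrho_n)\|^2 = \underset{E}{\sup}\ \psi_n(\boldsymbol{\alpha^1};\boldsymbol{\alpha^2}),
\end{align*}
where $E=\{(\boldsymbol{\alpha^1};\boldsymbol{\alpha^2}):\|f^1\|_{H^1}^2+\|f^2\|_{H^2}^2=1\}$ is compact (this is where the restriction of the domain of $T$ to $\{H\cap H^1\}\times\{H\cap H^2\}$ matters, as noted after Definition \ref{Definition 11}: the quantities $\|f^i\|_{H^i}$ and $\|f^i\|_H$ are related by the fixed matrices $\mathbf{K^i}$, $M^i$, and the identification is well defined only on these subspaces).

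Next I would identify the pointwise limit of $\psi_n$ as $\varrho_n\to\infty$. Letting $\varrho_n^{-1}\to 0$, the factor $(\varrho_n^{-1}\mathbf{K^T K}+\mathbf{K})^{-1}\to\mathbf{K}^{-1}$ and $(\varrho_n^{-1}\mathbf{K^i}+\mathbb{I}_m)\to\mathbb{I}_m$, so $\boldsymbol{\alpha}\to\mathbf{K}^{-1}\mathbf{K^T}[\boldsymbol{\alpha^1};\boldsymbol{\alpha^2}]$ and hence $\psi_n\to [\boldsymbol{\alpha^1};\boldsymbol{\alpha^2}]^T\mathbf{K}\,\mathbf{K}^{-1}\mathbf{K^T}\mathbf{K}^{-1}\mathbf{K}\,\mathbf{K}^{-1}\mathbf{K^T}[\boldsymbol{\alpha^1};\boldsymbol{\alpha^2}]=[\boldsymbol{\alpha^1};\boldsymbol{\alpha^2}]^T\mathbf{K^T}\mathbf{K}^{-1}\mathbf{K^T}[\boldsymbol{\alpha^1};\boldsymbol{\alpha^2}]$ pointwise on $E$ (using symmetry of $\mathbf{K}$). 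This limiting quadratic form, written in block form with $\mathbf{K}=\begin{bmatrix}\mathbf{\tilde K^1}&\mathbf{K^{12}}\\\mathbf{K^{12^T}}&\mathbf{\tilde K^2}\end{bmatrix}$, must then be bounded by $\|f^1\|_{H}^2+\|f^2\|_H^2$ on $E$; the hypothesis $\max(\lambda_{\max}(\mathbf{\tilde K^1}),\lambda_{\max}(\mathbf{\tilde K^2}))\le 1$ is precisely what forces $[\boldsymbol{\alpha^1};\boldsymbol{\alpha^2}]^T\mathbf{K^T}\mathbf{K}^{-1}\mathbf{K^T}[\boldsymbol{\alpha^1};\boldsymbol{\alpha^2}]\le 1$ on the unit set. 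I would argue this by noting that $\mathbf{\tilde K^i}$ is (congruent to) the Gram matrix governing $\|f^i\|_H^2$, so $\lambda_{\max}(\mathbf{\tilde K^i})\le 1$ gives $\|f^i\|_H^2\le\|f^i\|_{H^i}^2$ after the appropriate change of basis, and the block operator $\mathbf{K^T}\mathbf{K}^{-1}\mathbf{K^T}$ is dominated in the PSD order by the block-diagonal of $\mathbf{K}$; summing over the two blocks and using the normalization $\sum_i\|f^i\|_{H^i}^2=1$ yields the bound $1$.

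Finally I would invoke the same machinery as in Lemmas \ref{Lemma 8} and \ref{Lemma 10}: by Propositions \ref{Proposition 31}, \ref{Proposition 33}, and \ref{Proposition 34} the matrix-valued functions $\varrho_n\mapsto(\varrho_n^{-1}\mathbf{K^T K}+\mathbf{K})^{-1}$ and $\varrho_n\mapsto(\varrho_n^{-1}\mathbf{K^i}+\mathbb{I}_m)$ are uniformly bounded and equicontinuous on $E$, hence so are their products, so $\{\psi_n\}$ is a uniformly bounded equicontinuous sequence on the compact set $E$ converging pointwise to the limit above; Arzel\`a--Ascoli then gives a uniformly convergent subsequence, and since $\varrho_n$ is monotone the whole sequence converges, allowing the limit and supremum to be interchanged:
\begin{align*}
\underset{n\to\infty}{\lim}\|T(\varrho_n)\|^2 = \underset{E}{\sup}\ \underset{n\to\infty}{\lim}\ \psi_n = \underset{E}{\sup}\ [\boldsymbol{\alpha^1};\boldsymbol{\alpha^2}]^T\mathbf{K^T}\mathbf{K}^{-1}\mathbf{K^T}[\boldsymbol{\alpha^1};\boldsymbol{\alpha^2}] \le 1.
\end{align*}
The main obstacle I anticipate is the second step: carefully translating the spectral hypothesis $\max(\lambda_{\max}(\mathbf{\tilde K^1}),\lambda_{\max}(\mathbf{\tilde K^2}))\le1$ into the operator-norm bound on the limiting block quadratic form, since this requires keeping precise track of the several bases ($\mathbf{\bar K^i}$ versus $\mathbf{\check K^i}$, mediated by $M^i$) in which $\|f^i\|_{H^i}$, $\|f^i\|_H$, and the blocks of $\mathbf{K}$ are each expressed, and then combining the two blocks without losing the constant $1$. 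Everything else is a routine repetition of the Arzel\`a--Ascoli argument already used twice in this section.
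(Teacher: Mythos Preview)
Your overall architecture matches the paper: write $\|T(\varrho_n)\|^2$ as a supremum of a quadratic form $\phi_n$ over a compact parameter set, identify the pointwise limit $\boldsymbol{\alpha}^T\mathbf{K}\boldsymbol{\alpha}$ (your expression $\mathbf{K}^T\mathbf{K}^{-1}\mathbf{K}^T$ collapses to $\mathbf{K}$), interchange limit and supremum, and finally bound the supremum by $1$. Two technical mismatches are worth flagging. First, the domain of $T$ is $\{H\cap H^1\}\times\{H\cap H^2\}\subset H\times H$, so the normalization must be $\|f^1\|_H^2+\|f^2\|_H^2=1$, not $\|f^i\|_{H^i}$; the paper's set $E$ in Equation \eqref{Equation 5} encodes exactly this via the $M^i$ matrices. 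Second, the paper does not reuse Propositions \ref{Proposition 31}--\ref{Proposition 34} and Arzel\`a--Ascoli here; since $\phi_n$ depends on $(\boldsymbol{\alpha^1};\boldsymbol{\alpha^2})$ only through fixed linear maps, it instead proves uniform convergence directly by a Dini-type argument (Proposition \ref{Proposition 35}). Your route would also work, but it is heavier than needed.

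The genuine gap is the step you yourself flag as the obstacle. Your proposed mechanism --- that ``the block operator $\mathbf{K}^T\mathbf{K}^{-1}\mathbf{K}^T$ is dominated in the PSD order by the block-diagonal of $\mathbf{K}$'' --- is false: since $\mathbf{K}$ is symmetric this says $\mathbf{K}\preceq\mathrm{diag}(\mathbf{\tilde K^1},\mathbf{\tilde K^2})$, i.e.\ $\begin{bmatrix}0&-\mathbf{K^{12}}\\-\mathbf{K^{12^T}}&0\end{bmatrix}\succeq 0$, which fails whenever $\mathbf{K^{12}}\neq 0$. The paper instead goes after $\lambda_{\max}(\mathbf{K})$ directly by the Schur-complement factorization in Equation \eqref{Equation 7}, $\mathbf{K}=P^T D P$ with $D=\mathrm{diag}\big(\mathbf{\tilde K^1}-\mathbf{K^{12}}\mathbf{\tilde K^{2^{-1}}}\mathbf{K^{12^T}},\,\mathbf{\tilde K^2}\big)$, and argues that the eigenvalues of $D$ are bounded by $\max(\lambda_{\max}(\mathbf{\tilde K^1}),\lambda_{\max}(\mathbf{\tilde K^2}))\le 1$ because $\mathbf{K^{12}}\mathbf{\tilde K^{2^{-1}}}\mathbf{K^{12^T}}\succeq 0$. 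That Schur step is the missing idea in your proposal; once you have $\sup_E \boldsymbol{\alpha}^T\mathbf{K}\boldsymbol{\alpha}\le\lambda_{\max}(\mathbf{K})\le 1$, the rest of your outline goes through.
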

\begin{proof}
The set $\{(f^1;f^2) \in \{H \cap H^{1}\} \times \{ H \cap H^{2}\} : || f^{1} ||^2_{H} + || f^{2} ||^2_{H} =1 \}$ is isomorphic to the set $E$ defined in Equation \ref{Equation 5}. The norm of the estimation operator at the fusion center at iteration $n$ is
\begin{align*}
    || T(\varrho_{n}) ||^2 = \underset{(\boldsymbol{\alpha^{1}};\boldsymbol{\alpha^{2}}) \in E} \sup \;   \phi_{n}(\boldsymbol{\alpha^{1}};\boldsymbol{\alpha^{2}}) 
\end{align*}
where $\phi_{n}$ is defined in Equation \ref{Equation 6}. From Proposition \ref{Proposition 35}, we conclude that $\{\mathbf{K}\phi_{n,2}(\cdot)\phi_{n,1}(\cdot)\}$, $\{\phi^{T}_{n,2}(\cdot)\mathbf{K}\phi_{n,2}(\cdot)\phi_{n,1}(\cdot)\}$ and $\{\phi^{T}_{n,1}(\cdot)\phi^{T}_{n,2}(\cdot)\mathbf{K}\phi_{n,2}(\cdot)  \phi_{n,1}(\cdot)\}$ are uniformly bounded, uniformly converging sequences of continuous functions. Thus, the sequence of continuous functions $\{\phi_{n}(\cdot, \cdot)\}$ converges uniformly to  the continuous function $ \phi(\boldsymbol{\alpha^{1}};\boldsymbol{\alpha^{2}}) =\boldsymbol{\alpha^T}\mathbf{K} \boldsymbol{\alpha}$ where $\boldsymbol{\alpha}= (\boldsymbol{\alpha^{1}};\boldsymbol{\alpha^{2}})$. Thus, swapping limits and supremum, we get
\begin{align*}
\underset{n \to \infty} \lim || T(\varrho_{n}) ||^2 =& \underset{n \to \infty} \lim \underset{(\boldsymbol{\alpha^{1}};\boldsymbol{\alpha^{2}}) \in E} \sup \;   \phi_{n}(\boldsymbol{\alpha^{1}};\boldsymbol{\alpha^{2}}) \\  =&
\underset{(\boldsymbol{\alpha^{1}};\boldsymbol{\alpha^{2}}) \in E} \sup  \underset{n \to \infty} \lim  \phi_{n}(\boldsymbol{\alpha^{1}};\boldsymbol{\alpha^{2}}) \\
= & \underset{(\boldsymbol{\alpha^{1}};\boldsymbol{\alpha^{2}}) \in E} \sup \; \boldsymbol{\alpha^T}\mathbf{K} \boldsymbol{\alpha} \leq \lambda_{\text{max}}(\mathbf{K}).
\end{align*}
$\lambda_{\text{max}}(\mathbf{K}) \leq 1$ would be sufficient for the statement in the  current proposition to hold. Consider the block decomposition of $\mathbf{K}$ in Equation \ref{Equation 7} obtained by invoking \textit{Schur's complement}. Utilizing this decomposition of $\mathbf{K}$ we obtain 
\begin{align*}
&\lambda_{\text{max}}(\mathbf{K}) =\underset{x \neq 0} \sup \;  \frac{x^{T}\mathbf{K}x}{x^Tx} = \underset{x \neq 0} \sup \;  \frac{x^{T}P^T D Px}{x^Tx}  \\
&\overset{(a)}{=} \underset{y \neq 0} \sup \;  \frac{y^{T} D y}{y^Ty} = \lambda_{\text{max}}(D) \overset{(b)}{\leq} \lambda_{\text{max}}(\mathbf{\tilde{K}^{1}}),\lambda_{\text{max}}(\mathbf{\tilde{K}^{2}}).
\end{align*}
Equality $(a)$ is true because $P$ is a full rank matrix ($\mathcal{R}(P) = \mathbb{R}^{2m}$) with all eigenvalues equal to $1$. Note that $\mathbf{K^{12}}\mathbf{\tilde{K}^{2^{-1}}}\mathbf{K^{{12}^T}}$ is positive definite. Thus, 
\begin{align*}
    \lambda_{\max}(\mathbf{\tilde{K}^{1}}  - \mathbf{K^{12}}\mathbf{\tilde{K}^{2^{-1}}}\mathbf{K^{{12}^T}}) \leq \lambda_{\max}(\mathbf{\tilde{K}^{1}})
\end{align*}
The set of eigenvalues of $D$ is the union of the set of eigenvalues of $\mathbf{\tilde{K}^{1}}  - \mathbf{K^{12}}\mathbf{\tilde{K}^{2^{-1}}}\mathbf{K^{{12}^T}}$ and $\mathbf{\tilde{K}^{2}}$ which implies that,
\begin{align*}
    \lambda_{\text{max}}(D) \leq \lambda_{\text{max}}(\mathbf{\tilde{K}^{1}} - \mathbf{K^{12}}\mathbf{\tilde{K}^{2^{-1}}}\mathbf{K^{{12}^T}}),\lambda_{\text{max}}(\mathbf{\tilde{K}^{2}}). 
\end{align*}
Hence, inequality $(b)$ follows. Thus, 
\begin{align*}
    \underset{n \to \infty} \lim || T(\varrho_{n}) ||^2 \leq \max(\lambda_{\text{max}}(\mathbf{\tilde{K}^{1}}),\lambda_{\text{max}}(\mathbf{\tilde{K}^{2}})) \leq 1
\end{align*}
It is not necessary that $\lambda_{\max}(\mathbf{K}) \leq 1 $ or $\max(\lambda_{\max}(\mathbf{\tilde{K}^{1}}), \\ \lambda_{\max}(\mathbf{\tilde{K}^{2}})) \leq 1$. However, by suitably normalizing $\mathbf{K}$ or $\mathbf{\tilde{K}^{1}}, \mathbf{\tilde{K}^{2}}$ matrices for large $n$, we can ensure that $\lambda_{\text{max}}(\mathbf{K}) \leq 1$ and thus guarantee that $|| T(\varrho_{n_{k_l}})||^2 $ converges to a value less than or equal to $1$.
\end{proof}
\begin{figure*}
\begin{align}
&E = \{\boldsymbol{\hat{\alpha}} = (\boldsymbol{\hat{\alpha}^{1}};\boldsymbol{\hat{\alpha}^{2}}) \in \mathbb{R}^{2m}: \exists \boldsymbol{\alpha^{i}} \text{ s.t } \boldsymbol{\hat{\alpha}^{i}} = M^{i}\boldsymbol{\alpha^{i}}, \text{ and }    \boldsymbol{\hat{\alpha}^{1^T}}\mathbf{\breve{K}^{1}} \boldsymbol{\hat{\alpha}^{1}} + \boldsymbol{\hat{\alpha}^{2^T}}\mathbf{\breve{K}^{2}} \boldsymbol{\hat{\alpha}^{2}} = 1 \} \nonumber \\
&\hspace{3.5cm} = \{ \boldsymbol{\alpha} = (\boldsymbol{\alpha^{1}};\boldsymbol{\alpha^{2}}) \in \mathbb{R}^{2m}: \boldsymbol{\alpha^{1^T}}M^{1^T}\mathbf{K^{1}} M^{1}\boldsymbol{\alpha^{1}} + \boldsymbol{\alpha^{2^T}}M^{2^T}\mathbf{K^{2}} M^{2}\boldsymbol{\alpha^{2}}  = 1\} \label{Equation 5} \\
&\phi_{n}(\boldsymbol{\alpha^{1}};\boldsymbol{\alpha^{2}}) = \bigg[\Big[\boldsymbol{\alpha^{1^T}} \Big(\frac{\mathbf{K^{1}}}{\varrho_n} +  \mathbb{I}_{m}\Big), \boldsymbol{\alpha^{2^T}} \Big(\frac{\mathbf{K^{2}}}{\varrho_n} +  \mathbb{I}_{m}\Big) \Big] \mathbf{K} \bigg] \Big(\frac{\mathbf{K^{T}}\mathbf{K}}{\varrho_n} + \mathbf{K}\Big)^{-1} \mathbf{K}
\underbrace{\Big(\frac{\mathbf{K^{T}}\mathbf{K}}{\varrho_n} + \mathbf{K}\Big)^{-1}}_{\phi_{n,2}(\cdot)}\underbrace{\Bigg[\mathbf{K^{T}} 
\begin{bmatrix}
\Big(\frac{\mathbf{K^{1}}}{\varrho_n} +  \mathbb{I}_{m}\Big) \boldsymbol{\alpha^{1}} \\
\Big(\frac{\mathbf{K^{2}}}{\varrho_n} + \mathbb{I}_{m} \Big) \boldsymbol{\alpha^{2}}
\end{bmatrix} \Bigg]}_{\phi_{n,1}(\cdot)}.\label{Equation 6} \\
&\hspace{1cm}\mathbf{K} = 
\begin{bmatrix} 
&\mathbf{\tilde{K}^{1}} &\mathbf{K^{12}} \\ 
&\mathbf{K^{{12}^T}} &\mathbf{\tilde{K}^{2}} 
\end{bmatrix} =
\underbrace{\begin{bmatrix} 
&\mathbb{I}_{m} &\mathbf{K^{12}}\mathbf{\tilde{K}^{2^{-1}}} \\ 
&0 &\mathbb{I}_{m}
\end{bmatrix}}_{P^T}\hspace{-4pt}
\underbrace{\begin{bmatrix}
&\mathbf{\tilde{K}^{1}}  - \mathbf{K^{12}}\mathbf{\tilde{K}^{2^{-1}}}\mathbf{K^{{12}^T}} &0\\
&0 &\mathbf{\tilde{K}^{2}} 
\end{bmatrix}}_{D}\hspace{-4pt}
\underbrace{\begin{bmatrix}
&\mathbb{I}_{m} &\mathbf{K^{12}}\mathbf{\tilde{K}^{2^{-1}}} \\ 
&0 &\mathbb{I}_{m}
\end{bmatrix}^{T}}_{P}\label{Equation 7}\\
&\hspace{-0.8cm}c^{2}_{d}|| \hat{T} ||^{2} = \underset{(f \neq \theta)} \sup \; \frac{ || \sqrt{\bar{L}^{1}} \circ \Pi_{\mathcal{N}\big(\sqrt{\bar{L}^1}\big)^{\perp}}\big(f\big) ||^{2}_{H^1} + || \sqrt{\bar{L}^{2}} \circ \Pi_{\mathcal{N}\big(\sqrt{\bar{L}^2}\big)^{\perp}}\big(f\big) ||^{2}_{H^2}}{ || f ||^{2}_{H}} \overset{(c)}{=}  \underset{(f \neq \theta)} \sup \frac{||\Pi_{\mathcal{N}\big(\sqrt{\bar{L}^1}\big)^{\perp}}\big(f\big) ||^{2}_{H} + || \Pi_{\mathcal{N}\big(\sqrt{\bar{L}^2}\big)^{\perp}}\big(f\big) ||^{2}_{H}}{ || f ||^{2}_{H}} = c^{2}_d.\label{Equation 8}\\
&\hspace{-0.8cm}\bar{\mathbb{T}}_{n}(\{(\varrho^{1}_k;\varrho^{2}_k; \varrho_k)\}^{n}_{k=1})\Big[f^{1}_{0};f^{2}_{0};\{\psi^{1}_{(x^{1}_j;y^{1}_j)}; \psi^{2}_{(x^{2}_j;y^{2}_j)}\}^n_{j=1}\Big] = \nonumber \\ 
&\hspace{5cm} \mathbb{T}_{n}(\varrho^{1}_{n}; \varrho^{2}_{n}; \varrho_n)\Big[\bar{\mathbb{T}}_{n-1}(\{(\varrho^{1}_k;\varrho^{2}_k; \varrho_k)\}^{n-1}_{k=1}\})\Big[\bar{f}^{1}_{n-2};\bar{f}^{2}_{n-2} ;\{\psi^{1}_{(x^{1}_{j};y^{1}_{j})}; \psi^{2}_{(x^{2}_{j};y^{2}_{j})}\}^{n-1}_{j=1}\} \Big];\psi^{1}_{(x^{1}_{n};y^{1}_{n})}; \psi^{2}_{(x^{2}_{n};y^{2}_{n})}\Big]\;, \nonumber\\
&\hspace{-0.8cm}\bar{\mathbb{T}}_{1}(\varrho^{1}_1;\varrho^{2}_1; \varrho_1)\Big[f^{1}_{0};f^{2}_{0};\psi^{1}_{(x^{1}_1;y^{1}_1)}; \psi^{2}_{(x^{2}_1;y^{2}_1)}\Big] =  \mathbb{T}_{1}(\varrho^{1}_1;\varrho^{2}_1; \varrho_1)\Big[f^{1}_{0};f^{2}_{0};\psi^{1}_{(x^{1}_1;y^{1}_1)}; \psi^{2}_{(x^{2}_1;y^{2}_1)}\Big], \label{Equation 9} 
\end{align}
\vspace{-0.9cm}
\end{figure*}
\subsection{The Multi-Agent Download Operator}\label{Subsection 3.4}
Given the fused function in $H$, it is to be downloaded onto $H^{1}$ and $H^2$. The download operator from $H$ to $H^{i}$ is defined in Lemma \ref{Lemma 30}. Since every iteration of algorithm begins with estimates in $H^{1} \times H^2$, the iteration is to conclude with estimates in $H^{1} \times H^{2}$. We define the multi-agent download operator by concatenating the downloads on to $H^1$ and $H^2$ followed by normalization: 
\begin{definition}\label{Definition 13}
The multi-agent download operator, $\hat{T}: H \to H^{1} \times H^{2}$, is defined as 
\begin{align*}
\hat{T}(f) = \frac{1}{c_d} \Big[\sqrt{\bar{L}^{1}} \circ \Pi_{\mathcal{N}\big(\sqrt{\bar{L}^1}\big)^{\perp}}&\big(f\big) ; \\
&\sqrt{\bar{L}^{2}} \circ \Pi_{\mathcal{N}\big(\sqrt{\bar{L}^2}\big)^{\perp}}\big(f\big)  \Big] 
\end{align*}
where $\bar{L}^{i}$ is defined in Lemma \ref{Lemma 29} and 
\begin{align*}
    c_{d} = \underset{(f \neq \theta)} \sup \; 1 + \frac{\langle f, \Pi_{\mathcal{N}\big(\sqrt{\bar{L}^1}\big)^{\perp}}\Pi_{\mathcal{N}\big(\sqrt{\bar{L}^2}\big)^{\perp}}\big(f \big) \rangle_{H}}{ || f||^{2}_H}
\end{align*}
The norm of $\hat{T}$, $||\hat{T} ||$, is equal to 1.
\end{definition}
The unnormalized norm square  of  $\hat{T}$ has been derived in Equation \ref{Equation 8}  where equality $(c)$ is obtained invoking Lemma \ref{Lemma 30}. Thus, the unnormalized norm square  of  $\hat{T}$ is $c^{2}_d$, while the norm of the normalized operator is $1$. Normalization was not needed in the proof of Lemma \ref{Lemma 30}, but is needed here because of the concatenation of the downloads, i.e., change in the download space from $H^{i}$ to $H^{1} \times H^2$. We note that the normalization of the download operator is not needed for all $n$. For $n$ sufficiently large, we normalize the downloaded functions by $c_d$ to so that $|| \hat{T} || = 1$. This done to ensure that the estimation operator for iteration $n$ has norm converging to $1$.

\subsection{Estimation Operator for Iteration $n$}\label{Subsection 3.5}
In Figure \ref{Figure 3},  the operators defined so far and spaces they map across has been captured. Operator $\bar{T}(\varrho^{1}_n, \varrho^2_n)$ takes in $(\bar{f}^{1}_{n-1}; \psi^{1}_{x^1_n, y^1_n}, \bar{f}^{2}_{n-1}; \psi^{2}_{x^2_n, y^2_n})$ and outputs $f^{1}_n, f^{2}_n$, operator $T(\varrho_n)$ takes in $f^{1}_n$ and $f^2_n$ and outputs $f_{n}$, operator $\hat{T}$ takes in $f_{n}$ and outputs $\bar{f}^{1}_n$ and $\bar{f}^{2}_n$. This corresponds to one iteration of the algorithm, specifically iteration $n$. Thus, the estimation operator at iteration $n$ is defined as the composition of the operators, $\bar{T}(\varrho^{1}_n, \varrho^2_n)[\cdot]$, $T(\varrho_n)[\cdot]$, and $\hat{T}(\cdot)$, as below.
\begin{definition}\label{Definition 14}
The estimation operator at iteration $n$ is defined as $ \mathbb{T}_{n}: H^{1} \times H^{2} \times H^{1} \times H^{2}\to H^{1} \times H^{2}$, 
\begin{align*}
 \mathbb{T}_{n}(\varrho^{1}_{n}; \varrho^{2}_{n}; &\varrho_n)[f^{1};f^{2};\psi^{1}_{(x^{1};y^{1})}; \psi^{2}_{(x^{2};y^{2})}] = \\
 & \hspace{-0.4cm}\hat{T} \circ  T(\varrho_{n}) \circ \bar{T}(\varrho^{1}_{n}; \varrho^{2}_{n})[f^{1};f^{2}; \psi^{1}_{(x^{1};y^{1})}; \psi^{2}_{(x^{2};y^{2})}].  
\end{align*}
\end{definition}
Operator $\mathbb{T}_{n}(\cdot)[\cdot]$ has also been captured in Figure \ref{Figure 3}. The estimation operator until iteration $n$ could be defined by composing the estimation operators at iterations $1, \ldots, n$. Note that the domain for these operators is $H^1 \times H^2  \times H^1 \times H^2$ while the range is $H^1 \times H^2$. Hence, direct composition is not possible. The two extra input terms correspond to two new data points which are received at every time step.  Accounting for the same, the operator is defined as follows. 
\begin{definition}\label{Definition 15}
Given $(f^{1}_0;f^{2}_0)$ and a finite string of input data $\{(x^{1}_j; y^{1}_{j})\}^{n}_{j=1}, \{(x^{2}_j;y^{2}_j)\}^n_{j=1}$, the estimation  operator until iteration $n$, $\bar{\mathbb{T}}_{n} : H^{1} \times H^{2} \times (H^{1} \times H^{2})^{n} \to H^{1} \times H^{2}$ is defined recursively in Equation \ref{Equation 9}.
\end{definition}
Since the composition of linear operators is linear, $\{\mathbb{T}_{n}\}$ and $\{\bar{\mathbb{T}}_{n}\}$ are sequences of linear operators. 
\begin{remark}\label{Remark 36}
In Lemmas \ref{Lemma 8}, \ref{Lemma 10}, \ref{Lemma 12} and Definition \ref{Definition 13}, we have insisted that the norm of operators converge to a value less than or equal to $1$. From definition of $\mathbb{T}_{n}$, it follows that $ || \mathbb{T}_{n}  ||  \leq  || \bar{T}(\varrho^{1}_{n}; \varrho^2_{n})|| \; ||T(\varrho_n)|| \; || \hat{T} ||$. Since each sequence on R.H.S  has a subsequence which converges to a value less than or equal 1, there is a subsequence of $\{ \mathbb{T}_{n} \}$ whose norm converges to a value less than or equal 1. Using the same idea, the norm of $\bar{\mathbb{T}}_{n}$ is dependent on $\prod^{n}_{j=1} || \mathbb{T}_{j} ||$. Though, $\{ || \mathbb{T}_{n} || \}$ converges to a value $\leq 1$, it possible that its product diverges. However, we can choose a subsequence such that $\prod^{n}_{j=1} || \mathbb{T}_{j} ||$ is finite as $n \to \infty$, thus "uniformly bounding" $\{\bar{\mathbb{T}}_{n}\}$. This is crucial for the existence of the estimation operator for the algorithm and proof of consistency. Hence, through Lemmas \ref{Lemma 8}, \ref{Lemma 10}, \ref{Lemma 12}, we have in principle solved item (iii) of Problem \ref{Problem 6}. 
\end{remark}
\begin{figure*}
\begin{center}
\includegraphics[scale=0.44]{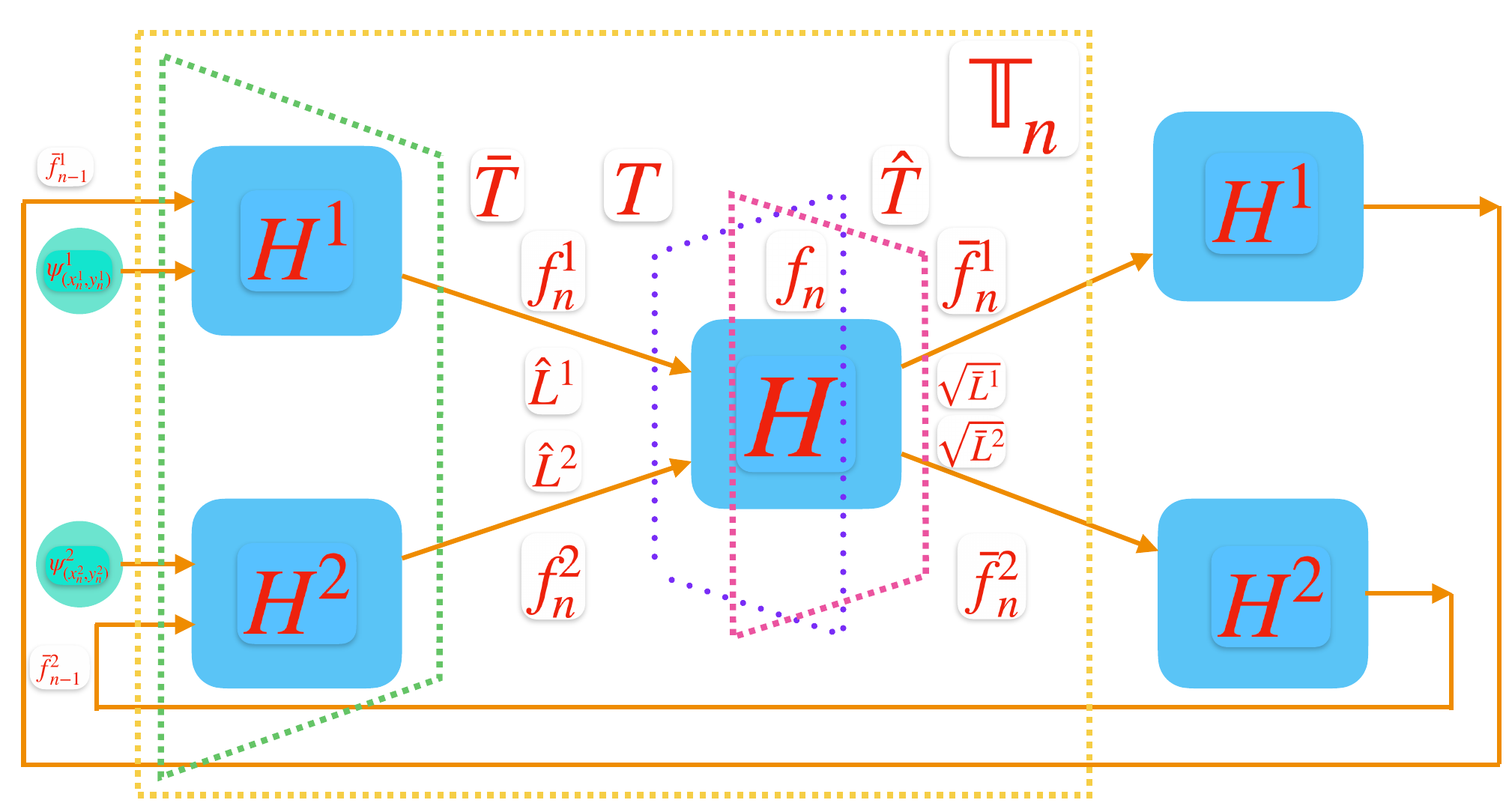}
\caption{The estimation operator at iteration $n$, $\mathbb{T}_{n}$, obtained through composition of operators, $\bar{T}, T, \hat{T}.$}\label{Figure 3}
\end{center}
\vspace{-0.9cm}
\begin{align}
&\hspace{-0.8cm} || \bar{\mathbb{T}}_{n}(\cdot)\Big[f^{1}_{0};f^{2}_{0};\{\psi^{1}_{(x^{1}_1;y^{1}_1)}; \psi^{2}_{(x^{2}_1;y^{2}_1)}\}^n_{j=1}\Big] || \leq 
\prod^{n}_{j=1} || \mathbb{T}_{j} || \big[|| f^{1}_0 ||^{2}_{H^{1}} \hspace{-0.2cm} + || \psi^{1}_{(x^{1}_1;y^{1}_1)} ||^{2}_{H^1}  +  || f^{2}_0 ||^{2}_{H^{2}} \hspace{-0.2cm} +  || \psi^{2}_{(x^{2}_1;y^2_1)} ||^{2}_{H^2}  \big]  + \nonumber \\
&\hspace{10cm}\sum^{n}_{j=2} \prod^{n}_{k=j} || \mathbb{T}_{k} || \big[  || \psi^{1}_{(x^{1}_j;y^{1}_j)} ||^{2}_{H^1} 
+ || \psi^{2}_{(x^{2}_j;y^{2}_j)} ||^{2}_{H^1} \big]  \label{Equation 10} 
\end{align}
\vspace{-0.9cm}
\end{figure*}
\section{Consistency of ColEst2L Algorithm}\label{Section 4}
To solve Problem \ref{Problem 6}, our goal is to uniformly bound the sequence $\{\bar{\mathbb{T}}_{n}\}$, so that a weakly convergent subsequence can be extracted. From Inequality \ref{Equation 10}, we note that the R.H.S depends on (i) the input data points from iteration $1$ to $n$; (ii) the product of the norm of the estimation operators from iteration $1$ to $n$. In the following subsection, (i) we define \textit{valid} input data sequences; (ii) find a subsequence of  $\{\bar{\mathbb{T}}_{n}\}$ whose product is bounded to ensure that $\{\bar{\mathbb{T}}_{n}\}$ is uniformly bounded.
\subsection{Uniform Boundedness of The Estimation Operators}
First, we define valid input data sequences. 
\begin{definition}\label{Definition 16}
   For a given initial condition $(f^1_0; f^2_0)$ of the estimation algorithm,  a sequence of input data points $\{(x^1_n; y^1_n)\}$ and  $\{(x^2_n; y^2_n)\}$ is said to be \textit{valid} if $c_{(f^{1}_0;f^{2}_0)}$ is finite, where $c_{(f^{1}_0;f^{2}_0)}$ is defined as
   \begin{align*}
       \hspace{-0.4cm} \sum^{\infty}_{n = 2} \dfrac{|| \psi^{1}_{(x^{1}_n;y^{1}_n)} ||^{2}_{H^1} + \psi^{2}_{(x^{2}_n;y^2_n)} ||^{2}_{H^2} }{|| f^{1}_0 ||^{2}_{H^{1}} \hspace{-0.25cm} + || \psi^{1}_{(x^{1}_1;y^{1}_1)} ||^{2}_{H^1}  +  || f^{2}_0 ||^{2}_{H^{2}} \hspace{-0.25cm} +  || \psi^{2}_{(x^{2}_1;y^2_1)} ||^{2}_{H^2}}.
   \end{align*}
   The set of valid sequences for a given initial condition $(f^1_0; f^2_0)$  denoted by $E_{(f^{1}_0; f^{2}_0)}$. 
\end{definition}
The above definition is to ensure that the second term in the R.H.S of Inequality \ref{Equation 10} is upper bounded for all $n$. 
\begin{lemma}\label{Lemma 17}
Let $\{\varrho^{1}_{n}\}$, $\{\varrho^{2}_{n}\}$ and $\{\varrho_{n}\}$ be sequences which diverge to $\infty$.  There exists a subsequence of $\{ \mathbb{T}_{n}\}$, $ \{\mathbb{T}_{n_{k_l}}\}$ such that $\underset{ l \to \infty} \lim \; \prod^{l}_{p=1} || \mathbb{T}_{n_{k_p}} ||  \leq   \underset{ l \in \mathbb{N}} \sup \; \prod^{l}_{p=1} || \mathbb{T}_{n_{k_p}} || \overset{\Delta}{=}  c_{M_1} < \infty$.  
\end{lemma}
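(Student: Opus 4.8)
\textit{Approach.} The plan is to combine the three convergence-to-$1$ results (Lemmas~\ref{Lemma 8}, \ref{Lemma 10}, \ref{Lemma 12}) with the submultiplicativity bound $\|\mathbb{T}_n\| \le \|\bar{T}(\varrho^1_n;\varrho^2_n)\|\,\|T(\varrho_n)\|\,\|\hat{T}\|$ from Remark~\ref{Remark 36}, and then to extract a subsequence along which the factors decay to $1$ \emph{fast enough} that the infinite product converges. The key realization is that "converges to $1$" alone is not enough — one needs the deviations above $1$ to be summable along the chosen subsequence — but since we are free to \emph{choose} the subsequence (we only need \emph{a} subsequence of $\{\mathbb{T}_n\}$ with bounded partial products), we can thin out the index set aggressively.

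\textit{Key steps, in order.} First I would invoke Lemma~\ref{Lemma 10} to get a subsequence $\{n_k\}$ along which $\|\bar{T}(\varrho^1_{n_k};\varrho^2_{n_k})\| \to c \le 1$, and simultaneously restrict to a further subsequence on which $\|T(\varrho_{n_k})\| \to c' \le 1$ (Lemma~\ref{Lemma 12}, after the normalization of $\mathbf{K}$ described there for large $n$) and $\|\hat{T}\| = 1$ (Definition~\ref{Definition 13}, using the normalization by $c_d$ valid for $n$ large). Hence along this common subsequence $\|\mathbb{T}_{n_k}\| \to cc' \le 1$. Second, I would separate two cases. If $cc' < 1$: then for all $k$ beyond some $k_0$, $\|\mathbb{T}_{n_k}\| \le r$ for some fixed $r < 1$, so $\prod_{p} \|\mathbb{T}_{n_{k_p}}\|$ is dominated by a convergent geometric tail times a finite head, giving $c_{M_1} < \infty$ immediately. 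If $cc' = 1$: here I would extract a further subsequence $\{n_{k_l}\}$ recursively — having chosen $n_{k_1}, \ldots, n_{k_{l-1}}$, pick $n_{k_l}$ large enough that $\|\mathbb{T}_{n_{k_l}}\| \le 1 + 2^{-l}$, which is possible because $\limsup \|\mathbb{T}_{n_k}\| = 1$. Then $\prod_{l=1}^{\infty}(1 + 2^{-l})$ converges (its logarithm is bounded by $\sum 2^{-l} = 1$), so the partial products are uniformly bounded by this constant, and we set $c_{M_1}$ to be that bound (or the supremum over $l$ of the partial products, which is finite). Third, I would note that $\{\mathbb{T}_{n_{k_l}}\}$ is a subsequence of $\{\mathbb{T}_n\}$ and that $\prod_{p=1}^{l}\|\mathbb{T}_{n_{k_p}}\|$ is monotone nondecreasing in $l$, so its limit equals its supremum, which we have just bounded; this yields the displayed inequality.

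\textit{Main obstacle.} The delicate point is that $\|\mathbb{T}_{n_k}\| \to 1$ does \emph{not} by itself control the product — a product of terms each slightly exceeding $1$ can diverge — so the argument genuinely relies on the freedom to pass to a \emph{sparse} subsequence and on quantitative control of how close to $1$ one can force the norms at arbitrarily late indices. I expect the bookkeeping to be: making precise that $\limsup_k \|\mathbb{T}_{n_k}\| \le 1$ lets us, for \emph{any} target tolerance $\epsilon_l > 0$, find an index beyond any prescribed threshold with $\|\mathbb{T}_{n_{k_l}}\| < 1 + \epsilon_l$, and then choosing $\epsilon_l$ summable (e.g.\ $\epsilon_l = 2^{-l}$). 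A secondary subtlety is ensuring the normalizations needed for Lemmas~\ref{Lemma 12} and the $\hat{T}$ bound kick in only for large $n$, which is harmless since we discard all small indices anyway when forming the subsequence.
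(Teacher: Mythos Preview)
Your strategy is the paper's strategy: pass to a subsequence along which $\|\mathbb{T}_n\|$ is forced close to $1$ at a summable rate, then bound the infinite product by $\prod_l(1+2^{-l})$. Two small slips are worth flagging. First, from the submultiplicativity bound you only get $\limsup_k \|\mathbb{T}_{n_k}\| \le cc'$, not $\|\mathbb{T}_{n_k}\| \to cc'$; the inequality goes the right way for your upper bound, so nothing downstream breaks, but the convergence claim as written is unjustified. Second, the assertion that $\prod_{p=1}^{l}\|\mathbb{T}_{n_{k_p}}\|$ is monotone nondecreasing is false in general --- your one-sided choice $\|\mathbb{T}_{n_{k_l}}\| \le 1+2^{-l}$ allows factors strictly below $1$, so the partial products can drop. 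This means you have shown $c_{M_1}<\infty$ but not yet that the limit in the lemma exists. The paper closes this gap by first extracting (Bolzano--Weierstrass) a subsequence along which $\|\mathbb{T}_{n_k}\|$ actually converges to some $c_5\in(0,1]$, normalising so that $c_5=1$, and then choosing the \emph{two-sided} condition $|\,\|\mathbb{T}_{n_{k_l}}\|-1\,|\le 2^{-l}$; with both bounds in hand, $\sum_l \ln\|\mathbb{T}_{n_{k_l}}\|$ is absolutely convergent and the partial products genuinely converge. Your argument becomes complete with the same two-sided refinement.
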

\begin{proof}
Let $a_{n} = || \mathbb{T}_{n}||$. Then, $0 \leq a_{n} \leq || \hat{T} || \times || T(\varrho_n)  || \times ||\bar{T}(\varrho^{1}_{n}; \varrho^{2}_{n}) ||$. From Lemmas \ref{Lemma 10} and \ref{Lemma 12}, and, Definition \ref{Definition 13} it follows that $\{a_{n}\}$ is a bounded sequence of positive real numbers that has a subsequence which converges to a non-negative real number less than or equal to $1$. If every subsequence of $\{a_{n}\}$ converges to $0$, then $\{a_{n}\}$ converges to zero. We treat this as a pathological case, as it implies that irrespective of the input, the sequence of estimated functions converges to $\theta_{H^{1} \times H^2}$ which is not interesting. We consider the scenario where  there is subsequence $\{a_{n_k}\}$ that converges to a positive constant, $c_5$, less than or equal to 1. W.L.O.G we assume that the subsequence converges to 1 as the sequence $\{\frac{a_n}{c_5}\}$ converges to 1. There is a subsequence  of $\{a_{n_k}\}$, $\{a_{n_{k_l}}\}$ such that $ | a_{n_{k_l}} -1 | \leq \frac{1}{2^l}$. We define the sequence $b_{n} = \prod^{n}_{l=1}a_{n_{k_l}}$. We claim that the sequence $\{b_{n}\}$ converges. Indeed, consider $\ln(b_{n}) =\sum^{n}_{l=1}\ln(a_{n_{k_l}})$. Since  $\frac{-1}{2^l-1} \leq \ln(a_{n_{k_l}}) \leq \frac{1}{2^l}$, $|\ln(a_{n_{k_l}})| \leq \frac{1}{2^{l}-1}$. Since $\sum^{\infty}_{l=1}\frac{1}{2^{l}-1}$ converges, for every $\epsilon > 0$, $\exists N_{\epsilon}$ such that $\sum^{\infty}_{l=N_{e}} |\ln(a_{n_{k_l}})| < \epsilon$. Thus, $|\ln(b_{p}) - \ln(b_{q})| \leq \sum^{p}_{l=q} |\ln(a_{n_{k_l}})| < \epsilon \forall p, q \geq N_{\epsilon}$. Thus, the sequence $\{\ln(b_n)\}$ is Cauchy and converges to a real number. By continuity of the natural logarithm function, sequence $\{b_{n}\}$ converges to a real number lying in the interval $\big[\prod^{\infty}_{l=1} \frac{2^l-1}{2^l}, \prod^{\infty}_{l=1} \frac{2^l+1}{2^l}\big]$. From the construction of the subsequence $\{a_{n_{k_l}}\}$, it follows that,
\begin{align*}
    &\hspace{-0.7cm}\underset{ l \to \infty} \lim \; \prod^{l}_{p=1} || \mathbb{T}_{n_{k_p}} ||  \leq \underset{ l \in \mathbb{N}} \sup \; \prod^{l}_{p=1} || \mathbb{T}_{n_{k_p}} || \leq  \prod^{\infty}_{l=1} \frac{2^l+1}{2^l} < \infty. \\
    & \hspace{-0.7cm} c_{M_{2}} \overset{\Delta}{=} \underset{q \in \mathbb{N}} \sup \;\;  \underset{\substack{l \in \mathbb{N} \\ l \geq q}} \sup \;\;  \prod^{l}_{p=q} || \mathbb{T}_{n_{k_{p}}} || < \infty ,
\end{align*}
is also implied by above construction.
\end{proof}
We re-index the set of operators $\{\mathbb{T}_{n}\}^{n_{k_1}-1}_{n=1} \cup \{\mathbb{T}_{n_{k_l}}\}^{\infty}_{l=1}$ obtained from the above Lemma \ref{Lemma 17} (with $c_{M_1}$ and $c_{M_2}$ finite) to obtain a countable collection  of operators $\{\mathbb{T}_{n}\}^{\infty}_{n=1}$ for which $c_{M_1}$ and $c_{M_2}$ are finite. 
\begin{lemma}\label{Lemma 18}
Let $\{\varrho^{1}_{n}\}$, $\{\varrho^{2}_{n}\}$ and $\{\varrho_{n}\}$ be sequences which diverge to $\infty$. We define $\bar{\mathbb{T}}_{n}\Big |_{H^{1} \times H^{2}}: H^{1} \times H^{2} \to H^{1} \times H^{2}$ as, 
\begin{align*}
    \hspace{-0.9cm} \bar{\mathbb{T}}_{n}\Big |_{H^{1} \times H^{2}}(\cdot)[f^1;f^2] = \bar{\mathbb{T}}_{n}(\cdot)[f^1;f^2; \{\psi^{1}_{(x^{1}_j;y^{1}_j)}; \psi^{2}_{(x^{2}_j;y^{2}_j)}\}^n_{j=1}],
\end{align*}
where $\{\psi^{1}_{(x^{1}_j;y^{1}_j)}; \psi^{2}_{(x^{2}_j;y^{2}_j)}\}^n_{j=1}$ is a finite string of the first $n$ elements of \underline{any} sequence $\{\psi^{1}_{(x^{1}_j;y^{1}_j)}; \psi^{2}_{(x^{2}_j;y^{2}_j)}\}_{j \geq 1} \in E_{(f^{1}; f^{2})}$. Then, $\Big\{\bar{\mathbb{T}}_{n}\Big |_{H^{1} \times H^{2}}(\cdot)\big[\cdot\big]\Big\} $ is a uniformly equicontinuous sequence of linear operators. 
\end{lemma}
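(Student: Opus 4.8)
The plan is to show uniform equicontinuity by producing a single bound on $\|\bar{\mathbb{T}}_n|_{H^1\times H^2}\|$ that is independent of $n$ and of the choice of initial condition $(f^1;f^2)$ and the valid data sequence. Since each $\bar{\mathbb{T}}_n|_{H^1\times H^2}$ is linear (composition of linear operators, as noted after Definition \ref{Definition 15}), equicontinuity of the family is equivalent to a uniform operator-norm bound, and then uniform equicontinuity follows because a uniform Lipschitz constant gives, for any $\varepsilon>0$, a single $\delta=\varepsilon/(c+1)$ working simultaneously for all $n$. So the whole lemma reduces to bounding the operator norms uniformly.

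First I would start from the recursive Inequality \ref{Equation 10}, which expresses $\|\bar{\mathbb{T}}_n(\cdot)[f^1_0;f^2_0;\{\psi^1_{(x^1_j;y^1_j)};\psi^2_{(x^2_j;y^2_j)}\}^n_{j=1}]\|$ in terms of the partial products $\prod_{k=j}^n\|\mathbb{T}_k\|$ and the data norms $\|\psi^i_{(x^i_j;y^i_j)}\|^2$. Using the re-indexed collection $\{\mathbb{T}_n\}_{n=1}^\infty$ from Lemma \ref{Lemma 17} (for which $c_{M_1}=\sup_l\prod_{p=1}^l\|\mathbb{T}_{n_{k_p}}\|<\infty$ and $c_{M_2}=\sup_q\sup_{l\ge q}\prod_{p=q}^l\|\mathbb{T}_{n_{k_p}}\|<\infty$), every factor $\prod_{k=j}^n\|\mathbb{T}_k\|$ appearing in Inequality \ref{Equation 10} is bounded by $c_{M_2}$, and the leading factor $\prod_{j=1}^n\|\mathbb{T}_j\|$ by $c_{M_1}$. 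Hence
\begin{align*}
\big\|\bar{\mathbb{T}}_n(\cdot)[f^1_0;f^2_0;\{\psi^i_{(x^i_j;y^i_j)}\}^n_{j=1}]\big\|
&\le c_{M_1}\,D_0 + c_{M_2}\sum_{j=2}^n\big(\|\psi^1_{(x^1_j;y^1_j)}\|^2_{H^1}+\|\psi^2_{(x^2_j;y^2_j)}\|^2_{H^2}\big),
\end{align*}
where $D_0=\|f^1_0\|^2_{H^1}+\|\psi^1_{(x^1_1;y^1_1)}\|^2_{H^1}+\|f^2_0\|^2_{H^2}+\|\psi^2_{(x^2_1;y^2_1)}\|^2_{H^2}$. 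Now dividing by $D_0$ and invoking the validity definition (Definition \ref{Definition 16}), the remaining sum is bounded by $D_0\,c_{(f^1_0;f^2_0)}$, so the ratio is at most $c_{M_1}+c_{M_2}\,c_{(f^1_0;f^2_0)}$. Since $c_{(f^1_0;f^2_0)}$ is \emph{finite} for valid sequences, this gives, for each fixed $(f^1;f^2)$, a finite bound on $\|\bar{\mathbb{T}}_n|_{H^1\times H^2}\|$ uniform in $n$ and in the particular valid sequence chosen.

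The delicate point, and the one I would emphasize, is the quantifier structure in the claimed \emph{uniform} equicontinuity: the bound $c_{M_1}+c_{M_2}\,c_{(f^1_0;f^2_0)}$ still depends on the initial condition through $c_{(f^1_0;f^2_0)}$, and also the normalization constant $c_d$ of the download operator $\hat{T}$ (Definition \ref{Definition 13}) must be handled so that $\|\hat{T}\|=1$ holds uniformly along the chosen subsequence. I would argue that "uniform" here is meant in $n$ (and in the data sequence) for each fixed initial condition --- which is exactly what equicontinuity of the sequence $\{\bar{\mathbb{T}}_n|_{H^1\times H^2}(\cdot)[\cdot]\}$ requires, namely a common $\delta$-$\varepsilon$ pair valid for every index $n$. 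The main obstacle is therefore bookkeeping rather than a deep inequality: carefully tracking that the re-indexing in Lemma \ref{Lemma 17} makes both $c_{M_1}$ and $c_{M_2}$ finite simultaneously, that the download normalization is incorporated into the $\mathbb{T}_n$'s for $n$ large (as discussed after Definition \ref{Definition 13}), and that Definition \ref{Definition 16} indeed controls precisely the tail sum appearing in Inequality \ref{Equation 10}. Once those are lined up, the uniform norm bound follows, and linearity upgrades it to uniform equicontinuity.
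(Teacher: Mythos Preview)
Your derivation of the pointwise bound is correct and matches the paper exactly: combining Inequality~\ref{Equation 10}, Lemma~\ref{Lemma 17}, and Definition~\ref{Definition 16} gives
\[
\big\|\bar{\mathbb{T}}_n(\cdot)[f^1_0;f^2_0;\ldots]\big\|\le \big[c_{M_1}+c_{M_2}\,c_{(f^1_0;f^2_0)}\big]\,D_0,
\]
and you correctly flag that the constant $c_{M_1}+c_{M_2}\,c_{(f^1_0;f^2_0)}$ still depends on the initial condition. But your proposed resolution of this dependence is where the argument breaks. You try to reinterpret ``uniform equicontinuity'' as meaning a common $\delta$--$\varepsilon$ pair valid for every index $n$ at each \emph{fixed} initial condition. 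For a family of \emph{linear} operators this distinction collapses: $\|\bar{\mathbb{T}}_n(f)-\bar{\mathbb{T}}_n(g)\|=\|\bar{\mathbb{T}}_n(f-g)\|$, so a $\delta$ that works at one point works everywhere, and conversely equicontinuity at the origin is exactly a uniform operator-norm bound $\sup_n\|\bar{\mathbb{T}}_n|_{H^1\times H^2}\|<\infty$. You said this yourself in the first paragraph, and you cannot consistently retreat from it in the third.

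The step you are missing is the \emph{uniform boundedness principle}. What you have established is that $\{\bar{\mathbb{T}}_n|_{H^1\times H^2}\}$ is \emph{pointwise} bounded on the Banach space $H^1\times H^2$: for each $(f^1;f^2)$ there is a finite bound on $\sup_n\|\bar{\mathbb{T}}_n|_{H^1\times H^2}(f^1;f^2)\|$. Banach--Steinhaus then upgrades this to a single constant $c_{\mathbb{T}}$ with $\|\bar{\mathbb{T}}_n|_{H^1\times H^2}\|\le c_{\mathbb{T}}$ for all $n$, and from that the uniform equicontinuity follows by taking $\delta<\varepsilon/c_{\mathbb{T}}$. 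This is precisely what the paper does, and it is not mere bookkeeping: without it you do not obtain a single Lipschitz constant, and the downstream $\varepsilon/3$ argument in Theorem~\ref{Theorem 21} (which needs one $\delta$ working for \emph{all} $g,h$ simultaneously) would fail. Remark~\ref{Remark 19} in the paper is devoted exactly to explaining why the domain must be restricted to $H^1\times H^2$ so that the uniform boundedness principle is applicable.
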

\begin{proof}
For a given initial estimate $(f^{1}_0; f^{2}_0)$ and an input data sequence from the set of valid input sequences corresponding to the initial condition, from Lemma \ref{Lemma 17} and Inequality \ref{Equation 10},  it follows that,
\begin{align}
    &\hspace{-0.7cm}|| \bar{\mathbb{T}}_n (\ldots)[\ldots] || \leq \Big[ c_{M_{1}} + c_{M_{2}} c_{(f^{1}_{0}, f^{2}_{0})}\Big] \times \nonumber \\
    &\hspace{-0.7cm}\Big[|| f^{1}_0 ||^{2}_{H^{1}} \hspace{-0.25cm} + || \psi^{1}_{(x^{1}_1;y^{1}_1)} ||^{2}_{H^1}  +  || f^{2}_0 ||^{2}_{H^{2}} \hspace{-0.25cm} +  || \psi^{2}_{(x^{2}_1;y^2_1)} ||^{2}_{H^2} \Big] \label{Equation 11} , 
\end{align}
$\forall n \in \mathbb{N}$, that is both the sequence of operators $\{\bar{\mathbb{T}}_{n}\}$ and $\{\bar{\mathbb{T}}_{n}\Big |_{H^{1} \times H^{2}}\}$ are pointwise bounded. However, from the uniform boundedness principle, \cite{royden2010real}, it follows that only the latter sequence (Remark \ref{Remark 19}), is uniformly bounded, $\exists c_{\mathbb{T}} >0 $ such that $|| \bar{\mathbb{T}}_{n}\Big |_{H^{1} \times H^{2}} || \leq c_{\mathbb{T}} , \; \forall n \in \mathbb{N}$. Given $\epsilon >0$, let $\delta < \frac{\epsilon}{ c_{\mathbb{T}}}$. Then, $|| f - g|| < \delta $ implies that
\begin{align*}
    &\hspace{-0.8cm}|| \bar{\mathbb{T}}_{n}\Big |_{H^{1} \times H^{2}}(f) - \bar{\mathbb{T}}_{n}\Big |_{H^{1} \times H^{2}}(g) || \leq \\
    &|| \bar{\mathbb{T}}_{n}\Big |_{H^{1} \times H^{2}} || \times || f- g || <  c_{\mathbb{T}} \times \frac{\epsilon}{ c_{\mathbb{T}}} < \epsilon, \;  \forall n \in \mathbb{N}.
\end{align*}
\end{proof}
\begin{remark}\label{Remark 19}
For invoking the uniform boundedness principle as proved in \cite{royden2010real}, a common domain for the sequence of operators, $\{\bar{\mathbb{T}}_{n}\}$ is needed. One candidate would be $H^{1} \times H^2 \times (H^{1} \times H^{2})^\infty$. However,  the bound presented in Inequality \ref{Equation 11} does not hold for every input sequence in $(H^{1} \times H^{2})^\infty$. An alternative candidate would be, 
\begin{align*}
&\hspace{-0.7cm} \mathcal{X}_{\mathbb{T}} = \{(f^1;f^2; \{\psi^{1}_{(x^{1}_j;y^{1}_j)}; \psi^{2}_{(x^{2}_j;y^{2}_j)}\}_{j \geq 1}  \in H^{1} \times H^2 \\ 
&\hspace{0.2 cm}\times (H^{1} \times H^{2})^\infty :   \{\psi^{1}_{(x^{1}_j;y^{1}_j)}; \psi^{2}_{(x^{2}_j;y^{2}_j)}\}_{j \geq 1} \in E_{f^{1}, f^{2}} \}.
\end{align*}
Note that $ \mathcal{X}_{\mathbb{T}}$ is not necessarily a subspace. If it is a complete metric space with a suitable norm, then from Theorem 6 chapter 10 of \cite{royden2010real}, it follows that $\exists \; O \subset \mathcal{X}_{\mathbb{T}}$, open and $c_{M_3}$ such that $ || \bar{\mathbb{T}}_{n}(\cdot)[f] || \leq c_{M_3}, \forall f \in O, \; \forall n \in \mathbb{N}$. This however cannot be extended to entire $ \mathcal{X}_{\mathbb{T}}$ as it is not a vector space.  Hence, in Lemma \ref{Lemma 18}, we define a sequence of operators with domain restricted to $H^{1} \times H^{2}$. The output of the operators will change as the input data sequence  changes. However, irrespective of the inputs,  the outputs are uniformly bounded with respect to the initial estimates. 
\end{remark}
Thus, through Definition \ref{Definition 16} and Lemma \ref{Lemma 18} we have solved items 1 and 2 in Problem \ref{Problem 6}.  
\subsection{Main Result}
In this section, we define the estimation operator for the algorithm for arbitrarily large $n$ and solve Problem \ref{Problem 6}. 
\begin{definition}\label{Definition 20}[\underline{\textit{Estimation Operator for Algorithm}}]
Given $(f^{1}_0;f^{2}_0)$, $\{\varrho^{1}_n; \varrho^{2}_n; \varrho_n\}$, and a sequence of valid data points $\{(x^{1}_n; y^{1}_{n})\}, \{(x^{2}_n;y^{2}_{n})\} \in E_{(f^{1}_0;f^{2}_0)}$, the estimation operator for the algorithm, $\mathbb{T}:  \mathcal{X}_{\mathbb{T}} \to H^{1} \times H^{2}$ is recursively  defined as follows
\begin{align*}
    &\hspace{-0.7cm} \mathbb{T}(\{(\varrho^{1}_n;\varrho^{2}_n;\varrho_n)\})\Big[f^{1}_0;f^{2}_0; \{\psi^{1}_{(x^{1}_n;y^{1}_n)}; \psi^{2}_{(x^{2}_n;y^{2}_n)} \}_{n \geq 1}\Big]  \\ 
    & \hspace{-0.9cm} = \hspace{-0.1cm}  \underset{ n \to \infty } \lim \bar{\mathbb{T}}_{n}(\{(\varrho^{1}_k;\varrho^{2}_k; \varrho_k)\}^{n}_{k=1})\Big[f^{1}_{0},f^{2}_{0};\{\psi^{1}_{(x^{1}_j;y^{1}_j)}; \psi^{2}_{(x^{2}_j;y^{2}_j)}\}^{n}_{j=1}\Big],
\end{align*}
where $\bar{\mathbb{T}}_{n}$ is defined in Definition \ref{Definition 15}.  
\end{definition}
In the proof of the following result, we suppress the parameter related arguments in the notation of operators $\bar{\mathbb{T}}_{n}(\cdot)[\cdot]$ and $\mathbb{T}(\cdot)[\cdot]$ and instead use the notation $\bar{\mathbb{T}}_{n}(\cdot)$ and $\mathbb{T}(\cdot)$,  where the arguments are the functions that they operate on. Further, only the arguments which correspond to the initial condition is mentioned, the arguments corresponding to the terms in the input data sequence are suppressed. At a given initial condition, it is understood that the input data sequence is \textit{any} valid sequence corresponding to that initial estimate. 
\begin{theorem}\label{Theorem 21}
The estimation operator is well defined in the following sense; there exists a subsequence of $\Big\{\bar{\mathbb{T}}_{n}\Big(f^{1}_0;f^{2}_0;\\ \{\psi^{1}_{(x^{1}_j;y^{1}_j)}; \psi^{2}_{(x^{2}_j;y^{2}_j)}\}^n_{j=1}\Big)\Big\}$ which strongly converges to $(f^{1,*}; \\f^{2,*})\in H^{1} \times H^{2}$ for any initial condition $(f^{1}_0;f^{2}_0) \in H^{1} \times H^{2}$ and input data sequence $\{\psi^{1}_{(x^{1}_j;y^{1}_j)}; \psi^{2}_{(x^{2}_j;y^{2}_j)}\}_{j \geq 1}) \in E_{(f^{1}_0;f^{2}_0)}$. $\mathbb{T}$ is linear with respect to the initial estimates and is a bounded operator.
\end{theorem}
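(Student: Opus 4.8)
The plan is to establish the two assertions of the theorem separately: first the existence of a strongly convergent subsequence of $\{\bar{\mathbb{T}}_n(\cdot)\}$ (well-definedness of $\mathbb{T}$), and then the linearity and boundedness of $\mathbb{T}$. For the first part, I would fix an arbitrary initial condition $(f^1_0; f^2_0) \in H^1 \times H^2$ and an arbitrary valid input data sequence in $E_{(f^1_0; f^2_0)}$, and then restrict attention to the sequence of operators $\bar{\mathbb{T}}_n\big|_{H^1 \times H^2}$ from Lemma \ref{Lemma 18}. That lemma already gives that this is a uniformly equicontinuous, uniformly bounded sequence of linear operators on $H^1 \times H^2$. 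The key structural fact I would exploit is that the RKHS $H^1$ and $H^2$ are \emph{finite dimensional} (since the feature sets are finite, as noted after Assumption \ref{Assumption 1}), so $H^1 \times H^2$ is finite dimensional; hence the image of the fixed point $(f^1_0; f^2_0)$ under $\{\bar{\mathbb{T}}_n\big|_{H^1 \times H^2}\}$ is a bounded sequence in a finite-dimensional normed space, and by Bolzano--Weierstrass it has a convergent subsequence, converging strongly to some $(f^{1,*}; f^{2,*}) \in H^1 \times H^2$. The subsequence used here is the re-indexed one coming out of Lemma \ref{Lemma 17}, along which $c_{M_1}, c_{M_2}$ are finite, so that Inequality \ref{Equation 11} applies and the relevant bound $\big[c_{M_1} + c_{M_2}\, c_{(f^1_0; f^2_0)}\big]\big[\|f^1_0\|^2_{H^1} + \|\psi^1_{(x^1_1; y^1_1)}\|^2_{H^1} + \|f^2_0\|^2_{H^2} + \|\psi^2_{(x^2_1; y^2_1)}\|^2_{H^2}\big]$ holds uniformly in $n$.

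For the second part, linearity of $\mathbb{T}$ with respect to the initial estimates follows from the fact that each $\bar{\mathbb{T}}_n$ is linear (composition of linear operators, as stated after Definition \ref{Definition 15}), together with the fact that $\mathbb{T}(f^1_0; f^2_0)$ is defined as a (subsequential) limit of $\bar{\mathbb{T}}_n(f^1_0; f^2_0)$: for fixed inputs, scalar multiplication and addition pass through the limit in the finite-dimensional space $H^1 \times H^2$. Here one must be slightly careful that the \emph{same} subsequence index set can be chosen to work for a given initial condition; since the choice of subsequence in Lemma \ref{Lemma 17} depends only on the operators $\mathbb{T}_n$ (not on the initial estimate), and the further Bolzano--Weierstrass extraction can be done once we agree, e.g., via a diagonal argument over a countable dense set or simply by noting that finite dimensionality lets us extract a subsequence along which $\bar{\mathbb{T}}_n\big|_{H^1 \times H^2}$ converges in operator norm, linearity of the limit operator is immediate. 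Boundedness of $\mathbb{T}$ then follows from the uniform bound $\|\bar{\mathbb{T}}_n\big|_{H^1 \times H^2}\| \leq c_{\mathbb{T}}$ established in Lemma \ref{Lemma 18}, which passes to the limit: $\|\mathbb{T}\| \leq c_{\mathbb{T}}$.

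The main obstacle I anticipate is handling the dependence of the operators $\bar{\mathbb{T}}_n$ on the input data sequence in a way that makes the limit operator $\mathbb{T}$ genuinely well-defined as an operator on the domain $\mathcal{X}_{\mathbb{T}}$ rather than merely pointwise. Because $\mathcal{X}_{\mathbb{T}}$ is not a vector space (Remark \ref{Remark 19}), and because the subsequence extraction is needed for convergence, one cannot quite say "$\mathbb{T}$ is a limit of linear operators on a fixed Banach space." The cleanest route is to observe that, for each fixed valid input sequence, the restricted operators $\bar{\mathbb{T}}_n\big|_{H^1\times H^2}$ are genuine linear operators on the fixed finite-dimensional space $H^1 \times H^2$, extract a norm-convergent subsequence there, and \emph{then} argue that the limit does not depend on which valid input sequence was chosen beyond what is recorded in the notation — i.e., the theorem asserts well-definedness precisely in the subsequential sense stated, and the linearity/boundedness claims are with respect to the initial-estimate argument only. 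I would make this scoping explicit at the start of the proof, mirroring the notational conventions already set up before the theorem statement, so that the finite-dimensional compactness argument and the uniform bound from Lemma \ref{Lemma 18} together close the proof.
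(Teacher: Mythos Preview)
Your proposal is correct but takes a genuinely different route from the paper. The paper first establishes \emph{weak} convergence along a subsequence and only afterwards upgrades to strong convergence using finite dimensionality. Concretely, it defines $\Psi_n(f)[g] = \langle \bar{\mathbb{T}}_n(g), f\rangle$, invokes Helley's theorem on a countable dense set $\{\psi_j\}\subset H^1\times H^2$ to extract (for each $\psi_j$) a weak-$*$ convergent subsequence, runs Cantor's diagonal argument to obtain a single subsequence $\{n_k\}$ working for all $\psi_j$, and then uses the uniform equicontinuity of Lemma~\ref{Lemma 18} to extend weak convergence from the dense set to all of $H^1\times H^2$. The Riesz--Fr\'echet representation identifies the limiting functional with an element of $H^1\times H^2$. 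Only at the very end does the paper use the finite orthonormal basis $\{\hat\varphi_j\}$ to turn weak convergence into strong convergence.

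Your approach bypasses the entire weak-convergence layer by exploiting finite dimensionality immediately: since $\dim(H^1\times H^2)<\infty$, the operators $\bar{\mathbb{T}}_n\big|_{H^1\times H^2}$ live in a finite-dimensional operator space and the uniform bound $c_{\mathbb{T}}$ from Lemma~\ref{Lemma 18} lets you extract an operator-norm convergent subsequence directly, which automatically gives strong convergence at every point with a common index set. This is shorter and more transparent in the present finite-dimensional setting. What the paper's longer argument buys is exactly what Remark~\ref{Remark 22} records: the weak-convergence portion (Helley + diagonalization + equicontinuity extension) goes through for any \emph{separable} Hilbert space, not just a finite-dimensional one, so it is positioned for the infinite-dimensional extension flagged in Section~\ref{Section 5}. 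Your direct compactness argument would not survive that generalization.
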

\begin{proof}
For any $g \in H^{1} \times H^{2}$, to prove that the sequence (or a subsequence of it ), $\{\bar{\mathbb{T}}_{n}(g)\}$ converges weakly to $\mathbb{T}(g)$, we have to prove $\{\Psi(\bar{\mathbb{T}}_{n}(g))\}$ converges to $\Psi(\mathbb{T}(g))$ for all $\Psi \in  (H^{1} \times H^{2})^*$. Since $ (H^{1} \times H^{2})^*$ is isomorphic to  $H^{1} \times H^{2}$, for every $\Psi \in (H^{1} \times H^{2})^*, \; \exists f_{\Psi} \in H^{1} \times H^{2}$ such that $\Psi(g) = \langle g, f_{\Psi} \rangle_{H^{1} \times H^{2}}, \; \forall g \in H^{1}\times H^{2}$. Following this reasoning, we define,
\begin{align*}
    \Psi_{n}(f)[g] \overset{\Delta}{=} \langle \bar{\mathbb{T}}_{n}(g), f \rangle_{H^1 \times H^2}, f \in H^{1} \times H^{2}. 
\end{align*}
Pointwise convergence on $ (H^{1} \times H^{2})^*$ is modified to pointwise convergence on $H^{1} \times H^2$. Since $H^{1} \times H^{2}$ is finite dimensional space, it is separable. Let $\{\psi_n\}$ be an enumeration of a countable dense set, $\mathbb{D}$, in $H^{1} \times H^{2}$. Consider the sequence $\Psi_{n}(f)[\psi_1]$. By Lemma \ref{Lemma 18} and the CBS inequality, it follows that,
\begin{align*}
    | \langle \bar{\mathbb{T}}_{n}(\psi_1), f \rangle_{H^1 \times H^2} | \leq c_{\mathbb{T}} \;  || \psi_1 ||_{H^1 \times H^2} \; || f ||_{H^1 \times H^2}, 
\end{align*}
i.e., the sequence $\{\Psi_{n}(\cdot)[\psi_1]\}$ is bounded sequence of bounded linear functionals on a separable Hilbert space. By \textit{Helley's theorem} (refer chapter 8, \cite{royden2010real}) there exists a subsequence, a strictly increasing sequence of integers, $\{s(1,n)\}$, such that $\{\Psi_{s(1,n)}(\cdot)[\psi_1]\}$ converges pointwise to $\Psi^{*}_{\psi_1}(\cdot) \in (H^{1} \times H^{2})^*$. By the \textit{Riesz- Fr\'echet Representation Theorem} (refer  chapter 16, \cite{royden2010real}), $\exists  \psi^{*}_1 \in H^{1} \times H^{2}$ such that,
\begin{align*}
\Psi^{*}_{\psi_1}(f) =  \Psi^{*}(f)[ \psi^{*}_1] \overset{\Delta}{=} \langle  \psi^{*}_1, f \rangle, \forall f \in H^{1} \times H^{2} 
\end{align*}
Since the convergence, $\Psi_{s(1,n)}(\cdot)[\psi_1] \to \Psi^{*}(\cdot)[ \psi^{*}_1] $  is pointwise on the dual space , $(H^{1} \times H^{2})^*$,  the $ \psi^{*}_1$ represents an element in dual of $(H^{1} \times H^{2})^*$, which is $(H^{1} \times H^{2})$. Hence, it appears in ``linear" term of the inner product rather than the ``antilinear term." Thus, $ \bar{\mathbb{T}}_{s(1,n)}(\psi_1) \rightharpoonup \psi^{*}_1$. Since all the inner products and norms in this proof are evaluated in $H^1 \times H^2$, going forward this is suppressed.  

By the same argument, since the sequence $\langle \bar{\mathbb{T}}_{n}(\psi_2), f \rangle$ is bounded, there exists a subsequence of $\{s(1,n)\}$, $\{s(2,n)\}$, such that $\bar{\mathbb{T}}_{s(2,n)}(\psi_2) \rightharpoonup \psi^{*}_2, \psi^{*}_2 \in H^{1} \times H^{2}$. We can inductively continue this process to obtain a strictly increasing sequence of integers $\{s(j,n)\}$ which is a subsequence of $\{s(j-1,n)\}$ such that $\bar{\mathbb{T}}_{s(j,n)}(\psi_k) \rightharpoonup \psi^{*}_k$. For each $j$, we define $\mathbb{T}(\psi_j)$ as $\psi^{*}_j$. By \textit{Cantor's diagonalization argument} (refer chapter 8, \cite{royden2010real}), consider the subsequence of operators, $\bar{\mathbb{T}}_{n_k}(\cdot)$ , where $n_{k} =s(k,k)$ for every index $k$. For each $j$, the subsequence, $\{n_{k}\}^{\infty}_{k=j}$ is a subsequence of the $j$th subsequence of integers chosen before and thus $ \bar{\mathbb{T}}_{n_k}(\psi_{j}) \overset{k}{\rightharpoonup} \mathbb{T}(\psi_j)\; \forall j \in \mathbb{N}.$ Thus, $\{\bar{\mathbb{T}}_{n_k}(\psi) \}$ converges weakly to  $\mathbb{T}(\psi)$ on $\mathbb{D}$.

Let $g$ be any function in $H^{1} \times H^{2}$. We claim that the sequence $\{\Psi_{n_k}(f)[g]\}$ is Cauchy for every $f$, where $ \Psi_{n_k}(f)[g] = \langle \bar{\mathbb{T}}_{n_k}(g), f \rangle$. Indeed, since the sequence operators $\{\bar{\mathbb{T}}_{n_k} \Big|_{H^{1} \times H^{2}} \}$ is equicontinuous, $\forall\epsilon >0, \exists  \delta > 0$ such that, $|| \bar{\mathbb{T}}_{n_k}(h) - \bar{\mathbb{T}}_{n_k}(g)|| < \frac{\epsilon}{3 \times ||f||}$ for all $h$ such that $|| h -g || <\delta$, all indices $n_k$, and, any valid input data signals at $h,g$. This implies that $ | \langle \bar{\mathbb{T}}_{n_k}(h), f \rangle  - \langle \bar{\mathbb{T}}_{n_k}(g), f \rangle | \leq || \bar{\mathbb{T}}_{n_k}(h) - \bar{\mathbb{T}}_{n_k}(g)||\; ||f|| <  \frac{\epsilon}{3}$ $\forall h$ such that $|| h -g || <\delta$ and all indices $n_k$. Since $\mathbb{D}$ is dense in $H^{1} \times H^{2}$, there exists $\psi \in \mathbb{D}$ such that $||\psi - g|| < \delta$. Since the sequence, $\{ \langle \bar{\mathbb{T}}_{n}(\psi), f \rangle\}$ is cauchy, there exists $N_{\epsilon}$ such that $|\langle \bar{\mathbb{T}}_{n_p}(\psi), f \rangle  - \langle \bar{\mathbb{T}}_{n_q}(\psi), f \rangle  | < \frac{\epsilon}{3}, \; \forall p, q \geq N_{\epsilon}$. Thus, for all $p,q \geq N_{\epsilon}$,
\begin{align*}
&\hspace{-0.5cm}|\Psi_{n_p}(f)[g] - \Psi_{n_q}(f)[g]| \leq \\
&\hspace{-0.5cm}|\Psi_{n_p}(f)[g] - \Psi_{n_p}(f)[\psi]| +
|\Psi_{n_p}(f)[\psi] - \Psi_{n_q}(f)[\psi]|+ \\
&\hspace{-0.5cm}|\Psi_{n_q}(f)[\psi] - \Psi_{n_q}(f)[g]| < \frac{\epsilon}{3} + \frac{\epsilon}{3} + \frac{\epsilon}{3} = \epsilon.
\end{align*}
\begin{center}
\begin{tikzpicture}
  \matrix (m) [matrix of math nodes, row sep=5em,
    column sep=5em]{
     \Psi_{n_p}(f)[g] & \Psi_{n_p}(f)[\psi] \\
     \Psi_{n_q}(f)[g] & \Psi_{n_q}(f)[\psi] \\};
     \path[-stealth]
     (m-1-1)	edge node [fill=white] {$\frac{\epsilon}{3}$} 	    (m-1-2)
		edge node [fill=white] {$\epsilon$} 		 (m-2-1)
  
    (m-1-2) edge node [fill=white] {$\frac{\epsilon}{3}$}		 (m-2-2)
             
    (m-2-2) edge node [fill=white] {$\frac{\epsilon}{3}$}		(m-2-1)  ;    
\end{tikzpicture}
\end{center}
Thus, $\{\Psi_{n_k}(f)[g]\}$ converges to real number which denote by $\Psi^*(f)[g]$. From the linearity of each functional in the sequence $\{\Psi_{n_k}(\cdot)[g]\}$, it follows that $\Psi^*(\cdot)[g]$ is linear. Since, $| \Psi_{n_k}(f)[g] | \leq c_{\mathbb{T}} ||g||\; ||f||, \forall n_{k}$, it follows that $\underset{k \to \infty} \lim  | \Psi_{n_k}(f)[g] | = |\underset{k \to \infty} \lim   \Psi_{n_k}(f)[g]| = \hspace{-3pt} |\Psi^*(f)[g]|  \leq c_{\mathbb{T}}||g||\; ||f||$. Thus, $\Psi^*(\cdot)[g] \in (H^{1} \times H^{2})^*$. By the \textit{Riesz- Fr\'echet Representation Theorem}, $\exists  \psi^{*}_g \in H^{1} \times H^{2}$ such that $\Psi^{*}(f)[g] = \langle  \psi^{*}_g, f \rangle, \forall f \in H^{1} \times H^{2}$. Thus, $\langle \bar{\mathbb{T}}_{n_k}(g), f \rangle \to \langle \psi^{*}_g, f \rangle, \forall f \in H^{1} \times H^{2}$.  We let $\mathbb{T}(g) = \psi^{*}_g$ and from the arguments presented, we conclude that $\bar{\mathbb{T}}_{n_k}(g) \rightharpoonup \mathbb{T}(g), \; \forall g \in H^{1} \times H^{2}$. 

Let $\{\hat{\varphi}_{j}\}^{|\mathcal{I}^1| + |\mathcal{I}^2|}_{j=1} \hspace{-3pt}$ be an orthonormal basis for $H^{1} \times H^{2}$ obtained by applying the \textit{Gram–Schmidt orthonormalization process} to $\{\varphi^{1}_{j} ; \theta^2 \}_{j \in \mathcal{I}^{1}} \cup \{\theta^1 ; \varphi^{2}_{j} \}_{j \in \mathcal{I}^{2}}$. By the weak convergence result, $\Psi_{n_k}(\hat{\varphi}_{j})[g] \to \Psi^*( \hat{\varphi}_{j})[g] , \; \forall g \in H^{1} \times H^{2}$ and $j$, where $\Psi_{n_k}(f)[g] \overset{\Delta}{=} \langle \bar{\mathbb{T}}_{n_k}(g),f\rangle $ and $ \Psi^*(f)[ g ] \overset{\Delta}{=} \langle \mathbb{T}(g), f\rangle$ as defined before. This implies that, for given $g \in H^{1} \times H^2$,  $\forall \epsilon >0$, $\exists N_{\epsilon,g}$ such that $|\Psi_{n_k}(\hat{\varphi}_{j})[g] - \Psi^*(\hat{\varphi}_{j})[ g] | < \frac{\epsilon}{|\mathcal{I}^1| + |\mathcal{I}^2|}, \forall j, \forall k \geq N_{\epsilon,g}$. Thus, $\forall \epsilon >0$, $\exists N_{\epsilon,g}$ such that 
\begin{align*}
&\hspace{-0.9cm}|| \bar{\mathbb{T}}_{n_k}(g) - \mathbb{T}(g) || = \Big|\Big| \hspace{-0.1cm} \sum^{|\mathcal{I}^1| + |\mathcal{I}^2|}_{j=1} \hspace{-0.1cm} \Big [ \Psi_{n_k}(\hat{\varphi}_{j})[g] - \Psi^*(\hat{\varphi}_{j})[g] \Big] \hat{\varphi}_j \Big|\Big| \\
&\hspace{-0.9cm} \overset{(a)}{\leq}  \sum^{|\mathcal{I}^1| + |\mathcal{I}^2|}_{j=1}   | \Psi_{n_k}(\hat{\varphi}_{j}) [g] - \Psi^*(\hat{\varphi}_{j})[g] | < \epsilon.
\end{align*}
Inequality $(a)$ follows from triangle inequality and the normalization of $\{\hat{\varphi}_{j} \}$ i.e., $||\hat{\varphi}_{j} || =1$. Thus, $\{ \bar{\mathbb{T}}_{n_k}(g) \}$ converges to $\mathbb{T}(g)$ in strong topology for all $g \in H^{1} \times H^{2}$. From the linearity of the sequence of operators, $\{ \bar{\mathbb{T}}_{n_k} \}$, in the initial estimates, it follows that  $\mathbb{T}$ is linear in the initial estimates. $|| \mathbb{T}(f) || = \underset{k \to \infty}  \lim ||\bar{\mathbb{T}}_{n_k}(f) || \leq \underset{k \to \infty} \lim  ||\bar{\mathbb{T}}_{n_k} || \; || f || \leq c_{\mathbb{T}} ||f ||$. Thus,  $\mathbb{T}$ is bounded in the initial estimate given any corresponding valid input data sequence. 
\end{proof}
\begin{remark}\label{Remark 22}
We note that weak convergence result in the proof of theorem holds for separable Hilbert spaces, not just finite dimensional or compact Hilbert spaces. The proof of extending weak convergence to strong convergence holds only for finite dimensional spaces.
\end{remark}
The above proof can be analyzed as follows. If we were to prove that every sequence of estimated functions is bounded  with a different bound, by the \textit{Bolzano–Weierstrass Theorem}, there exists a subsequence of estimated functions which converges in norm. Further, for different sequences of estimated functions, there could different subsequences which are converging. In the case of an infinite dimensional Hilbert space, the boundedness of the sequence of estimated functions would only imply a weakly convergent subsequence. 

However the operator theoretic framework presents a stronger result: Theorem \ref{Theorem 21} is a ``stronger" statement in the sense that for every sequence of learned functions, there is a common subsequence that converges across every sequence to function in the product of the knowledge spaces. For an initial condition $f$ and valid input data sequence , consider the sequence of functions, $\phi_{n} : (H^{1} \times H^{2})^* \to \mathbb{R} $ as $\phi_{n}(\Psi) = \langle \bar{\mathbb{T}}_{n}(f), T (\Psi) \rangle_{H^{1} \times H^{2}}$ where $T :(H^{1} \times H^{2})^* \to H^{1} \times H^{2} $ is an isometric anti-linear isomorphism that assigns to any $\Psi \in (H^{1} \times H^{2})^*$,  a representing element in $H^{1} \times H^{2}$. $\{\phi_n\}$ is a sequence of equicontinuous functions on the dual space which are pointwise bounded which implies that there exists  a subsequence which converges pointwise to  real valued function on the dual space. 

The topology of the space of initial conditions, specifically its separability, enables us to find a subsequence which converges pointwise for every sequence $\{\phi_{n}\}$ generated by choosing an initial condition $f$  from a countable dense set. Equivalently,  we find a subsequence which weakly converges for every sequence of learned functions initiated from a countably dense set of space of initial conditions. Then, we prove that this subsequence works for every sequence $\{\phi_{n}\}$  generated by varying the initial condition over its entire space. This is achieved by utilizing the separability of the space and the equicontinuity of $\{\phi_{n}\}$. Linearity and equicontinuity of $\{\phi_n\}$ enables us to show that the function to which the common subsequence converges to is also linear and continuous for every initial condition; thus representing a element in the dual of the dual space, i.e, the original Hilbert space. This equivalently translates to a common subsequence converging weakly for every sequence of estimated functions.
\begin{corollary}\label{Corollary 23}
The estimation algorithm presented in algorithm 1 is strongly consistent. 
\end{corollary}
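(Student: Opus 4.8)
The plan is to observe that Corollary~\ref{Corollary 23} is essentially a transcription of Theorem~\ref{Theorem 21} once one identifies the sequence of learned functions produced by MAEA3 with the orbit of the operators $\{\bar{\mathbb{T}}_{n}\}$. First I would argue that, by the construction in Section~\ref{Section 3} (in particular Definitions~\ref{Definition 14} and~\ref{Definition 15} and the recursion in Equation~\ref{Equation 9}), for any initial estimates $(f^{1}_{0}; f^{2}_{0})$ and any input data sequence the downloaded (learned) functions satisfy
\begin{align*}
(\bar{f}^{1}_{n}; \bar{f}^{2}_{n}) = \bar{\mathbb{T}}_{n}\big(\{(\varrho^{1}_{k}; \varrho^{2}_{k}; \varrho_{k})\}^{n}_{k=1}\big)\big[f^{1}_{0}; f^{2}_{0}; \{\psi^{1}_{(x^{1}_{j}; y^{1}_{j})}; \psi^{2}_{(x^{2}_{j}; y^{2}_{j})}\}^{n}_{j=1}\big],
\end{align*}
i.e. the estimation operator until iteration $n$ faithfully abstracts iterations $1$ through $n$ of the algorithm. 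This is precisely item~1 of Problem~\ref{Problem 5}, already established constructively, so here it requires only a sentence of recollection.

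Next I would fix the interpretation of a ``suitable'' data sequence in Definition~\ref{Definition 4} to be a \emph{valid} sequence in the sense of Definition~\ref{Definition 16}; these are exactly the sequences for which the hypotheses of Theorem~\ref{Theorem 21} hold and for which Lemma~\ref{Lemma 18} gives uniform equicontinuity of $\{\bar{\mathbb{T}}_{n}|_{H^{1}\times H^{2}}\}$. Theorem~\ref{Theorem 21} then supplies a single subsequence, indexed by the strictly increasing sequence $\{n_{k}\}$ obtained from the Helley–Cantor diagonalization argument in its proof, such that $\bar{\mathbb{T}}_{n_{k}}(f^{1}_{0}; f^{2}_{0}; \cdot) \to (f^{1,*}; f^{2,*})$ strongly in $H^{1}\times H^{2}$ \emph{for every} initial condition $(f^{1}_{0}; f^{2}_{0})$ and every valid data sequence. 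Setting $I = \{n_{k}\}$ gives exactly the common infinite index set demanded by Definition~\ref{Definition 4}: along $I$ every sequence of learned functions converges in the product of the strong topologies. Hence MAEA3 is strongly consistent, which is the claim.

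I expect the only genuine subtlety — and it is already discharged inside the proof of Theorem~\ref{Theorem 21}, so the corollary itself should be only a few lines — to be the verification that the subsequence $I$ is truly \emph{common} to all sequences of learned functions rather than depending on the chosen initial condition and data; this is where separability of $H^{1}\times H^{2}$, the diagonalization over a countable dense set, and the uniform equicontinuity from Lemma~\ref{Lemma 18} (which upgrades convergence on a dense set to convergence at every $g \in H^{1}\times H^{2}$, and weak to strong convergence via finite-dimensionality) do the work. Given those results, writing the proof of Corollary~\ref{Corollary 23} amounts to the identification above together with a pointer back to Definition~\ref{Definition 4}.
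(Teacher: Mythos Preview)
Your proposal is correct and follows exactly the paper's approach: the paper's proof of Corollary~\ref{Corollary 23} is the single line ``Follows from Definition~\ref{Definition 4} and Theorem~\ref{Theorem 21},'' and your argument is precisely the unpacking of that sentence via the identification $(\bar{f}^{1}_{n};\bar{f}^{2}_{n}) = \bar{\mathbb{T}}_{n}[\cdot]$ and the common subsequence supplied by Theorem~\ref{Theorem 21}.
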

\begin{proof}
Follows from Definition \ref{Definition 4} and Theorem \ref{Theorem 21}. 
\end{proof}
An alternative approach to show consistency, would have been to require that the norm of the estimation operators at every iteration is less than $1$, i.e, a contraction map across some suitable space. This would lead to proving a stronger result in a sense  that the entire sequence of learned functions converges. By allowing the norms of the operators, $\{\bar{\mathbb{T}}_n\}$, to be in the neighbourhood of $1$ instead of strictly less than  $1$, we lose the convergence of the entire sequence. However, we have shown that there is a common subsequence that converges for every sequence of learned functions, which would in a sense correspond to regaining the stronger result with some suitable re-indexing. In following, we prove that given an input data sequence, there exists an initial condition which is invariant to the estimation process. 
\subsection{Invariance to Estimation Algorithm}
The following definition is a counterpart to Definition \ref{Definition 16}.  For given input sequence, we define the set of valid initial conditions. For any valid initial condition and the given input, we consider the restriction of the estimation operator in Definition \ref{Definition 20} and subsequently study its continuity. 
\begin{definition}\label{Definition 24}
    Given an input data sequence, $s = \{(x^{1}_{n}; y^{1}_{n}) , \\ (x^{2}_{n}; y^{2}_{n})\}$, let $E_{s} \subset H^{1} \times H^{2}$, $E_{s} \neq \varnothing $,  be the set of initial conditions for which $s$ is a valid input. Then, we define $\mathbb{T}_s : E_s  \to H^{1} \times H^{2}$ as follows. For $(f^{1}; f^{2}) \in E_{s}$, 
    \begin{align*}
        &\hspace{-0.6cm}\mathbb{T}_{s}(f^{1};f^2) = \\
        &\hspace{0.3cm}\mathbb{T}(\{(\varrho^{1}_n;\varrho^{2}_n;\varrho_n)\})\Big[f^{1};f^{2}; \{\psi^{1}_{(x^{1}_n;y^{1}_n)}; \psi^{2}_{(x^{2}_n;y^{2}_n)} \}_{n \geq 1}\Big]. 
    \end{align*}
\end{definition}
\begin{lemma}\label{Lemma 25}
   $\mathbb{T}_s(\cdot)$ is continuous. 
\end{lemma}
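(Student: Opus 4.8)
The plan is to show that $\mathbb{T}_s$ is (locally) Lipschitz on $E_s$, hence continuous, by reducing the increment $\mathbb{T}_s(f^1;f^2)-\mathbb{T}_s(g^1;g^2)$ to the homogeneous (zero--data) part of the finite--horizon operators $\bar{\mathbb{T}}_n$ evaluated at the difference $(f^1-g^1;f^2-g^2)$, and bounding that part uniformly in the horizon $n$.

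First I would use linearity to cancel the data contribution. Fix $(f^1;f^2),(g^1;g^2)\in E_s$. By Definition \ref{Definition 24} both initial conditions generate the iterates $\bar{\mathbb{T}}_n$ driven by one and the same data string $\sigma_n:=\{\psi^1_{(x^1_j;y^1_j)};\psi^2_{(x^2_j;y^2_j)}\}_{j=1}^n$ read off from $s$. Since each $\bar{\mathbb{T}}_n$ is a \emph{linear} operator on $H^1\times H^2\times(H^1\times H^2)^n$ (Definition \ref{Definition 15}), subtracting the two iterates makes the $\sigma_n$--part drop out, and with $\theta$ denoting the length-$n$ zero data string,
\[
\bar{\mathbb{T}}_n(\cdot)\big[f^1;f^2;\sigma_n\big]-\bar{\mathbb{T}}_n(\cdot)\big[g^1;g^2;\sigma_n\big]=\bar{\mathbb{T}}_n(\cdot)\big[f^1-g^1;f^2-g^2;\theta\big].
\]

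Next I would bound the right--hand side uniformly in $n$. Feeding the zero data string into Inequality \ref{Equation 10} annihilates the summation term, leaving
\[
\big\|\bar{\mathbb{T}}_n(\cdot)[f^1-g^1;f^2-g^2;\theta]\big\|\ \le\ \Big(\prod_{j=1}^n\|\mathbb{T}_j\|\Big)\big\|(f^1-g^1;f^2-g^2)\big\|^2_{H^1\times H^2}\ \le\ c_{M_1}\,\big\|(f^1-g^1;f^2-g^2)\big\|^2_{H^1\times H^2},
\]
the last step using the re--indexing of $\{\mathbb{T}_n\}$ carried out after Lemma \ref{Lemma 17}, for which $\prod_{j=1}^n\|\mathbb{T}_j\|\le c_{M_1}<\infty$ for every $n$. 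Finally, I would evaluate along the subsequence $\{n_k\}$ of Theorem \ref{Theorem 21} (passing to a further subsequence if needed so that both $\bar{\mathbb{T}}_{n_k}$--iterates converge strongly, which is admissible because $(f^1;f^2),(g^1;g^2)\in E_s$): by Definition \ref{Definition 24} and Theorem \ref{Theorem 21} the iterates $\bar{\mathbb{T}}_{n_k}(\cdot)[f^1;f^2;\sigma_{n_k}]$ and $\bar{\mathbb{T}}_{n_k}(\cdot)[g^1;g^2;\sigma_{n_k}]$ converge strongly to $\mathbb{T}_s(f^1;f^2)$ and $\mathbb{T}_s(g^1;g^2)$, so the left--hand side of the cancellation identity converges strongly to $\mathbb{T}_s(f^1;f^2)-\mathbb{T}_s(g^1;g^2)$, while the uniform bound survives the limit; hence
\[
\big\|\mathbb{T}_s(f^1;f^2)-\mathbb{T}_s(g^1;g^2)\big\|\ \le\ c_{M_1}\,\big\|(f^1-g^1;f^2-g^2)\big\|^2_{H^1\times H^2},
\]
from which continuity of $\mathbb{T}_s$ on $E_s$ is immediate.

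The main obstacle I anticipate is the bookkeeping forced by the fact that $E_s$ need not be a linear subspace of $H^1\times H^2$: one must check that the zero data string is a legitimate argument of $\bar{\mathbb{T}}_n$ even though it may violate the validity condition of Definition \ref{Definition 16} (it is, since $\bar{\mathbb{T}}_n$ is defined on the whole product space, and the estimate just used relies only on the finiteness of $c_{M_1}$, not on validity), and that the strong limits defining $\mathbb{T}_s(f^1;f^2)$ and $\mathbb{T}_s(g^1;g^2)$ can be taken along one common subsequence --- this is guaranteed by the Cantor diagonalization in the proof of Theorem \ref{Theorem 21}, which produces a subsequence valid simultaneously for every initial condition. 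A minor point worth flagging is that the argument uses only that the \emph{homogeneous} part of $\bar{\mathbb{T}}_n$ is linear and uniformly bounded, which is precisely the content behind the statement ``$\mathbb{T}$ is linear with respect to the initial estimates'' in Theorem \ref{Theorem 21}.
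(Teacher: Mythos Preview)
Your argument is correct and is in fact cleaner than the route taken in the paper. The paper proves continuity at a point $f\in E_s$ by first arguing that $E_s$ is open (so that a ball $B_{\delta_f}(f)\subset E_s$ exists), then using equicontinuity of $\{\bar{\mathbb{T}}_{n_k}\}$ together with a finite cover of $B_{\delta_f}(f)$ to upgrade pointwise Cauchy to uniformly Cauchy on that ball, and finally invoking the Moore--Osgood theorem to interchange the two limits $\lim_{j}\lim_{k}\bar{\mathbb{T}}_{n_k}(f_j)=\lim_{k}\lim_{j}\bar{\mathbb{T}}_{n_k}(f_j)$. Your approach short--circuits all of this: by exploiting the full linearity of $\bar{\mathbb{T}}_n$ on $H^1\times H^2\times(H^1\times H^2)^n$, you cancel the common data string $\sigma_n$ and reduce the problem to bounding the homogeneous action on the difference of initial conditions, which is controlled uniformly in $n$ by the constant $c_{M_1}$ of Lemma \ref{Lemma 17}. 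Passing to the limit along the diagonal subsequence of Theorem \ref{Theorem 21} (which, as you correctly note, works simultaneously for every initial condition, so no further extraction is needed) yields a global H\"older--type bound on $E_s$ rather than mere local continuity. What your route buys is simplicity and a quantitative estimate; what the paper's route would buy, in principle, is applicability when linearity in the initial estimate fails, since its Arzel\`a--Ascoli--style argument only uses equicontinuity. A minor cosmetic point: the right--hand side you obtain from Inequality \ref{Equation 10} is quadratic in $\|(f^1-g^1;f^2-g^2)\|$, not linear, which is an artifact of how that inequality is stated in the paper; this is harmless for continuity but worth flagging if you later want a genuine Lipschitz constant.
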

\begin{proof}
First, we note that even though $\mathbb{T}$ is linear, $\mathbb{T}_s(\cdot)$ is not linear as its domain is a subset of a vector space, not a vector space by itself.  For $f \in E_{s}$, from Theorem \ref{Theorem 21}, it follows that there is a subsequence of $\{\bar{\mathbb{T}}_{n}(f, \ldots)\}$, $\{\bar{\mathbb{T}}_{n_k}(f, \ldots)\}$, which converges strongly for any $f \in E_s$. With a slight abuse of notation, we will refer to $\bar{\mathbb{T}}_{n_k}(f, \ldots)$ as $\bar{\mathbb{T}}_{n_k}(f)$ as the second argument is fixed to input data sequence $s$. For $f \in E_{s}$, from Definition \ref{Definition 16} it follows that, $\exists \delta_f >0$ such that $B_{\delta_f}(f) \subset E_{s}$ where $B_{\delta}(f)  \overset{\Delta}{=}\{g \in H^{1} \times H^{2} : || g- f|| < \delta\}$. Given the valid input data sequence, the linearity and uniform boundedness (Lemma \ref{Lemma 18}) of $\{\bar{\mathbb{T}}_{n_k}\}$ implies that given $\epsilon > 0, \exists\;\delta,  0 < \delta \leq  \delta_f$, such that, for any $g, h \in B_{\delta_f}(f)$
\begin{align*}
    || g - h || < \delta \implies || \bar{\mathbb{T}}_{n_k}(g) - \bar{\mathbb{T}}_{n_k}(h) || < \frac{\epsilon}{3}, \forall k
\end{align*}
Since $B_{\delta_f}(f)$ is compact, given $\delta >0$ as above, there exists $\{\varphi_{f,j, \delta}\}^{n_f}_{j=1}$ such that $B_{\delta}(f) \subset \bigcup^{n_f}_{j=1}B_{\delta}(\varphi_{f,j, \delta})$.  Since $\{\bar{\mathbb{T}}_{n_k}(\varphi_{f,j, \delta})\}$ is strongly Cauchy for every $\varphi_{f,j, \delta}$,  given $\epsilon > 0, \exists N_{\epsilon} $ such that $|| \bar{\mathbb{T}}_{n_p}(\varphi_{f,j, \delta}) - \bar{\mathbb{T}}_{n_q}(\varphi_{f,j, \delta}) || < \frac{\epsilon}{3}, \forall p,q \geq N_{\epsilon}, j= 1, \ldots, n_f$. Thus,  given $\epsilon >0$, $\exists \; p, q \geq N_{\epsilon}$, such that $\forall \; g \in B_{\delta_f}(f)$, 
\begin{align*}
&\hspace{-0.7cm}||  \bar{\mathbb{T}}_{n_p}(g) - \bar{\mathbb{T}}_{n_q}(g) || \leq  ||  \bar{\mathbb{T}}_{n_p}(g) - \bar{\mathbb{T}}_{n_p}(\varphi_{f,j, \delta}) || + \\
&\hspace{-0.7cm}||  \bar{\mathbb{T}}_{n_p}(\varphi_{f,j, \delta}) - \bar{\mathbb{T}}_{n_q}(\varphi_{f,j, \delta}) ||  + ||  \bar{\mathbb{T}}_{n_q}(\varphi_{f,j, \delta}) - \bar{\mathbb{T}}_{n_q}(g) || < \epsilon.
\end{align*} 
 Thus, the sequence $\{\bar{\mathbb{T}}_{n_p}(g)\}$ is uniformly Cauchy for every $g \in B_{\delta_f}(f)$ and uniformly converges to $\mathbb{T}_s(g)$. Let $\{f_n\}$ be a sequence in $B_{\delta_f}(f)$ which converges to $f$. Then, 
 \begin{align*}
     \underset{ j \to \infty}{\lim} \mathbb{T}_s (f_j) =  \underset{ j \to \infty}{\lim} \; \underset{k \to \infty}{\lim}\bar{\mathbb{T}}_{n_k} (f_j) \overset{(a)}{=} \underset{k \to \infty}{\lim}\;  \underset{ j \to \infty}{\lim} \bar{\mathbb{T}}_{n_k} (f_j) \\
     =  \underset{k \to \infty}{\lim}\; \bar{\mathbb{T}}_{n_k} (f) = \mathbb{T}_{s}(f).
     \vspace{-0.5cm}
 \end{align*}
 where equality $\overset{(a)}{=}$ follows from the \textit{Moore-Osgood Theorem} with $\underset{k \to \infty}{\lim} || \bar{\mathbb{T}}_{n_k}(f_{j}) - \mathbb{T}_s(f_j) ||= 0$ uniformly and $\underset{j \to \infty}{\lim}|| \bar{\mathbb{T}}_{n_k}(f_{j}) - \bar{\mathbb{T}}_{n_k}(f) ||= 0$.
\end{proof}
 An alternative approach would have been to show that the sequence $\{ \bar{\mathbb{T}}_{n_k} \}$ of operators is Cauchy in space of linear continuous operators from $H^{1} \times H^{2}$ to  $H^{1} \times H^{2}$, $\mathcal{L}(H^{1} \times H^{2}, H^{1} \times H^{2})$, equipped with the operator norm. Then, the completeness of this space with the associated norm, would imply that the sequence converges to a continuous operator on $H^{1} \times H^{2}$. However, this approach could not be used as the domain of $\mathbb{T}_{s}$ is $E_s$ and not $H^{1} \times H^2$. 
\begin{corollary}\label{Corollary 26}
Suppose $E_s$ is convex and compact, and,  $\bar{\mathbb{T}}_s = \frac{\mathbb{T}_s}{|| \mathbb{T}_s||}$ is such that $\bar{\mathbb{T}}_s(E_s) \subset E_s$. Then, $\exists\; f \in E_s$ such that $\bar{\mathbb{T}}_s(f)  = f$.
\end{corollary}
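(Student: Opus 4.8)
The plan is to recognize this as a fixed-point assertion and dispatch it with a Brouwer/Schauder-type theorem. First I would check that $\bar{\mathbb{T}}_s$ is a well-defined continuous self-map of $E_s$. Outside the pathological case where $\mathbb{T}_s$ is identically $\theta$, the number $\|\mathbb{T}_s\|$ is a fixed positive scalar, so $\bar{\mathbb{T}}_s = \mathbb{T}_s/\|\mathbb{T}_s\|$ is obtained from $\mathbb{T}_s$ by multiplication with a constant; since Lemma \ref{Lemma 25} gives continuity of $\mathbb{T}_s$ on $E_s$ and scaling by $1/\|\mathbb{T}_s\|$ preserves continuity, $\bar{\mathbb{T}}_s$ is continuous on $E_s$. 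By hypothesis $\bar{\mathbb{T}}_s(E_s)\subset E_s$, so $\bar{\mathbb{T}}_s$ is a continuous map of $E_s$ into itself.

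Next I would record the structural facts about the underlying space: $H^1$ and $H^2$ are finite-dimensional RKHS (their feature sets are finite), hence $H^1\times H^2$ is a finite-dimensional normed vector space, and by hypothesis $E_s$ is a nonempty, convex, compact subset of it — thus, up to a linear homeomorphism, a nonempty compact convex subset of some $\mathbb{R}^N$.

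Then I would invoke \emph{Brouwer's fixed point theorem} in the form for continuous self-maps of nonempty compact convex subsets of $\mathbb{R}^N$ (equivalently \emph{Schauder's fixed point theorem}, which would cover a separable infinite-dimensional version of the statement with no change to the argument): a continuous map $\bar{\mathbb{T}}_s : E_s \to E_s$ on a nonempty compact convex $E_s$ has a fixed point, i.e. there exists $f\in E_s$ with $\bar{\mathbb{T}}_s(f)=f$, which is exactly the claim.

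The hard part is not the fixed-point step, which is immediate once the hypotheses are assembled, but confirming that those hypotheses are genuinely in hand: that $\mathbb{T}_s$ is single-valued and continuous on all of $E_s$ (supplied by Definition \ref{Definition 24} together with Lemma \ref{Lemma 25}), that $\|\mathbb{T}_s\|$ is well-defined and nonzero so the normalization is legitimate (the pathological case being excluded as in Lemma \ref{Lemma 17}), and that $E_s$ really has the convexity and compactness demanded — the last being a structural assumption on the data and the parameter sequences rather than something proved here. I would also note that the self-map condition $\bar{\mathbb{T}}_s(E_s)\subset E_s$ is itself an assumption of the corollary, so no verification is required beyond what is stated.
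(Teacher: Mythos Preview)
Your proposal is correct and matches the paper's own proof, which simply states that the result follows from Lemma~\ref{Lemma 25} (continuity of $\mathbb{T}_s$, hence of $\bar{\mathbb{T}}_s$) together with Brouwer's fixed-point theorem applied to the continuous self-map of the convex compact set $E_s$. Your additional remarks about finite-dimensionality, the normalization being well-defined, and the Schauder alternative are helpful elaborations but do not depart from the paper's approach.
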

\begin{proof}
The result follows from Lemma \ref{Lemma 25} and \textit{Brouwer's fixed-point theorem}. 
\end{proof}
Using Definition \ref{Definition 16}, we can derive conditions under which the set $E_s$ is convex and compact. However, the condition $\bar{\mathbb{T}}_s(E_s) \subset E_s$ requires further investigation; the unnormalized estimation operator leading to only scaling of the domain needs to be studied. 
\section{Numerical Example}\label{Section 5}
\begin{figure}
\begin{center}
\includegraphics[scale=0.54]{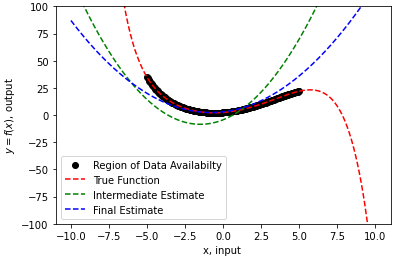}
\includegraphics[scale=0.54]{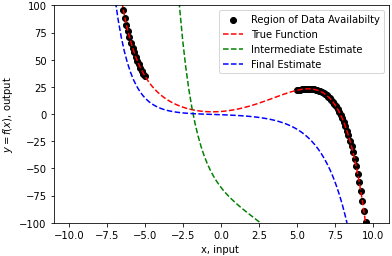}
\caption{True function and estimated functions at: (above) Agent 1 and (below) Agent 2.}
\label{Figure 4}
\vspace{-0.9cm}
\end{center}
\end{figure}
In this section, we consider an example to demonstrate the ColEst2L algorithm mentioned in subsection \ref{Algorithm Description}. True data is generated by considering a real valued function whose inputs are real values, obtained through a linear combination of polynomials and exponentials. Each agent receives noisy version of true data, i.e., noise added to the output data.

Agent 1 considers the features, $\varphi^1_{1}(x) =1, \varphi^1_{2}(x) =x$ and $\varphi^1_3(x) = x^{2}$, while Agent 2 considers the features $\varphi^2_{1}(x) =\exp(-x), \varphi^2_{2}(x) =\exp(x)$. Thus, the kernel corresponding to Agent $1$ is $K^{1}(x,y) = 1 + xy + x^2 y^2$ and to Agent $2$ is $K^{2}(x,y) = \exp(-x -y) + \exp(x+y)$. The domain of the input data for Agent 1 is considered to be $[-5,5]$, while for Agent $2$ it is considered to be $[-10,-5] \cup [5,10]$. At time step $n$, after collecting data point $(x^{i}_n, y^{i}_n)$, each agent solves $(P1)^{i}_n$ to obtain $f^{i}_{n}$ which is then uploaded to the fusion space. 
\begin{figure}
\begin{center}
\includegraphics[scale=0.54]{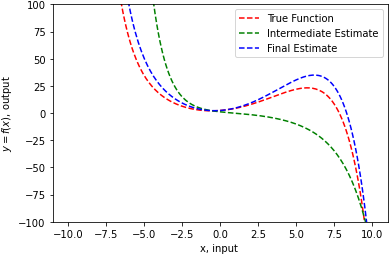}
\caption{True function and estimated functions at Fusion Center.}\label{Figure 5}
\vspace{-1.0cm}
\end{center}
\end{figure}

The fusion space corresponds to the RKHS generated by the kernel, $K(x,y) = 1 + xy + x^2 y^2 +  \exp(-x -y) + \exp(x+y)$. We note that the feature vectors for a given agent and across agents are linearly independent, which implies that $H$ has dimension $5$. We chose $\bar{x}^{1}_{1} = 0, \bar{x}^{1}_{2}= 2, \bar{x}^{1}_{3} = 4, \bar{x}^{1}_{4} = -2, \bar{x}^{1}_{5} =-4$ and $\bar{x}^{2}_{1} = 1, \bar{x}^{2}_{2}= 3, \bar{x}^{2}_{3} = 5, \bar{x}^{2}_{4} = -1, \bar{x}^{2}_{5} =-3$ to solve the fusion problem, subsection \ref{Fusion at Fusion Center}. After estimating $\mathbf{\hat{Y}}^{i}_{n}$ as in Proposition \ref{Proposition 3}, the fusion problem in $(P3)_{n}$ is solved. To find the download operator, first, we choose the set of basis vectors for the space $H$ as $\varphi_1(x)=1, \varphi_2(x) = x, \varphi_3(x) = \sqrt{2}x^2 ,\varphi_4(x) =\exp(-x), \varphi_{5}(x) =\exp(x)$. With these basis vectors, the coefficients for $K^{1}(\cdot,y)$ are $[1;y;y^2; 0; 0]$, and for $K^{2}(\cdot,y)$ are $[0,0,0,\exp(-x),\exp(x)]$. Thus, the matrix representation for $\bar{L}^{i}$ is obtained as follows:
\begin{align*}
&\bar{L}^{1}(\varphi_1)(y)  \hspace{-2pt} = \hspace{-2pt} \langle \varphi_1(\cdot), K^{1}(\cdot,y) \rangle_{H} \hspace{-2pt} =\hspace{-3pt} \langle [1,0,0,0,0],\\
&[1,y, y^2, 0, 0] \rangle_{\mathbb{R}^5} =1. \;\hspace{-2pt}\bar{L}^{1}(\varphi_2)(y) \hspace{-2pt}  = \hspace{-2pt} \langle [0,1,0,0,0],\\ 
&[1,y, y^2, 0,0] \rangle_{\mathbb{R}^5} \hspace{-2pt} = \hspace{-2pt}y. \; \hspace{-2pt}\bar{L}^{1}(\varphi_3)(y) = \langle [0,0,1,0,0], \\
&[1,y, y^{2}, 0,0] \rangle_{\mathbb{R}^5} =y^{2}.\; \bar{L}^{1}(\varphi_4)(y) = 0.\\
&  \bar{L}^{1}(\varphi_5)(y) = 0, L^{1}_{M}= \sqrt{L^{1}_{M}}= 
\begin{bmatrix}
1 & 0 & 0 & 0 &0\\
0 & 1 & 0 & 0 &0\\
0 & 0 & 1 & 0 &0\\
0 & 0 & 0 & 0 &0\\
0 & 0 & 0 & 0 &0\\
\end{bmatrix}, \; \\
&L^{2}_{M}= \sqrt{L^{2}_{M}}= 
\begin{bmatrix}
0 & 0 & 0 & 0 &0\\
0 & 0 & 0 & 0 &0\\
0 & 0 & 0 & 0 &0\\
0 & 0 & 0 & 1 &0\\
0 & 0 & 0 & 0 &1\\
\end{bmatrix}.
\end{align*}
With this setup, simulations were run and the results are demonstrated in Figure \ref{Figure 4} and Figure \ref{Figure 5}. The algorithm ran for approximately $1000$ iterations.  In each of the figures, the true function, function estimated at an intermediate iteration,  and the final estimate are plotted. In the figures pertaining to the agents, the segment of the true function which is accessible to the agents for data collection is also marked.  Our observations from the figures are as follows. Since the kernel for Agent 1 quadratic, the curve estimated by it is quadratic. The final estimate of Agent 1 partially overlaps with the true curve. Since the feature maps for Agent 2 are exponentials, the same is reflected in its estimates. However, neither of them are able to capture the true function ``completely". The final estimate in the fusion space captures the true function with minimum norm of the error, i.e., with minimum $|| f_{n^*} - f^{*}||$ among all iterations, where $n^*$ is the final iteration and $f^{*}$ is the true iteration. In reality, $f^*$ is not unknown and it would not be possible to verify the same. 

A limitation of the algorithm which was observed is numerical instability with respect to the parameters. It was observed that for a fixed data sequence, initial estimates, and, iteration $n$, small variations in parameters $(\varrho^{1}_n, \varrho^{2}_n, \varrho_n)$ led to large changes in the estimated functions. The primary reason for this observation is that the solution to the estimation and fusion problems involve matrix inversion. The matrices generated for many sets of parameters where ill-conditioned. To circumvent the issue, the values of the parameters were carefully chosen for all $n$ by iterating through idfferent sets of parameter values. 
\section{Conclusion and Future Work}\label{Section 6}
We presented the proof of consistency of the collaborative estimation algorithm. We developed the necessary machinery needed for the same, by : (i) defining parameterized  estimation operators  at the agents and the fusion space, and, studying its asymptotic characteristics (ii) defining the multi-agent estimation and upload operator, multi -agent download operator and studying the properties of these operators. 

As future work we are interested in extending the framework for infinite dimensional knowledge spaces. This requires that the proof of Propositions \ref{Proposition 2}, \ref{Proposition 3} and Lemmas \ref{Lemma 8}, \ref{Lemma 12}  be revisited in the context of infinite dimensional spaces. We are also interested in quantifying knowledge transfer among the agents and understand if agents can learn from each other.
\appendix
\section{Appendix}\label{Appendix}
\subsection{Knowledge Spaces and Operators}
The construction of the knowledge spaces for the individual agents and the fusion space has been discussed in detail in \cite{raghavan2024distributed}. We mention the key results here which have been referenced in the previous sections.
\begin{lemma}\label{Lemma 27}
If $K^{i}(\cdot,\cdot)$ is the reproducing kernel of Hilbert space $H^{i}$,  with norm $||\cdot||_{H^i}$, then $K(x,y)=K^{1}(x,y) + K^{2}(x,y)$ is the reproducing kernel of the space $H = \{f| f= f^1 + f^2 | f^{i} \in H^{i}\}$ with the norm:
\begin{align*}
||f||^2_{H} = \underset{\substack{f^1 + f^2 = f,\\ f^{i} \in H^{i}} }  \min \;\;  ||f^{1}||^2_{H^1} + ||f^{2}||^2_{H^2}.
\end{align*} 
\end{lemma}  
\begin{corollary}\label{Corollary 28}
The uploading operator from Agent $i$'s knowledge space, $H^i$, to the fusion space $H$ is 
\begin{align*}
    &\hat{L}^{i}: H^{i} \to H, \; \hat{L}(f) = f.\\
    &|| \hat{L}^{i} || = \sup\{ || f||_{H} : f\in H^{i}, || f ||_{H^{i}} =1 \}  \leq  1,
\end{align*}
that is  $\hat{L}^{i}(\cdot)$, is linear and is bounded.
\end{corollary}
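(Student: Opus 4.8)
The plan is to deduce everything directly from the variational description of the fusion norm given in Theorem~\ref{Theorem 27}; throughout, let $j$ denote the index in $\{1,2\}$ other than $i$. First I would verify that $\hat{L}^i$ is well defined, i.e.\ that $f \in H^i$ implies $f \in H$. By Theorem~\ref{Theorem 27}, $H = \{g^1 + g^2 : g^1 \in H^1,\ g^2 \in H^2\}$ as a set of functions on $\mathcal{X}$; taking $g^i = f$ and $g^j = \theta_{H^j}$ exhibits $f = f + \theta_{H^j}$ as an element of $H$. Hence $H^i \subseteq H$ and the assignment $\hat{L}^i(f) = f$ makes sense. Linearity is then immediate, since $\hat{L}^i$ acts as the identity map on functions: $\hat{L}^i(\alpha f + \beta g) = \alpha f + \beta g = \alpha\,\hat{L}^i(f) + \beta\,\hat{L}^i(g)$.

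Next I would bound the norm. Fix $f \in H^i$. Because the minimum in Theorem~\ref{Theorem 27} ranges over \emph{all} decompositions $f = g^1 + g^2$ with $g^1 \in H^1$, $g^2 \in H^2$, evaluating the objective at the particular admissible decomposition $g^i = f$, $g^j = \theta_{H^j}$ yields an upper bound:
\[
\|f\|_H^2 \;\le\; \|f\|_{H^i}^2 + \|\theta_{H^j}\|_{H^j}^2 \;=\; \|f\|_{H^i}^2 .
\]
Thus $\|\hat{L}^i(f)\|_H \le \|f\|_{H^i}$ for every $f \in H^i$, so $\hat{L}^i$ is bounded with
\[
\|\hat{L}^i\| \;=\; \sup\{\,\|f\|_H : f \in H^i,\ \|f\|_{H^i} = 1\,\} \;\le\; 1 .
\]

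There is no real obstacle in this argument; the only point that deserves a moment's attention is the direction of the inequality — one must use that $\|\cdot\|_H$ is an \emph{infimum} over decompositions, so that substituting the convenient choice $f = f + \theta_{H^j}$ gives ``$\le$'' rather than ``$\ge$''. If one wanted the sharper statement that the norm can be strictly less than $1$, one would need to analyze how much of $f$ the complementary kernel $K^j$ can absorb in the optimal decomposition; but only the bound $\le 1$ is claimed here.
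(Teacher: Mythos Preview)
Your argument is correct. The paper itself does not actually prove Corollary~\ref{Corollary 28} in-text; it simply cites \cite{raghavan2024distributed} for the proofs of Theorem~\ref{Theorem 27}, Corollary~\ref{Corollary 28}, Lemma~\ref{Lemma 29}, and Theorem~\ref{Theorem 30}. What you have written is precisely the standard derivation of this corollary from the variational formula in Theorem~\ref{Theorem 27}: the inclusion $H^i\subseteq H$ comes from the trivial decomposition $f=f+\theta_{H^j}$, and the norm bound $\|f\|_H\le\|f\|_{H^i}$ follows because that decomposition is admissible in the minimum defining $\|\cdot\|_H$. There is nothing missing.
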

\begin{lemma}\label{Lemma 29}
Given the RKHS, $(H, \langle \cdot,\cdot \rangle_{H})$, with kernel $K(\cdot,\cdot)$ and the kernels $K^{i}(\cdot,\cdot),\;i=1,2$, such that $K(x,y) = K^{1}(x,y) + K^{2}(x,y)$, we define operators, $\bar{L}^{i}: H \to H$, as
\begin{align*}
\bar{L}^{i}(f)(x) =\langle f(\cdot), K^{i}(\cdot,x) \rangle_{H}, \text{ for}, i=1,2.
\end{align*}
Then, $\bar{L}^{i}$ is linear, symmetric, positive and bounded, $|| \bar{L}^{i} ||\leq 1$.
\end{lemma}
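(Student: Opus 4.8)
The plan is to establish the four stated properties first on the dense subspace $H_0 \overset{\Delta}{=} \text{Span}\big(\{K(\cdot,y):y\in\mathcal{X}\}\big)$ of $H$ and then pass to all of $H$ by continuity. To begin, for each fixed $x\in\mathcal{X}$ the section $K^i(\cdot,x)$ lies in $H^i$, and since $H^i\subseteq H$ by Theorem~\ref{Theorem 27}, the map $f\mapsto\langle f,K^i(\cdot,x)\rangle_H$ is a well-defined bounded linear functional on $H$; hence $\bar{L}^i(f)$ is a well-defined real function on $\mathcal{X}$, and $\bar{L}^i$ is linear because the inner product is linear in its first argument. The computation that drives everything is the action of $\bar{L}^i$ on kernel sections: by the reproducing property of $K$ in $H$,
\begin{align*}
\bar{L}^i\big(K(\cdot,y)\big)(x)=\langle K(\cdot,y),K^i(\cdot,x)\rangle_H=\langle K^i(\cdot,x),K(\cdot,y)\rangle_H=K^i(x,y),
\end{align*}
so $\bar{L}^i\big(K(\cdot,y)\big)=K^i(\cdot,y)\in H^i\subseteq H$; in particular $\bar{L}^i$ maps $H_0$ into $H$.

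Next I would read off the remaining properties on $H_0$. Symmetry follows from $\langle\bar{L}^i K(\cdot,x),K(\cdot,y)\rangle_H=K^i(x,y)=\langle K(\cdot,x),\bar{L}^i K(\cdot,y)\rangle_H$ together with bilinearity. For $f=\sum_k c_k K(\cdot,x_k)\in H_0$ one gets $\langle\bar{L}^i f,f\rangle_H=\sum_{k,\ell}c_k c_\ell K^i(x_k,x_\ell)\ge 0$ because $K^i$ is a positive semidefinite kernel, which is positivity. From $K^1(\cdot,x)+K^2(\cdot,x)=K(\cdot,x)$ and the reproducing property one also obtains the identity $\bar{L}^1 f+\bar{L}^2 f=f$ for all $f\in H$; combined with positivity of $\bar{L}^2$ this yields $0\le\langle\bar{L}^i f,f\rangle_H\le\langle f,f\rangle_H$ on $H_0$. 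Applying Cauchy--Schwarz to the positive semidefinite bilinear form $(f,g)\mapsto\langle\bar{L}^i f,g\rangle_H$ then gives, for $f\in H_0$ and any $g\in H_0$ with $\|g\|_H\le 1$,
\begin{align*}
\langle\bar{L}^i f,g\rangle_H\le\langle\bar{L}^i f,f\rangle_H^{1/2}\,\langle\bar{L}^i g,g\rangle_H^{1/2}\le\|f\|_H;
\end{align*}
since $H_0$ is dense in $H$ the supremum of the left side over such $g$ equals $\|\bar{L}^i f\|_H$, so $\|\bar{L}^i f\|_H\le\|f\|_H$ on $H_0$. Hence $\bar{L}^i|_{H_0}$ extends to a bounded operator on $H$ with $\|\bar{L}^i\|\le 1$, and because $H$ is an RKHS, $H$-norm convergence implies pointwise convergence, so this continuous extension agrees pointwise with the original pointwise definition of $\bar{L}^i$ on all of $H$; symmetry and positivity then transfer to all of $H$ by continuity.

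The only genuinely delicate point --- and thus the main obstacle --- is well-definedness as a map into $H$: a priori $\bar{L}^i(f)$ is merely a pointwise-defined function on $\mathcal{X}$, and one must confirm it actually belongs to $H$ with the asserted norm control, rather than to some larger function space. Routing this through the dense subspace $H_0$, the identity $\bar{L}^1+\bar{L}^2=\mathrm{id}_H$, and the RKHS continuity argument is the crux; once that is in place, linearity, symmetry, positivity, and the bound $\|\bar{L}^i\|\le 1$ are all routine.
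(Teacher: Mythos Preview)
The paper does not actually prove this lemma; it simply refers the reader to \cite{raghavan2024distributed} for the proofs of Theorem~\ref{Theorem 27}, Corollary~\ref{Corollary 28}, Lemma~\ref{Lemma 29} and Theorem~\ref{Theorem 30}. So there is no in-paper argument to compare against.

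Your argument is correct and self-contained. The key identities $\bar{L}^{i}\big(K(\cdot,y)\big)=K^{i}(\cdot,y)$ and $\bar{L}^{1}+\bar{L}^{2}=\mathrm{id}_{H}$, together with positive semidefiniteness of $K^{i}$, give symmetry, positivity and the numerical range bound $0\le\langle\bar{L}^{i}f,f\rangle_{H}\le\|f\|_{H}^{2}$ on $H_{0}$; the Cauchy--Schwarz step for the positive form $(f,g)\mapsto\langle\bar{L}^{i}f,g\rangle_{H}$ then yields $\|\bar{L}^{i}f\|_{H}\le\|f\|_{H}$, and the RKHS pointwise-continuity argument identifies the continuous extension with the original pointwise definition. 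One remark: in the paper's setting $H^{1},H^{2}$ (and hence $H$) are finite dimensional, so $H_{0}=H$ and the density/extension step is vacuous there; your route through $H_{0}$ is nonetheless the natural one and has the advantage of working verbatim in the infinite-dimensional RKHS case as well.
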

\begin{lemma}\label{Lemma 30}
The linear space, 
\begin{align*}
    \bar{H}^{i} = \{g : g = \sqrt{\bar{L}^i}(f), f\in H \}
\end{align*}
is a RKHS with kernel $K^i$.  $\sqrt{\bar{L}^i}(\cdot)$ establishes an isometric isomorphism between $\mathcal{N}\big(\sqrt{\bar{L}^i}\big)^{\perp}$ and $\bar{H}^{i}$, and the norm, $||f||_{\bar{H}^{i}} = ||g||_{H}$,  where $f = \sqrt{\bar{L}^i} g,  g\in \mathcal{N}\big(\sqrt{\bar{L}^i}\big)^{\perp}$. The downloading operator from the fusion space $H$ to  Agent $i$'s knowledge space, $H^i$, is $\sqrt{\bar{L}^{i}} \circ \Pi_{\mathcal{N}\big(\sqrt{\bar{L}^i}\big)^{\perp}}$. The downloading operator is linear and bounded.
\end{lemma}
For the proofs of Lemma \ref{Lemma 27}, Corollary \ref{Corollary 28}, Lemmas \ref{Lemma 29} and \ref{Lemma 30},  we refer to \cite{raghavan2024distributed}. 
\subsection{Proof of Lemma \ref{Lemma 8}}
In this subsection, we present the proof the the propositions invoked in the proof of Lemma \ref{Lemma 8}.
\begin{proposition}\label{Proposition 31}
\hspace{-5pt} $\{\phi^{i}_{n,1}(\cdot)\}$ is an uniformly bounded equicontinuous sequence of functions on $E^{i}$. 
\end{proposition}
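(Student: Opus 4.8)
The plan is to read off the explicit form of $\phi^{i}_{n,1}$ from Equation~\ref{Equation 2}, namely
\[
\phi^{i}_{n,1}(\boldsymbol{\alpha};x;y) \;=\; \mathbf{K^{i}}\boldsymbol{\alpha} \;+\; \tfrac{1}{\varrho^{i}_{n}}\,\mathbf{\bar{K}^{i}}(x)\,y \;\in\; \mathbb{R}^{m},
\]
and then exploit three facts. First, $E^{i}$ is a compact subset of $\mathbb{R}^{m}\times\mathbb{R}^{I}\times\mathbb{R}$, so each coordinate $\boldsymbol{\alpha}$, $x$, $y$ ranges over a bounded set. Second, since the features $\varphi^{i}_{j}(\cdot)$ are continuous, the kernel $K^{i}(\cdot,\cdot)$ is continuous, hence $x\mapsto\mathbf{\bar{K}^{i}}(x)$ is continuous and therefore bounded and uniformly continuous on the (compact) projection of $E^{i}$ onto $\mathcal{X}$. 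Third, the hypothesis $\varrho^{i}_{n}\to\infty$ of Lemma~\ref{Lemma 8}, under which this proposition is invoked, implies $1/\varrho^{i}_{n}\to 0$, so the scalar sequence $\{1/\varrho^{i}_{n}\}$ is bounded, say by $C:=\sup_{n}1/\varrho^{i}_{n}<\infty$; the matrix $\mathbf{K^{i}}$ is fixed.

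For uniform boundedness I would estimate, on $E^{i}$,
\[
\|\phi^{i}_{n,1}(\boldsymbol{\alpha};x;y)\| \;\le\; \|\mathbf{K^{i}}\|\,\|\boldsymbol{\alpha}\| \;+\; C\,\|\mathbf{\bar{K}^{i}}(x)\|\,|y|,
\]
and observe that each factor is bounded over $E^{i}$ by compactness and continuity of $\mathbf{\bar{K}^{i}}(\cdot)$, so the resulting bound is independent of $n$. For equicontinuity, for any two points of $E^{i}$ I would write
\[
\phi^{i}_{n,1}(\boldsymbol{\alpha};x;y)-\phi^{i}_{n,1}(\boldsymbol{\alpha}';x';y') \;=\; \mathbf{K^{i}}(\boldsymbol{\alpha}-\boldsymbol{\alpha}') \;+\; \tfrac{1}{\varrho^{i}_{n}}\big(\mathbf{\bar{K}^{i}}(x)y-\mathbf{\bar{K}^{i}}(x')y'\big).
\]
The first summand is independent of $n$ and bounded by $\|\mathbf{K^{i}}\|\,\|\boldsymbol{\alpha}-\boldsymbol{\alpha}'\|$; for the second, the map $(x,y)\mapsto\mathbf{\bar{K}^{i}}(x)y$ is uniformly continuous on the compact set with some modulus of continuity $\omega$, so that summand is at most $C\,\omega\big(\|(x,y)-(x',y')\|\big)$, again uniformly in $n$. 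Given $\epsilon>0$, choosing $\delta$ with $\|\mathbf{K^{i}}\|\delta<\epsilon/2$ and $C\,\omega(\delta)<\epsilon/2$ yields a single $\delta$ valid for all $n$, which is equicontinuity (and, $E^{i}$ being compact, uniform equicontinuity).

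I do not expect a genuine obstacle here: the argument is a routine compactness-plus-continuity estimate. The one point deserving explicit attention is that all of the uniformity in $n$ rests on the reciprocals $\{1/\varrho^{i}_{n}\}$ being bounded, which is exactly what $\varrho^{i}_{n}\to\infty$ supplies; I would therefore make sure that hypothesis is in force wherever this proposition is cited, and emphasize that it is the decomposition of $\phi^{i}_{n,1}$ into an $n$-independent affine part plus a uniformly small perturbation that makes the whole \emph{sequence} (not merely each individual function) equicontinuous.
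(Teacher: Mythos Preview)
Your proposal is correct and follows essentially the same approach as the paper: both decompose $\phi^{i}_{n,1}$ into the $n$-independent linear part $\mathbf{K^{i}}\boldsymbol{\alpha}$ and the term $\tfrac{1}{\varrho^{i}_{n}}\mathbf{\bar{K}^{i}}(x)y$, then use compactness of $E^{i}$, continuity of $\mathbf{\bar{K}^{i}}(\cdot)$, and boundedness of $\{1/\varrho^{i}_{n}\}$ to obtain a uniform bound and a single $\delta$ for all $n$. Your treatment of the second summand via the uniform continuity of the full map $(x,y)\mapsto\mathbf{\bar{K}^{i}}(x)y$ is in fact slightly cleaner than the paper's, which varies $x$ while tacitly keeping $y$ fixed.
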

\begin{proof}
Let $(\boldsymbol{\alpha_0};x_0;y_0) \in E^{i}$. Since $E^{i}$ is compact, $\exists \; c_{1} < \infty$ such that $|y| < c_{1}, y \in E^{i}$. Since the sequence $\{\frac{1}{\varrho_{n}}\}$ is bounded, $\exists \; c_{2} < \infty$ such that $|\frac{1}{\varrho_{n}}| \leq c_{2} \forall n$. Since $K^{i}(\cdot,\bar{x}^i_j)$ is uniformly continuous on $E^{i}$ for each $j$, $\mathbf{\bar{K}^{i}}(\cdot)$ is uniformly continuous on $E^{i}$. Thus, $\forall \epsilon > 0$, $\exists \delta_{1} >0$ such that $|| x- x_{0}|| < \delta_1$ implies $|| \mathbf{\bar{K}^{i}}(x) -  \mathbf{\bar{K}^{i}}(x_0) || < \frac{\epsilon}{2 \times c_{1} \times c_{2}} \implies || \frac{\mathbf{\bar{K}^{i}}(x) y}{\varrho^{i}_{n}}- \frac{\mathbf{\bar{K}^{i}}(x_0) y}{\varrho^{i}_{n}}|| <  \frac{\epsilon}{2}$. Since $g(\alpha) = \mathbf{K^{i}} \boldsymbol{\alpha}$ is linear in $\boldsymbol{\alpha}$, $\forall \epsilon > 0$, let $0 < \delta_2 < \frac{\epsilon}{2 \times \lambda_{\max}( \mathbf{K^{i}})}$, then, $|| \boldsymbol{\alpha} - \boldsymbol{\alpha_0} || < \delta_2$ implies $|| \mathbf{K^{i}} \boldsymbol{\alpha} - \mathbf{K^{i}} \boldsymbol{\alpha_0} || = ||  \mathbf{K^{i}}(\boldsymbol{\alpha} -  \boldsymbol{\alpha_0} ) || < \frac{\epsilon}{2}$. Let $\delta = \min(\delta_1, \delta_2)$. Consider any vector in $(\boldsymbol{\alpha_0};x_0;y_0) \in E^{i}$ and let $(\boldsymbol{\alpha};x;y) \in E^{i}$ be such that $|| (\boldsymbol{\alpha};x;y) - (\boldsymbol{\alpha_0};x_0;y_0) || < \delta$. Then $|| \phi^{i}_{n,1}(\boldsymbol{\alpha};x;y) -  \phi^{i}_{n,1}(\boldsymbol{\alpha_0};x_0;y_0) || <  || \frac{\mathbf{\bar{K}^{i}}(x) y}{\varrho^{i}_{n}}- \frac{\mathbf{\bar{K}^{i}}(x_0) y}{\varrho^{i}_{n}}|| + || \mathbf{K^{i}} \boldsymbol{\alpha} - \mathbf{K^{i}} \boldsymbol{\alpha_0} || < \epsilon \; \forall n$. Thus,  $\{\phi^{i}_{n,1}(\cdot)\}$ is uniformly equicontinuous on $E^{i}$. Further, it is clear that $\{\phi^{i}_{n,1}(\cdot)\}$ is uniformly bounded on $E_{i}$, i.e., $\exists c_{\phi^{i}_{n,1}}$ such that $|| \phi^{i}_{n,1}(\boldsymbol{\alpha};x;y) || <  c_{\phi^{i}_{n,1}} \forall  (\boldsymbol{\alpha};x;y) \in E^{i}, \forall n$.
\end{proof}
\begin{proposition}\label{Proposition 32}
The map from $GL_{m}(\mathbb{R}) \to M_{m}(\mathbb{R})$ given by  $M \mapsto M^{-1}$ is continuous in the operator norm topology.
\end{proposition}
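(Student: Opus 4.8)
The plan is to prove continuity at an arbitrary fixed $A \in GL_m(\mathbb{R})$, since $A$ is then arbitrary. The starting point is the resolvent-type identity $A^{-1} - B^{-1} = A^{-1}(B - A)B^{-1}$, valid for all $A, B \in GL_m(\mathbb{R})$ and verified by left-multiplying by $A$ and right-multiplying by $B$. Submultiplicativity of the operator norm gives $\|A^{-1} - B^{-1}\| \le \|A^{-1}\|\,\|B - A\|\,\|B^{-1}\|$, so everything reduces to controlling $\|B^{-1}\|$ for $B$ near $A$.

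To obtain such a local bound I would write $B = A - (A - B) = A\bigl(\mathbb{I}_{m} - A^{-1}(A - B)\bigr)$ and restrict attention to those $B$ with $\|A - B\| < 1/\|A^{-1}\|$, so that $E := A^{-1}(A-B)$ satisfies $\|E\| \le \|A^{-1}\|\,\|A - B\| < 1$. The Neumann series then shows $\mathbb{I}_{m} - E$ is invertible with $\|(\mathbb{I}_{m} - E)^{-1}\| \le \bigl(1 - \|A^{-1}\|\,\|A - B\|\bigr)^{-1}$, whence $B^{-1} = (\mathbb{I}_{m} - E)^{-1}A^{-1}$ and $\|B^{-1}\| \le \|A^{-1}\|\bigl(1 - \|A^{-1}\|\,\|A - B\|\bigr)^{-1}$. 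Substituting back yields
\[
\|A^{-1} - B^{-1}\| \;\le\; \frac{\|A^{-1}\|^{2}\,\|A - B\|}{1 - \|A^{-1}\|\,\|A - B\|},
\]
whose right-hand side tends to $0$ as $\|A - B\| \to 0$. Concretely, given $\varepsilon > 0$ one takes $\delta = \min\bigl\{ 1/(2\|A^{-1}\|),\; \varepsilon/(2\|A^{-1}\|^{2}) \bigr\}$: then $\|A - B\| < \delta$ forces the denominator above to exceed $1/2$ and hence the whole expression to be below $2\|A^{-1}\|^{2}\|A - B\| < \varepsilon$, which proves continuity at $A$.

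There is no serious obstacle here; the only mildly delicate point is the local uniform control of $\|B^{-1}\|$, which is exactly what the Neumann series on the ball $\{\,B : \|A - B\| < 1/\|A^{-1}\|\,\}$ supplies. An alternative route, arguably shorter in the finite-dimensional setting relevant to this paper, is to invoke Cramer's rule: each entry of $M^{-1}$ is a cofactor of $M$ (a polynomial in the entries of $M$) divided by $\det M$, a polynomial that does not vanish on $GL_m(\mathbb{R})$, so $M \mapsto M^{-1}$ is a rational map with non-vanishing denominator and therefore continuous entrywise, and equivalence of all norms on $M_m(\mathbb{R})$ upgrades this to continuity in the operator norm. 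I would present the Neumann-series argument as the main proof, since it is self-contained and quantitative, and mention the Cramer's-rule argument only as a remark.
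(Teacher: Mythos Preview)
Your argument is correct and is essentially the same as the paper's: both fix a point, use the Neumann series on the ball $\|A-B\|<1/\|A^{-1}\|$ to bound $\|B^{-1}\|$, feed that into the resolvent identity $A^{-1}-B^{-1}=A^{-1}(B-A)B^{-1}$, and arrive at the identical choice $\delta=\min\{1/(2\|A^{-1}\|),\ \varepsilon/(2\|A^{-1}\|^{2})\}$. The only cosmetic difference is that the paper factors $N=(\mathbb{I}_m-(M-N)M^{-1})M$ whereas you factor $B=A(\mathbb{I}_m-A^{-1}(A-B))$; the resulting bounds coincide.
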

\begin{proof}
Let $M \in GL_{n}(\mathbb{R})$. Define, $S= M^{-1} \sum^{\infty}_{n=0} (M -N)^nM^{-n}$, $N \in M_{m}(\mathbb{R})$. The sequence in this definition converges if $||M -N || \times ||M^{-1}|| < 1$ as the space $M_{m}(\mathbb{R})$ is complete. Let $N$  be such that $||M -N || < \frac{1}{||M^{-1}||}$. Then, $S - S(M-N)M^{-1} = M^{-1} \implies S(\mathbb{I}_{m} - (M-N)M^{-1}) = M^{-1} \implies S= M^{-1}(\mathbb{I}_{m} - (M-N)M^{-1})^{-1} = ((\mathbb{I}_{m} - (M-N)M^{-1})M)^{-1} = N^{-1}$. $\mathbb{I}_{m} - (M-N)M^{-1}$ is invertible as $|| \mathbb{I}_{m} - (M-N)M^{-1} || \geq  ||\mathbb{I}_{m}|| - ||(M-N)M^{-1}|| > 0 $. The expansion of $N^{-1}$ gives the bound $|| N^{-1} || \leq \frac{||M^{-1}||}{1 - || M - N|| \; ||M^{-1}|| }$. Thus, $N \in B_{\frac{1}{|| 2  M^{-1}||}}(M)$ is invertible and $|| N^{-1} || \leq 2 ||M^{-1}||$. For such N, $ || N^{-1} - M^{-1} || = ||M^{-1} (M -N) N^{-1} ||\leq 2 ||M^{-1} ||^{2} ||M - N ||$. Given $\epsilon$, let $\delta < \min (\frac{1}{2|| M^{-1}||}, \frac{\epsilon}{2||M^{-1} ||^{2}})$. Then, $|| N - M || < \delta \implies || N^{-1} - M^{-1} || < \epsilon$. Since the map $M \mapsto M^{-1}$ is nonlinear map, by the $\epsilon -\delta$ definition of continuity, the map is continuous.
\end{proof}
\begin{proposition}\label{Proposition 33}
\hspace{-5pt} $\{\phi^{i}_{n,2}(\cdot)\}$ is an uniformly bounded equicontinuous sequence of functions on $E^{i}$. 
\end{proposition}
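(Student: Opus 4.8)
The plan is to regard $\phi^{i}_{n,2}(\cdot)$ as the inverse of the symmetric matrix $A^{i}_{n}(x)\overset{\Delta}{=}\mathbf{K^{i}}+\tfrac{1}{\varrho^{i}_{n}}\mathbf{\bar{K}^{i}}(x)\mathbf{\bar{K}^{i^T}}(x)$, which as a function on $E^{i}$ depends on $(\boldsymbol{\alpha};x;y)$ only through $x$. I would first show that $\{A^{i}_{n}(\cdot)\}$ is a uniformly bounded, uniformly equicontinuous sequence on $E^{i}$, exactly along the lines of Proposition~\ref{Proposition 31}, and then transfer both properties to the inverses via Proposition~\ref{Proposition 32}.

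For the first step, $E^{i}$ is compact, so $\mathbf{\bar{K}^{i}}(\cdot)$ is bounded on $E^{i}$, say $\|\mathbf{\bar{K}^{i}}(x)\|\le c_{0}$, and uniformly continuous on $E^{i}$; and since $\varrho^{i}_{n}\to\infty$, the sequence $\{1/\varrho^{i}_{n}\}$ is bounded, say by $c_{2}$. Writing the difference $\mathbf{\bar{K}^{i}}(x)\mathbf{\bar{K}^{i^T}}(x)-\mathbf{\bar{K}^{i}}(x_{0})\mathbf{\bar{K}^{i^T}}(x_{0})$ as a sum of two rank-one terms, each a product of a vector bounded by $c_{0}$ with $\mathbf{\bar{K}^{i}}(x)-\mathbf{\bar{K}^{i}}(x_{0})$, and invoking the uniform continuity of $\mathbf{\bar{K}^{i}}(\cdot)$, shows that $x\mapsto\mathbf{\bar{K}^{i}}(x)\mathbf{\bar{K}^{i^T}}(x)$ is uniformly continuous on $E^{i}$; multiplying by $1/\varrho^{i}_{n}\le c_{2}$ keeps this uniform in $n$. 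Hence $\{A^{i}_{n}(\cdot)\}$ is uniformly bounded (by $\lambda_{\max}(\mathbf{K^{i}})+c_{2}c_{0}^{2}$) and uniformly equicontinuous on $E^{i}$.

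For the second step I would obtain a lower bound on $\lambda_{\min}(A^{i}_{n}(x))$ that is independent of both $n$ and $x$. The matrix $\mathbf{K^{i}}$ is symmetric and positive definite, being the Gram matrix of the linearly independent kernel sections $\{K^{i}(\cdot,\bar{x}^{i}_{j})\}^{m}_{j=1}$ (its invertibility is also what makes the closed forms of Propositions~\ref{Proposition 2} and~\ref{Proposition 3} well defined), while $\tfrac{1}{\varrho^{i}_{n}}\mathbf{\bar{K}^{i}}(x)\mathbf{\bar{K}^{i^T}}(x)$ is positive semidefinite; adding the latter cannot decrease the smallest eigenvalue, so $\lambda_{\min}(A^{i}_{n}(x))\ge\lambda_{\min}(\mathbf{K^{i}})>0$ and $\|(A^{i}_{n}(x))^{-1}\|\le c\overset{\Delta}{=}1/\lambda_{\min}(\mathbf{K^{i}})$ for every $n$ and every $x$. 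In particular $\{\phi^{i}_{n,2}(\cdot)\}$ is uniformly bounded on $E^{i}$ by $c$. For equicontinuity, the quantitative estimate from the proof of Proposition~\ref{Proposition 32} gives, whenever $\|A^{i}_{n}(x)-A^{i}_{n}(x_{0})\|<\tfrac{1}{2c}$, that $\|(A^{i}_{n}(x))^{-1}-(A^{i}_{n}(x_{0}))^{-1}\|\le 2c^{2}\|A^{i}_{n}(x)-A^{i}_{n}(x_{0})\|$. Given $\epsilon>0$, set $\eta=\min\{\tfrac{1}{2c},\tfrac{\epsilon}{2c^{2}}\}$ and use the uniform equicontinuity from the first step to choose $\delta>0$ with $\|(\boldsymbol{\alpha};x;y)-(\boldsymbol{\alpha_{0}};x_{0};y_{0})\|<\delta\Rightarrow\|A^{i}_{n}(x)-A^{i}_{n}(x_{0})\|<\eta$ for all $n$; then $\|\phi^{i}_{n,2}(\boldsymbol{\alpha};x;y)-\phi^{i}_{n,2}(\boldsymbol{\alpha_{0}};x_{0};y_{0})\|<\epsilon$ for all $n$, giving uniform equicontinuity on $E^{i}$.

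I expect the main obstacle to be the $(n,x)$-uniform ellipticity $\lambda_{\min}(A^{i}_{n}(x))\ge\lambda_{\min}(\mathbf{K^{i}})>0$ in the second step: this is precisely what keeps the local Lipschitz constant $2c^{2}$ of matrix inversion uniform along the sequence, and it relies on the data-dependent perturbation being positive semidefinite, so that it cannot cancel part of $\mathbf{K^{i}}$ and let $\lambda_{\min}$ degenerate as $\varrho^{i}_{n}\to\infty$. Once this is in place, the remainder is the routine compactness argument of Proposition~\ref{Proposition 31} composed with the continuity estimate of Proposition~\ref{Proposition 32}.
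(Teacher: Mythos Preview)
Your approach mirrors the paper's almost exactly: show that $A^{i}_{n}(x)=\mathbf{K^{i}}+\tfrac{1}{\varrho^{i}_{n}}\mathbf{\bar{K}^{i}}(x)\mathbf{\bar{K}^{i^T}}(x)$ is uniformly bounded and equicontinuous on $E^{i}$, then push both properties through matrix inversion via the Lipschitz estimate of Proposition~\ref{Proposition 32}. The one substantive divergence is your claim that $\mathbf{K^{i}}$ is positive definite. The $m$ kernel sections $\{K^{i}(\cdot,\bar{x}^{i}_{j})\}_{j=1}^{m}$ live in $H^{i}$, whose dimension is $|\mathcal{I}^{i}|$, whereas $m=\dim H$ with $H=H^{1}+H^{2}$; nothing in the setup prevents $m>|\mathcal{I}^{i}|$, in which case those sections are necessarily linearly dependent and $\mathbf{K^{i}}$ is only positive semidefinite. (Proposition~\ref{Proposition 2} needs invertibility of $\varrho^{i}_{n}\mathbf{K^{i}}+\mathbf{\bar{K}^{i}}(x)\mathbf{\bar{K}^{i^T}}(x)$, not of $\mathbf{K^{i}}$ itself, so it does not support your inference.) The paper flags exactly this and patches it by replacing $\mathbf{K^{i}}$ with $\mathbf{K^{i}}+c^{i}_{3}\mathbb{I}_{m}$ for small $c^{i}_{3}>0$.

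With that substitution your argument goes through verbatim, and is in fact tidier than the paper's: your uniform bound $\|(A^{i}_{n}(x))^{-1}\|\le 1/\lambda_{\min}(\mathbf{K^{i}}+c^{i}_{3}\mathbb{I}_{m})$ is independent of both $n$ and $x$, giving a single Lipschitz constant $2c^{2}$ and hence uniform equicontinuity directly, whereas the paper works with an $x_{0}$-dependent constant $c_{4}(x_{0})$ and obtains only pointwise equicontinuity (adequate for Arzel\`a--Ascoli, but less clean).
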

\begin{proof}
We note that $\mathbf{K^{i}} +\frac{\mathbf{\bar{K}^{i}}(x) \mathbf{\bar{K}^{i^T}}(x) }{\varrho^{i}_{n}}$ need \textit{not} be invertible $\forall n$ since $\mathbf{K^{i}}$ is only positive semidefinite. In case $\mathbf{K^{i}}$ is not invertible, it is replaced with $\mathbf{K^{i}} + c^{i}_{3}\mathbb{I}_{m}$, where $c^{i}_{3}$ is small positive constant, to ensure that the previous sequence of matrices is invertible. Since $\mathbf{\bar{K}^{i}}(\cdot)$ is uniformly continuous on $E^{i}$, given $\epsilon >0$, $\exists \delta >0$ such that $|| x- x_{0} || < \delta$ implies $||\mathbf{\bar{K}^{i}}(x) \mathbf{\bar{K}^{i^T}}(x)  - \mathbf{\bar{K}^{i}}(x_0) \mathbf{\bar{K}^{i^T}}(x_0) || < \frac{\epsilon}{c_{2}}$. Thus, $|| x- x_{0} || < \delta$ implies $|| \mathbf{K^{i}} +\frac{\mathbf{\bar{K}^{i}}(x) \mathbf{\bar{K}^{i^T}}(x) }{\varrho^{i}_{n}} - \mathbf{K^{i}} - \frac{\mathbf{\bar{K}^{i}}(x_0) \mathbf{\bar{K}^{i^T}}(x_0) }{\varrho^{i}_{n}} || = ||\frac{\mathbf{\bar{K}^{i}}(x) \mathbf{\bar{K}^{i^T}}(x)  - \mathbf{\bar{K}^{i}}(x_0) \mathbf{\bar{K}^{i^T}}(x_0)}{\varrho^{i}_{n}}|| < \epsilon, \forall n$. For $x_0 \in E^{i}$, let $\underset{n \in \mathbb{N}} \sup \; || \mathbf{K^{i}} + \frac{\mathbf{\bar{K}^{i}}(x_0) \mathbf{\bar{K}^{i^T}}(x_0)}{\varrho^{i}_{n}} || = c_{4}(x_0) > 0$. Given $\epsilon >0$, let $\delta >0 $ be such that  $|| x- x_{0} || < \delta$ implies $||\frac{\mathbf{\bar{K}^{i}}(x) \mathbf{\bar{K}^{i^T}}(x)  - \mathbf{\bar{K}^{i}}(x_0) \mathbf{\bar{K}^{i^T}}(x_0)}{\varrho^{i}_{n}}|| < \frac{\epsilon}{2 c^2_{4}(x_0)}, \forall n$. From Proposition \ref{Proposition 32} it follows that, for $x$ such that $|| x- x_{0} || < \delta$, $|| \phi^{i}_{n,2}(\boldsymbol{\alpha};x;y) - \phi^{i}_{n,2}(\boldsymbol{\alpha}_0;x_0;y_0)|| \leq 2 c^2_{4}(x_0) \times ||\mathbf{K^{i}} +\frac{\mathbf{\bar{K}^{i}}(x) \mathbf{\bar{K}^{i^T}}(x) }{\varrho^{i}_{n}} - \mathbf{K^{i}} - \frac{\mathbf{\bar{K}^{i}}(x_0) \mathbf{\bar{K}^{i^T}}(x_0) }{\varrho^{i}_{n}}  || <  \epsilon, \forall n$. Due to the compactness of the set $E^i$, we note that $\{\phi^{i}_{n,2}(\cdot)\}$ is uniformly bounded on $E_{i}$, i.e., $\exists c_{\phi^{i}_{n,2}}$ such that $|| \phi^{i}_{n,2}(\boldsymbol{\alpha};x;y) || <  c_{\phi^{i}_{n,2}} \forall  (\boldsymbol{\alpha};x;y) \in E^{i}, \forall n$.
\end{proof}
\begin{proposition}\label{Proposition 34}
$\{\phi^{i}_{n,2}(\cdot)\phi^{i}_{n,1}(\cdot)\}$ is uniformly bounded and equicontinuous sequence of functions on $E^i$.
\end{proposition}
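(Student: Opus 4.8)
The plan is to obtain the statement directly from Propositions \ref{Proposition 31} and \ref{Proposition 33}, invoking only the elementary fact that the (pointwise) product of two uniformly bounded equicontinuous sequences of functions is again uniformly bounded and equicontinuous. Since $\phi^{i}_{n,1}(\cdot)$ is $\mathbb{R}^{m}$-valued and $\phi^{i}_{n,2}(\cdot)$ is $\mathbb{R}^{m\times m}$-valued, the ``product'' here is the matrix--vector product, and the one structural fact used is submultiplicativity, $\|Mv\| \le \|M\|\,\|v\|$.

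First, for uniform boundedness, I would note that for every $n$ and every $(\boldsymbol{\alpha};x;y) \in E^{i}$,
\begin{align*}
\big\| \phi^{i}_{n,2}(\boldsymbol{\alpha};x;y)\,\phi^{i}_{n,1}(\boldsymbol{\alpha};x;y) \big\| \le \big\| \phi^{i}_{n,2}(\boldsymbol{\alpha};x;y) \big\|\, \big\| \phi^{i}_{n,1}(\boldsymbol{\alpha};x;y) \big\| \le c_{\phi^{i}_{n,2}}\, c_{\phi^{i}_{n,1}},
\end{align*}
where $c_{\phi^{i}_{n,1}}$ and $c_{\phi^{i}_{n,2}}$ are the $n$-independent uniform bounds supplied by Propositions \ref{Proposition 31} and \ref{Proposition 33} respectively; as this bound is independent of both $n$ and the evaluation point, the product sequence is uniformly bounded on $E^{i}$.

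Second, for equicontinuity, I would fix $(\boldsymbol{\alpha}_0;x_0;y_0) \in E^{i}$ and $\epsilon>0$, write $p=(\boldsymbol{\alpha};x;y)$, $q=(\boldsymbol{\alpha}_0;x_0;y_0)$, and use the add-and-subtract decomposition
\begin{align*}
\phi^{i}_{n,2}(p)\phi^{i}_{n,1}(p) - \phi^{i}_{n,2}(q)\phi^{i}_{n,1}(q) &= \phi^{i}_{n,2}(p)\big(\phi^{i}_{n,1}(p) - \phi^{i}_{n,1}(q)\big) \\
&\quad + \big(\phi^{i}_{n,2}(p) - \phi^{i}_{n,2}(q)\big)\phi^{i}_{n,1}(q).
\end{align*}
Bounding the first summand by $c_{\phi^{i}_{n,2}}\,\|\phi^{i}_{n,1}(p) - \phi^{i}_{n,1}(q)\|$ and the second by $c_{\phi^{i}_{n,1}}\,\|\phi^{i}_{n,2}(p) - \phi^{i}_{n,2}(q)\|$, the uniform equicontinuity of $\{\phi^{i}_{n,1}(\cdot)\}$ (Proposition \ref{Proposition 31}) yields $\delta_1>0$, independent of $n$, making the first term $<\epsilon/2$ once $\|p-q\|<\delta_1$, and the equicontinuity of $\{\phi^{i}_{n,2}(\cdot)\}$ (Proposition \ref{Proposition 33}) yields $\delta_2>0$, independent of $n$, making the second term $<\epsilon/2$ once $\|p-q\|<\delta_2$. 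Taking $\delta=\min(\delta_1,\delta_2)$ then gives $\|\phi^{i}_{n,2}(p)\phi^{i}_{n,1}(p) - \phi^{i}_{n,2}(q)\phi^{i}_{n,1}(q)\|<\epsilon$ for all $n$, which is the asserted equicontinuity.

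There is no real obstacle here; the only thing to keep an eye on is that every estimate stays uniform in $n$, which is exactly why Propositions \ref{Proposition 31} and \ref{Proposition 33} were phrased with $n$-independent bounds and $n$-independent moduli of continuity. The argument is entirely generic, which is why it can be reused verbatim in the proof of Lemma \ref{Lemma 8} for the iterated products $\{\mathbf{K^{i}}\phi^{i}_{n,2}(\cdot)\phi^{i}_{n,1}(\cdot)\}$ and the subsequent ones, each factor being uniformly bounded and equicontinuous.
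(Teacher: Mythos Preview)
Your proof is correct and essentially identical to the paper's own argument: both establish uniform boundedness via submultiplicativity ($\|\phi^{i}_{n,2}\phi^{i}_{n,1}\|\le c_{\phi^{i}_{n,2}}c_{\phi^{i}_{n,1}}$) and equicontinuity via the same add-and-subtract decomposition, bounding each summand using the uniform bound on one factor and the $n$-independent modulus of continuity of the other. The only cosmetic difference is that the paper inserts $\phi^{i}_{n,2}(q)\phi^{i}_{n,1}(p)$ as the intermediate term while you insert $\phi^{i}_{n,2}(p)\phi^{i}_{n,1}(q)$, and the paper picks a single $\delta$ up front rather than taking $\min(\delta_1,\delta_2)$; neither affects the argument.
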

\begin{proof}
Indeed, let $(\boldsymbol{\alpha_0};x_0;y_0) \in E^{i}$. Given $\epsilon >0$, there exists $\delta>0$ such that $||(\boldsymbol{\alpha};x;y) - (\boldsymbol{\alpha_0};x_0;y_0) || < \delta$ implies 
\begin{align*}
    &|| \phi^{i}_{n,1}(\boldsymbol{\alpha};x;y) - \phi^{i}_{n,1}(\boldsymbol{\alpha};x;y) || < \frac{\epsilon}{2 \times c_{\phi^{i}_{n,2}}},\forall n, \\
    &|| \phi^{i}_{n,2}(\boldsymbol{\alpha};x;y) - \phi^{i}_{n,2}(\boldsymbol{\alpha};x;y) || < \frac{\epsilon}{2 \times c_{\phi^{i}_{n,1}}}, \forall n. 
\end{align*}
    Then, $||(\boldsymbol{\alpha};x;y) - (\boldsymbol{\alpha_0};x_0;y_0) || < \delta$, implies the inequalities in \ref{Equation 12}, i.e., the product sequence is uniformly equicontinuous. From Propositions \ref{Proposition 31} and \ref{Proposition 33}, it follows 
    \begin{align*}
    \hspace{-0.7cm}|| \phi^{i}_{n,2}(\boldsymbol{\alpha};x;y)\phi^{i}_{n,1}(\boldsymbol{\alpha};x;y) || < c_{\phi^{i}_{n,2}} c_{\phi^{i}_{n,1}}\forall  (\boldsymbol{\alpha};  x;y) \in E^{i},
     \end{align*}
     $ \forall n, $ i.e the product sequence is uniformly bounded.
\end{proof}
\begin{figure*}
\begin{align}
&|| \phi^{i}_{n,2}(\boldsymbol{\alpha};x;y)\phi^{i}_{n,1}(\boldsymbol{\alpha};x;y) - \phi^{i}_{n,2}(\boldsymbol{\alpha_0};x_0;y_0)\phi^{i}_{n,1}(\boldsymbol{\alpha_0};x_0;y_0)|| = || \phi^{i}_{n,2}(\boldsymbol{\alpha};x;y)\phi^{i}_{n,1}(\boldsymbol{\alpha};x;y) - \phi^{i}_{n,2}(\boldsymbol{\alpha_0};x_0;y_0)\phi^{i}_{n,1}(\boldsymbol{\alpha};x;y) + \nonumber \\
&\phi^{i}_{n,2}(\boldsymbol{\alpha_0};x_0;y_0) \phi^{i}_{n,1}(\boldsymbol{\alpha};x;y) - \hspace{-1pt} \phi^{i}_{n,2}(\boldsymbol{\alpha_0};x_0;y_0)\phi^{i}_{n,1}(\boldsymbol{\alpha_0};x_0;y_0)|| \leq || \phi^{i}_{n,2}(\boldsymbol{\alpha};x;y) ||\; || \phi^{i}_{n,1}(\boldsymbol{\alpha};x;y) - \phi^{i}_{n,1}(\boldsymbol{\alpha_0};x_0;y_0) || + \nonumber \\
&||  \phi^{i}_{n,1}(\boldsymbol{\alpha_0};x_0;y_0)||\; || \phi^{i}_{n,2}(\boldsymbol{\alpha};x;y)  - \phi^{i}_{n,2}(\boldsymbol{\alpha_0};x_0;y_0)|| <  c_{\phi^{i}_{n,2}} \times  \frac{\epsilon}{2 \times c_{\phi^{i}_{n,2}}} + c_{\phi^{i}_{n,1}} \times \frac{\epsilon}{2 \times c_{\phi^{i}_{n,1}}} < \epsilon, \; \forall n. \label{Equation 12} \\
&|| \mathbf{K^{T}} \begin{bmatrix} \boldsymbol{\alpha^{1}} \\ \boldsymbol{\alpha^{2}}  \end{bmatrix} - \phi_{n,1}(\boldsymbol{\alpha^{1}};\boldsymbol{\alpha^{2}}) || = \frac{1}{\varrho_n} || \mathbf{K^{T}}  \begin{bmatrix}  \mathbf{K^{1}}\boldsymbol{\alpha^{1}} \\  \mathbf{K^{2}}\boldsymbol{\alpha^{2}}  \end{bmatrix}|| \geq \frac{1}{\varrho_{n+1}} || \mathbf{K^{T}}  \begin{bmatrix}  \mathbf{K^{1}}\boldsymbol{\alpha^{1}} \\  \mathbf{K^{2}}\boldsymbol{\alpha^{2}}  \end{bmatrix}||
= || \mathbf{K^{T}} \begin{bmatrix} \boldsymbol{\alpha^{1}} \\ \boldsymbol{\alpha^{2}}  \end{bmatrix} - \phi_{n+1,1}(\boldsymbol{\alpha^{1}};\boldsymbol{\alpha^{2}}) ||. \label{Equation 13} \\
&|| \phi_{n,2}(\boldsymbol{\alpha^{1}};\boldsymbol{\alpha^{2}})\phi_{n,1}(\boldsymbol{\alpha^{1}};\boldsymbol{\alpha^{2}}) -  \begin{bmatrix} \boldsymbol{\alpha^{1}} \\ \boldsymbol{\alpha^{2}}  \end{bmatrix} || = || \phi_{n,2}(\boldsymbol{\alpha^{1}};\boldsymbol{\alpha^{2}})\phi_{n,1}(\boldsymbol{\alpha^{1}};\boldsymbol{\alpha^{2}}) - \mathbf{K^{-1}} \phi_{n,1}(\boldsymbol{\alpha^{1}};\boldsymbol{\alpha^{2}}) + \mathbf{K^{-1}} \phi_{n,1}(\boldsymbol{\alpha^{1}};\boldsymbol{\alpha^{2}}) - \mathbf{K^{-1}} \mathbf{K^{T}} \begin{bmatrix} \boldsymbol{\alpha^{1}} \\ \boldsymbol{\alpha^{2}}  \end{bmatrix} || \nonumber \\ 
&\leq || \phi_{n,1}(\boldsymbol{\alpha^{1}};\boldsymbol{\alpha^{2}}) || \; ||  \phi_{n,2}(\boldsymbol{\alpha^{1}};\boldsymbol{\alpha^{2}}) -  \mathbf{K^{-1}} || + ||  \mathbf{K^{-1}}||
|| \phi_{n,1}(\boldsymbol{\alpha^{1}};\boldsymbol{\alpha^{2}}) -  \mathbf{K^{T}} \begin{bmatrix} \boldsymbol{\alpha^{1}}  \\ \boldsymbol{\alpha^{2}}  \end{bmatrix}  || < c_{\phi_{n,1}} \times \frac{\epsilon}{2 \times c_{\phi_{n,1}}} + c_{\phi_{n,2}} \times \frac{\epsilon}{2 \times c_{\phi_{n,2}}} = \epsilon, \label{Equation 14} 
\end{align} 
\vspace{-1cm}
\end{figure*}
\subsection{Proof of Lemma \ref{Lemma 12}}
In this subsection, we present the proof the the propositions invoked in the proof of Lemma \ref{Lemma 12}.
\begin{proposition}\label{Proposition 35}
\hspace{-3pt} $\{\phi_{n,1}(\cdot)\}$ converges uniformly to $\mathbf{K^{T}} \begin{bmatrix} \boldsymbol{\alpha^{1}} \\ \boldsymbol{\alpha^{2}}  \end{bmatrix} $. 
$\{\phi_{n,2}(\cdot)\phi_{n,1}(\cdot)\}$ converges uniformly to $\begin{bmatrix} \boldsymbol{\alpha^{1}} \\ \boldsymbol{\alpha^{2}}  \end{bmatrix}$. 
\end{proposition}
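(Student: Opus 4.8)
The plan is to establish the two uniform limits separately, using that the parameter set $E$ of Equation~\ref{Equation 5} is compact and that $\{\varrho_n\}$ is monotone and diverges to $\infty$; throughout write $\boldsymbol{\alpha}=(\boldsymbol{\alpha^{1}};\boldsymbol{\alpha^{2}})$. First, for $\phi_{n,1}$: from the definition in Equation~\ref{Equation 6} one has $\phi_{n,1}(\boldsymbol{\alpha^{1}};\boldsymbol{\alpha^{2}}) - \mathbf{K^{T}}\boldsymbol{\alpha} = \tfrac{1}{\varrho_n}\mathbf{K^{T}}\big[\mathbf{K^{1}}\boldsymbol{\alpha^{1}};\mathbf{K^{2}}\boldsymbol{\alpha^{2}}\big]$, which is exactly the quantity bounded in Inequality~\ref{Equation 13}. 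Since $E$ is compact, $\sup_{E}\|\boldsymbol{\alpha}\| < \infty$, so this difference is $O(1/\varrho_n)$ with a constant independent of the point of $E$; monotonicity of $\{\varrho_n\}$ makes the bound monotone decreasing to $0$, whence $\{\phi_{n,1}(\cdot)\}$ converges uniformly on $E$ to $\mathbf{K^{T}}\boldsymbol{\alpha}$. Being a fixed bounded linear map plus a uniformly vanishing term, $\{\phi_{n,1}(\cdot)\}$ is in addition uniformly bounded and equicontinuous on $E$; let $c_{\phi_{n,1}}$ be a uniform bound on $\|\phi_{n,1}(\cdot)\|$.

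Next, for $\phi_{n,2} = \big(\tfrac{\mathbf{K^{T}}\mathbf{K}}{\varrho_n}+\mathbf{K}\big)^{-1}$: the matrix $\mathbf{K}$, being a Gram-type matrix of $2m$ kernel sections spanning the $m$-dimensional space $H$, is singular, so, exactly as in the proof of Proposition~\ref{Proposition 33}, we regularise it by a small multiple of $\mathbb{I}_{2m}$ and write $\mathbf{K^{-1}}$ for the resulting inverse (the matrices $\tfrac{\mathbf{K^{T}}\mathbf{K}}{\varrho_n}+\mathbf{K}$ then being invertible as well). As $n\to\infty$ we have $\tfrac{\mathbf{K^{T}}\mathbf{K}}{\varrho_n}+\mathbf{K}\to\mathbf{K}$ in the operator norm, and Proposition~\ref{Proposition 32} supplies continuity of the inversion map, so $\phi_{n,2}\to\mathbf{K^{-1}}$. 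Since $\phi_{n,2}$ does not depend on $(\boldsymbol{\alpha^{1}};\boldsymbol{\alpha^{2}})$, this convergence is uniform on $E$, and for $n$ large $\phi_{n,2}$ lies in a bounded neighbourhood of $\mathbf{K^{-1}}$, so $\{\phi_{n,2}(\cdot)\}$ is uniformly bounded on $E$. Finally, since the regularised $\mathbf{K}$ is symmetric, $\mathbf{K^{-1}}\mathbf{K^{T}}\boldsymbol{\alpha}=\boldsymbol{\alpha}$; adding and subtracting $\mathbf{K^{-1}}\phi_{n,1}$ and applying the triangle inequality, as in Inequality~\ref{Equation 14}, gives
\[
\big\|\phi_{n,2}(\cdot)\phi_{n,1}(\cdot)-\boldsymbol{\alpha}\big\|\le \|\phi_{n,1}(\cdot)\|\,\|\phi_{n,2}-\mathbf{K^{-1}}\|+\|\mathbf{K^{-1}}\|\,\big\|\phi_{n,1}(\cdot)-\mathbf{K^{T}}\boldsymbol{\alpha}\big\|.
\]
The first summand is at most $c_{\phi_{n,1}}\,\|\phi_{n,2}-\mathbf{K^{-1}}\|\to 0$, and the second is $\|\mathbf{K^{-1}}\|$ times the uniform error from the first step, which $\to 0$; all estimates being uniform over $E$, $\{\phi_{n,2}(\cdot)\phi_{n,1}(\cdot)\}$ converges uniformly on $E$ to the continuous function $\boldsymbol{\alpha}$, and the proof is complete.

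The only genuine obstacle is the bookkeeping around the singularity of $\mathbf{K}$ (and hence of $\tfrac{\mathbf{K^{T}}\mathbf{K}}{\varrho_n}+\mathbf{K}$): the symbol $\mathbf{K^{-1}}$ used in Inequality~\ref{Equation 14} is meaningful only after the regularisation employed for $\mathbf{K^{i}}$ in Proposition~\ref{Proposition 33}, or, alternatively, after restricting every operator to $\mathcal{R}(\mathbf{K})$ and reading the limit on that subspace. Once this convention is fixed, the remainder is a routine assembly of compactness of $E$, monotonicity of $\{\varrho_n\}$, continuity of matrix inversion, and the triangle inequality --- the same pattern already used for $\{\phi^{i}_{n,1}\}$ and $\{\phi^{i}_{n,2}\}$ in Propositions~\ref{Proposition 31} and \ref{Proposition 33}.
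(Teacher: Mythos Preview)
Your proof is correct and reaches the same conclusion as the paper, but the route for the first uniform limit is genuinely different and in fact more direct. The paper argues via a Dini-type mechanism: from Inequality~\ref{Equation 13} it observes that $\bar{\phi}_{n,1}(\boldsymbol{\alpha^{1}};\boldsymbol{\alpha^{2}})=\|\mathbf{K^{T}}\boldsymbol{\alpha}-\phi_{n,1}(\boldsymbol{\alpha^{1}};\boldsymbol{\alpha^{2}})\|$ is monotone decreasing in $n$, then forms the ascending open cover $E_{n,\epsilon}=\{\bar{\phi}_{n,1}<\epsilon\}$ of the compact set $E$ and extracts a single index $N_{\epsilon}$ with $E_{N_{\epsilon}}=E$. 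You instead exploit the explicit structure of the error, namely that it equals $\tfrac{1}{\varrho_n}\mathbf{K^{T}}\big[\mathbf{K^{1}}\boldsymbol{\alpha^{1}};\mathbf{K^{2}}\boldsymbol{\alpha^{2}}\big]$, and combine the boundedness of this linear image on the compact $E$ with $\varrho_n\to\infty$ to get uniform convergence in one line. Your argument is shorter and avoids the open-cover bookkeeping; the paper's Dini argument, on the other hand, would survive without the explicit factor $1/\varrho_n$, needing only monotonicity and pointwise convergence. For the second claim, the two proofs coincide: constancy of $\phi_{n,2}$ in $(\boldsymbol{\alpha^{1}};\boldsymbol{\alpha^{2}})$, continuity of inversion (Proposition~\ref{Proposition 32}), and the triangle-inequality splitting of Inequality~\ref{Equation 14}. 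Your discussion of the singularity of $\mathbf{K}$ is also slightly more explicit than the paper's, which simply writes ``Assuming $\mathbf{K}$ is invertible''; both ultimately rely on the same regularisation convention used in Proposition~\ref{Proposition 33}.
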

\begin{proof}
\hspace{-5pt} 
From the inequality in \ref{Equation 13}, $\{\bar{\phi}_{n,1}(\boldsymbol{\alpha^{1}};\boldsymbol{\alpha^{2}}) = \hspace{-3pt} ||\mathbf{K^{T}} \begin{bmatrix} \boldsymbol{\alpha^{1}} \\ \boldsymbol{\alpha^{2}}  \end{bmatrix} \\ - \phi_{n,1}(\boldsymbol{\alpha^{1}};\boldsymbol{\alpha^{2}}) ||\}$ is monotone decreasing sequence. Let $E_{n, \epsilon} = \{(\boldsymbol{\alpha^{1}};\boldsymbol{\alpha^{2}}) \in E: \bar{\phi}_{n,1}(\boldsymbol{\alpha^{1}};\boldsymbol{\alpha^{2}}) < \epsilon\}, \epsilon >0$. Since $\bar{\phi}_{n,1}(\cdot)$ is continuous, $E_{n, \epsilon}$ is open. $\{E_{n, \epsilon}\}$ is an ascending sequence of open sets, i.e., $E_{n} \subset E_{n+1}$. Since $\{\bar{\phi}_{n,1}(\cdot)$ converges pointwise to zero, $\{E_{n, \epsilon}\}$ is an open cover for $E$. Since $E$ is compact, there exists a finite subcover. Since $\{E_{n, \epsilon}\}$ is ascending, the maximum index from the finite subcover is a cover too, i.e., $\exists N_{\epsilon}$ such that $E_{N_{\epsilon}} = E$. We note that $\{\phi_{n,1}(\cdot)\}$ is uniformly bounded, i.e., $\exists c_{\phi_{n,1}}$ such that, $|| \phi_{n,1}(\boldsymbol{\alpha^{1}};\boldsymbol{\alpha^{2}}) || < c_{\phi_{n,1}} \forall  (\boldsymbol{\alpha^{1}};\boldsymbol{\alpha^{2}}) \in E, \forall n$. Assuming $\mathbf{K}$ is invertible,  we note that $\{\phi_{n,2}(\cdot)\}$ converges uniformly to $\mathbf{K^{-1}}$ since it is independent of $(\boldsymbol{\alpha^{1}};\boldsymbol{\alpha^{2}})$. Further it is uniformly bounded by $\mathbf{K^{-1}}= c_{\phi_{n,2}}$.

Given $\epsilon >0$, there exists $N_{\epsilon}$ such that $|| \phi_{n,1}(\boldsymbol{\alpha^{1}};\boldsymbol{\alpha^{2}}) - \mathbf{K^{T}} \begin{bmatrix} \boldsymbol{\alpha^{1}} \\ \boldsymbol{\alpha^{2}}  \end{bmatrix} || < \frac{\epsilon}{2 \times c_{\phi_{n,2}}}, \forall n \geq N_{\epsilon}, \forall (\boldsymbol{\alpha^{1}};\boldsymbol{\alpha^{2}}) \in E $ and $|| \phi_{n,2}(\boldsymbol{\alpha^{1}};\boldsymbol{\alpha^{2}}) - \mathbf{K^{-1}} || < \frac{\epsilon}{2 \times c_{\phi_{n,1}}}, \forall n \geq N_{\epsilon}$. Thus for $n \geq N_{\epsilon}$, inequality \ref{Equation 14} holds $\forall  (\boldsymbol{\alpha^{1}};\boldsymbol{\alpha^{2}}) \in E $. This implies the second claim of this proposition. 
\end{proof}
\bibliographystyle{model1-num-names}
\bibliography{biblio}

\begin{thebibliography}{24}
\expandafter\ifx\csname natexlab\endcsname\relax\def\natexlab#1{#1}\fi
\providecommand{\url}[1]{\texttt{#1}}
\providecommand{\href}[2]{#2}
\providecommand{\path}[1]{#1}
\providecommand{\DOIprefix}{doi:}
\providecommand{\ArXivprefix}{arXiv:}
\providecommand{\URLprefix}{URL: }
\providecommand{\Pubmedprefix}{pmid:}
\providecommand{\doi}[1]{\href{http://dx.doi.org/#1}{\path{#1}}}
\providecommand{\Pubmed}[1]{\href{pmid:#1}{\path{#1}}}
\providecommand{\bibinfo}[2]{#2}
\ifx\xfnm\relax \def\xfnm[#1]{\unskip,\space#1}\fi
%Type = Article
\bibitem[{Cort{\'e}s(2009)}]{cortes2009distributed}
\bibinfo{author}{J.~Cort{\'e}s},
\newblock \bibinfo{title}{Distributed kriged kalman filter for spatial
  estimation},
\newblock \bibinfo{journal}{IEEE Transactions on Automatic Control}
  \bibinfo{volume}{54} (\bibinfo{year}{2009}) \bibinfo{pages}{2816--2827}.
%Type = Article
\bibitem[{Talarico et~al.(2013)Talarico, Schmid, Alkhweldi, and
  Valenti}]{talarico2013distributed}
\bibinfo{author}{S.~Talarico}, \bibinfo{author}{N.~A. Schmid},
  \bibinfo{author}{M.~Alkhweldi}, \bibinfo{author}{M.~C. Valenti},
\newblock \bibinfo{title}{Distributed estimation of a parametric field:
  Algorithms and performance analysis},
\newblock \bibinfo{journal}{IEEE Transactions on Signal Processing}
  \bibinfo{volume}{62} (\bibinfo{year}{2013}) \bibinfo{pages}{1041--1053}.
%Type = Article
\bibitem[{Wang et~al.(2019)Wang, Wu, Yang, and Cao}]{wang2019optimum}
\bibinfo{author}{Z.~Wang}, \bibinfo{author}{J.~Wu}, \bibinfo{author}{J.~Yang},
  \bibinfo{author}{Y.~Cao},
\newblock \bibinfo{title}{Optimum distributed estimation of a spatially
  correlated random field},
\newblock \bibinfo{journal}{IEEE Transactions on Signal and Information
  Processing over Networks} \bibinfo{volume}{5} (\bibinfo{year}{2019})
  \bibinfo{pages}{739--752}.
%Type = Article
\bibitem[{Choi et~al.(2009)Choi, Oh, and Horowitz}]{choi2009distributed}
\bibinfo{author}{J.~Choi}, \bibinfo{author}{S.~Oh},
  \bibinfo{author}{R.~Horowitz},
\newblock \bibinfo{title}{Distributed learning and cooperative control for
  multi-agent systems},
\newblock \bibinfo{journal}{Automatica} \bibinfo{volume}{45}
  (\bibinfo{year}{2009}) \bibinfo{pages}{2802--2814}.
%Type = Article
\bibitem[{Predd et~al.(2009)Predd, Kulkarni, and Poor}]{predd2009collaborative}
\bibinfo{author}{J.~B. Predd}, \bibinfo{author}{S.~R. Kulkarni},
  \bibinfo{author}{H.~V. Poor},
\newblock \bibinfo{title}{A collaborative training algorithm for distributed
  learning},
\newblock \bibinfo{journal}{IEEE Transactions on Information Theory}
  \bibinfo{volume}{55} (\bibinfo{year}{2009}) \bibinfo{pages}{1856--1871}.
%Type = Article
\bibitem[{Carpenter et~al.(1998)Carpenter, Milenova, and
  Noeske}]{carpenter1998distributed}
\bibinfo{author}{G.~A. Carpenter}, \bibinfo{author}{B.~L. Milenova},
  \bibinfo{author}{B.~W. Noeske},
\newblock \bibinfo{title}{Distributed artmap: a neural network for fast
  distributed supervised learning},
\newblock \bibinfo{journal}{Neural networks} \bibinfo{volume}{11}
  (\bibinfo{year}{1998}) \bibinfo{pages}{793--813}.
%Type = Article
\bibitem[{Chang et~al.(2017)Chang, Lin, and Zhou}]{chang2017distributed}
\bibinfo{author}{X.~Chang}, \bibinfo{author}{S.-B. Lin}, \bibinfo{author}{D.-X.
  Zhou},
\newblock \bibinfo{title}{Distributed semi-supervised learning with kernel
  ridge regression},
\newblock \bibinfo{journal}{Journal of Machine Learning Research}
  \bibinfo{volume}{18} (\bibinfo{year}{2017}) \bibinfo{pages}{1--22}.
%Type = Article
\bibitem[{Ying et~al.(2018)Ying, Yuan, and Sayed}]{ying2018supervised}
\bibinfo{author}{B.~Ying}, \bibinfo{author}{K.~Yuan}, \bibinfo{author}{A.~H.
  Sayed},
\newblock \bibinfo{title}{Supervised learning under distributed features},
\newblock \bibinfo{journal}{IEEE Transactions on Signal Processing}
  \bibinfo{volume}{67} (\bibinfo{year}{2018}) \bibinfo{pages}{977--992}.
%Type = Article
\bibitem[{Zhang et~al.(2021)Zhang, Xie, Bai, Yu, Li, and Gao}]{zhang2021survey}
\bibinfo{author}{C.~Zhang}, \bibinfo{author}{Y.~Xie}, \bibinfo{author}{H.~Bai},
  \bibinfo{author}{B.~Yu}, \bibinfo{author}{W.~Li}, \bibinfo{author}{Y.~Gao},
\newblock \bibinfo{title}{A survey on federated learning},
\newblock \bibinfo{journal}{Knowledge-Based Systems} \bibinfo{volume}{216}
  (\bibinfo{year}{2021}) \bibinfo{pages}{106775}.
%Type = Article
\bibitem[{Nguyen et~al.(2021)Nguyen, Ding, Pathirana, Seneviratne, Li, and
  Poor}]{nguyen2021federated}
\bibinfo{author}{D.~C. Nguyen}, \bibinfo{author}{M.~Ding},
  \bibinfo{author}{P.~N. Pathirana}, \bibinfo{author}{A.~Seneviratne},
  \bibinfo{author}{J.~Li}, \bibinfo{author}{H.~V. Poor},
\newblock \bibinfo{title}{Federated learning for internet of things: A
  comprehensive survey},
\newblock \bibinfo{journal}{IEEE Communications Surveys \& Tutorials}
  \bibinfo{volume}{23} (\bibinfo{year}{2021}) \bibinfo{pages}{1622--1658}.
%Type = Book
\bibitem[{Rojo-{\'A}lvarez et~al.(2018)Rojo-{\'A}lvarez,
  Mart{\'\i}nez-Ram{\'o}n, Munoz-Mari, and Camps-Valls}]{rojo2018digital}
\bibinfo{author}{J.~L. Rojo-{\'A}lvarez},
  \bibinfo{author}{M.~Mart{\'\i}nez-Ram{\'o}n},
  \bibinfo{author}{J.~Munoz-Mari}, \bibinfo{author}{G.~Camps-Valls},
  \bibinfo{title}{Digital signal processing with Kernel methods},
  \bibinfo{publisher}{John Wiley \& Sons}, \bibinfo{year}{2018}.
%Type = Book
\bibitem[{Camps-Valls et~al.(2007)Camps-Valls, Rojo-{\'A}lvarez, and
  Mart{\'\i}nez-Ram{\'o}n}]{camps2007kernel}
\bibinfo{author}{G.~Camps-Valls}, \bibinfo{author}{J.~L. Rojo-{\'A}lvarez},
  \bibinfo{author}{M.~Mart{\'\i}nez-Ram{\'o}n}, \bibinfo{title}{Kernel methods
  in bioengineering, signal and image processing}, \bibinfo{publisher}{Igi
  Global}, \bibinfo{year}{2007}.
%Type = Article
\bibitem[{G{\'o}mez-Chova et~al.(2011)G{\'o}mez-Chova, Mu{\~n}oz-Mar{\'\i},
  Laparra, Malo-L{\'o}pez, and Camps-Valls}]{gomez2011review}
\bibinfo{author}{L.~G{\'o}mez-Chova}, \bibinfo{author}{J.~Mu{\~n}oz-Mar{\'\i}},
  \bibinfo{author}{V.~Laparra}, \bibinfo{author}{J.~Malo-L{\'o}pez},
  \bibinfo{author}{G.~Camps-Valls},
\newblock \bibinfo{title}{A review of kernel methods in remote sensing data
  analysis},
\newblock \bibinfo{journal}{Optical Remote Sensing: Advances in Signal
  Processing and Exploitation Techniques}  (\bibinfo{year}{2011})
  \bibinfo{pages}{171--206}.
%Type = Article
\bibitem[{Pillonetto et~al.(2014)Pillonetto, Dinuzzo, Chen, De~Nicolao, and
  Ljung}]{pillonetto2014kernel}
\bibinfo{author}{G.~Pillonetto}, \bibinfo{author}{F.~Dinuzzo},
  \bibinfo{author}{T.~Chen}, \bibinfo{author}{G.~De~Nicolao},
  \bibinfo{author}{L.~Ljung},
\newblock \bibinfo{title}{Kernel methods in system identification, machine
  learning and function estimation: A survey},
\newblock \bibinfo{journal}{Automatica} \bibinfo{volume}{50}
  (\bibinfo{year}{2014}) \bibinfo{pages}{657--682}.
%Type = Article
\bibitem[{Khosravi(2023)}]{khosravi2023representer}
\bibinfo{author}{M.~Khosravi},
\newblock \bibinfo{title}{Representer theorem for learning koopman operators},
\newblock \bibinfo{journal}{IEEE Transactions on Automatic Control}
  (\bibinfo{year}{2023}).
%Type = Article
\bibitem[{Khosravi and Smith(2023)}]{khosravi2023kernel}
\bibinfo{author}{M.~Khosravi}, \bibinfo{author}{R.~S. Smith},
\newblock \bibinfo{title}{Kernel-based impulse response identification with
  side-information on steady-state gain},
\newblock \bibinfo{journal}{IEEE Transactions on Automatic Control}
  (\bibinfo{year}{2023}).
%Type = Article
\bibitem[{Pillonetto(2016)}]{pillonetto2016new}
\bibinfo{author}{G.~Pillonetto},
\newblock \bibinfo{title}{A new kernel-based approach to hybrid system
  identification},
\newblock \bibinfo{journal}{Automatica} \bibinfo{volume}{70}
  (\bibinfo{year}{2016}) \bibinfo{pages}{21--31}.
%Type = Inproceedings
\bibitem[{Bhan et~al.(2023)Bhan, Shi, and Krstic}]{bhan2023operator}
\bibinfo{author}{L.~Bhan}, \bibinfo{author}{Y.~Shi},
  \bibinfo{author}{M.~Krstic},
\newblock \bibinfo{title}{Operator learning for nonlinear adaptive control},
\newblock in: \bibinfo{booktitle}{Learning for Dynamics and Control
  Conference}, \bibinfo{organization}{PMLR}, \bibinfo{year}{2023}, pp.
  \bibinfo{pages}{346--357}.
%Type = Article
\bibitem[{Lu et~al.(2021)Lu, Jin, Pang, Zhang, and
  Karniadakis}]{lu2021learning}
\bibinfo{author}{L.~Lu}, \bibinfo{author}{P.~Jin}, \bibinfo{author}{G.~Pang},
  \bibinfo{author}{Z.~Zhang}, \bibinfo{author}{G.~E. Karniadakis},
\newblock \bibinfo{title}{Learning nonlinear operators via deeponet based on
  the universal approximation theorem of operators},
\newblock \bibinfo{journal}{Nature machine intelligence} \bibinfo{volume}{3}
  (\bibinfo{year}{2021}) \bibinfo{pages}{218--229}.
%Type = Article
\bibitem[{Kovachki et~al.(2023)Kovachki, Li, Liu, Azizzadenesheli,
  Bhattacharya, Stuart, and Anandkumar}]{kovachki2023neural}
\bibinfo{author}{N.~Kovachki}, \bibinfo{author}{Z.~Li},
  \bibinfo{author}{B.~Liu}, \bibinfo{author}{K.~Azizzadenesheli},
  \bibinfo{author}{K.~Bhattacharya}, \bibinfo{author}{A.~Stuart},
  \bibinfo{author}{A.~Anandkumar},
\newblock \bibinfo{title}{Neural operator: Learning maps between function
  spaces with applications to pdes},
\newblock \bibinfo{journal}{Journal of Machine Learning Research}
  \bibinfo{volume}{24} (\bibinfo{year}{2023}) \bibinfo{pages}{1--97}.
%Type = Article
\bibitem[{Raghavan and Johansson(2024{\natexlab{a}})}]{Raghavan2024L4DC}
\bibinfo{author}{A.~Raghavan}, \bibinfo{author}{K.~H. Johansson},
\newblock \bibinfo{title}{A multi-modal distributed learning algorithm in
  reproducing kernel hilbert spaces},
\newblock \bibinfo{journal}{Proceedings of Machine Learning Research}
  \bibinfo{volume}{242} (\bibinfo{year}{6th Annual Conference on Learning for
  Dynamics and Control 2024}{\natexlab{a}}) \bibinfo{pages}{To appear}.
%Type = Article
\bibitem[{Raghavan and
  Johansson(2024{\natexlab{b}})}]{raghavan2024FunctionFusion}
\bibinfo{author}{A.~Raghavan}, \bibinfo{author}{K.~H. Johansson},
\newblock \bibinfo{title}{Distributed learning and function fusion in
  reproducing kernel hilbert space},
\newblock \bibinfo{journal}{arXiv preprint arXiv:2401.03012}
  (\bibinfo{year}{2024}{\natexlab{b}}).
%Type = Book
\bibitem[{Royden and Fitzpatrick(2010)}]{royden2010real}
\bibinfo{author}{H.~Royden}, \bibinfo{author}{P.~M. Fitzpatrick},
  \bibinfo{title}{Real analysis}, \bibinfo{publisher}{China Machine Press},
  \bibinfo{year}{2010}.
%Type = Article
\bibitem[{Raghavan and Johansson(2024)}]{raghavan2024distributed}
\bibinfo{author}{A.~Raghavan}, \bibinfo{author}{K.~H. Johansson},
\newblock \bibinfo{title}{Distributed estimation by two agents with different
  feature spaces},
\newblock \bibinfo{journal}{arXiv preprint arXiv:2403.07749}
  (\bibinfo{year}{2024}).

\end{thebibliography}
\end{document}